\begin{document}

\title{No-But-Semantic-Match: Computing Semantically Matched XML Keyword Search Results
}

\author{Mehdi Naseriparsa  \and
        Md. Saiful Islam   \and
				Chengfei Liu       \and
				Irene Moser
}


\institute{M. Naseriparsa \and
              Md. Saiful Islam  \and
							C. Liu \and 
							I. Moser \\
              \email{\{mnaseriparsa,mdsaifulislam,cliu,imoser\}@swin.edu.au}\\
							Swinburne University of Technology
							}          

\date{Received: date / Accepted: date}

\maketitle

\begin{abstract}
Users are rarely familiar with the content of a data source they are querying, and therefore cannot avoid using keywords that do not exist in the data source. Traditional systems may respond with an empty result, causing dissatisfaction, while the data source in effect holds semantically related content. In this paper we study this no-but-semantic-match problem on XML keyword search and propose a solution which enables us to present the top-k semantically related results to the user. Our solution involves two steps: (a) extracting semantically related candidate queries from the original query and (b) processing candidate queries and retrieving the top-$k$ semantically related results. Candidate queries are generated by replacement of non-mapped keywords with candidate keywords obtained from an ontological knowledge base. Candidate results are scored using their cohesiveness and their similarity to the original query. Since the number of queries to process can be large, with each result having to be analyzed, we propose pruning techniques to retrieve the top-$k$ results efficiently. We develop two query processing algorithms based on our pruning techniques. Further, we exploit a property of the candidate queries to propose a technique for processing multiple queries in batch, which improves the performance substantially. Extensive experiments on two real datasets verify the effectiveness and efficiency of the proposed approaches.

\keywords{XML Keyword Query \and No-Match \and Semantics \and Ontology}
\end{abstract}

\begin{figure*}

\centering
             \includegraphics[width=7in,height=4.4cm]{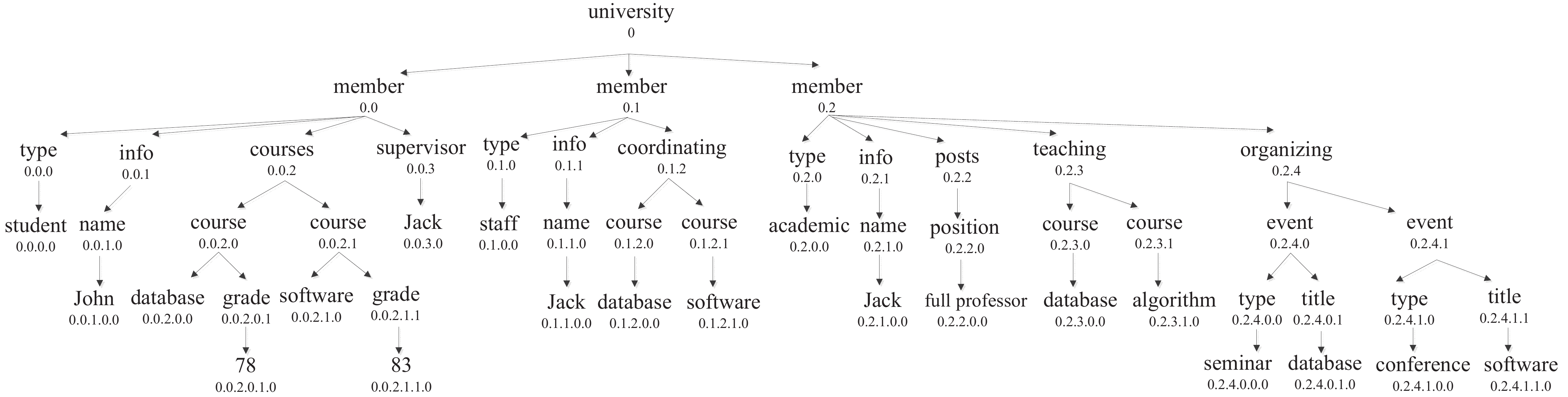}
						  	\caption{A part of XML data}
								
					\label{fig:Motivating}	
\end{figure*}

\section{Introduction}
\label{intro}
{U}{sers} who query data sources using keyword searches often are not familiar with the data source schema or the appropriate query language. For the query to succeed, the keywords have to have matches in the data source. Failing this, an empty result is returned even when semantically related content exists. When keywords have indirect mappings in a data source that cannot be found by traditional systems, the user faces the no-but-semantic-match problem.
 
\begin{example}.
Consider a user submitted a keyword query $q_0$ = \{$Jack$, $lecturer$, $class$\} on XML database given in Fig. \ref{fig:Motivating} and would like to find information about the professor $Jack$. Using conjunctive keyword search, traditional systems will show an empty result because there is no occurrence for the keywords $lecturer$ and $class$ in the data source. However, the keyword $lecturer$ has a semantic connection to $academic$ and $full$ $professor$ while the keyword $class$ is semantically related to $course$, $grade$ and $event$ which exist in the data source and could generate results that might interest the user.
\end{example}

The XML keyword search has been addressed by researchers before. 
The concept of Lowest Common Ancestor (LCA) was first proposed by Guo et al.~\cite{GuoSBS03} to extract XML nodes which contain all query keywords within the same subtree. 
Xu and Papakonstantinou \cite{XuP05} introduced the concept of Smallest Lowest Common Ancestor (SLCA) to reduce the query result to the smallest tree that contains all keywords. 
Sun, Chan and Goenka~\cite{SunCG07} extended this work by applying the SLCA principle to logical OR searches. 
Hristidis et al.~\cite{HristidisKPS06} explored the trees below LCA to provide information about the proximity of the keywords in the document. 
None of the existing studies use the SLCA semantics to provide a solution when one or more keywords do not exist in the database.  
In this paper, we adapt the widely-accepted SLCA semantics and algorithms to retrieve meaningful results when some non-mapped keywords are submitted to the system.        

When a query encounters the no-but-semantic-match problem, we need to find candidate keywords for the non-mapped keywords to produce non-empty results. Even though the non-mapped keywords may be semantically close to some items in data source, traditional systems do not attempt to discover them. To produce an answer to the user's initial query, the candidate keywords must be semantically  close to the non-mapped keywords. One way of fulfilling this requirement is to find substitutes for non-mapped keywords in an ontological knowledge base. Clearly, only candidate keywords that have a mapping in the data source can be selected as substitutes for a new query. Replacing each of the non-mapped keywords with one or more semantically related words that are known to exist in the database leads to a list of candidate queries. Depending on the number of available keywords, the number of potential queries and results can be impractically large. Hence the degree of semantic similarity with the original query is calculated for each candidate query before it is executed. Before the results can be presented to the user, results of poor quality in terms of cohesiveness must be eliminated to ensure all results are meaningful answers to the original query. 
Thus, to solve the no-but-semantic-match problem, two aspects are considered: (a) query similarity; and (b) result cohesiveness.

\begin{figure*}[tb]
\centering
\includegraphics[scale=0.6, angle=-90]{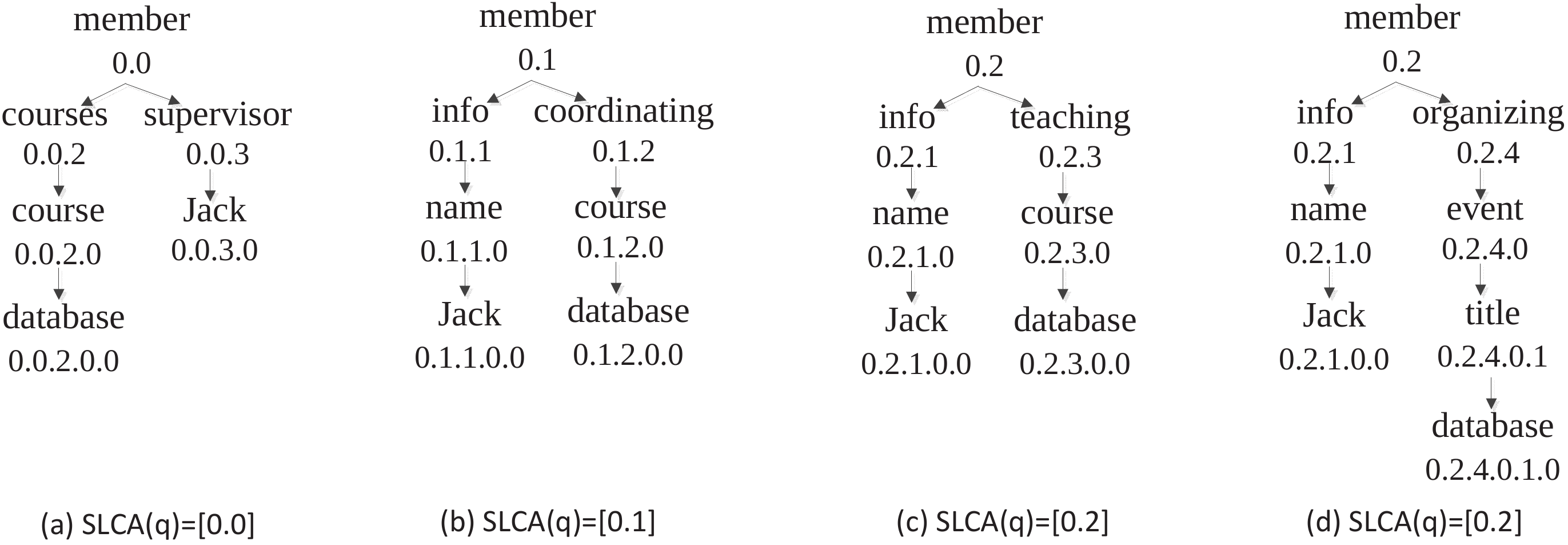}
	\caption{SLCA subtree results for query $q=\{Jack,database\}$ executed on data given in Fig.~\ref{fig:Motivating}}
\label{fig:Preliminaries}	
\end{figure*}

\begin{example}.
Consider the keyword query $q_0$ = \{$Jack$, $lecturer$, $class$\} presented in Example 1 on the database shown in Fig. \ref{fig:Motivating}. Keywords $lecturer$ and $class$ do not have a mapping in the data source and the traditional system generates an empty result for it. The ontological knowledge base \cite{Miller90} has $44$ semantic counterparts for $lecturer$ and $39$ for $class$. All possible substitutions and their combinations are considered. In the extreme case when all candidate keywords are available in the data source, $44 \times 39$ = $1716$ queries are generated and each query may have several answers that have to be considered. When a high number of keywords have to be replaced and these keywords have many semantic counterparts, we may face an unmanageably large number of combinations that have to be analyzed for semantic similarity with the original query. Hence, there is a need to identify and remove less promising candidate queries early.
\end{example}

In this paper, we present a novel two-step solution to the no-but-semantic-match problem in XML keyword search. In the first step, semantically related candidate queries are created by replacing non-mapped keywords in the original queries with semantic counterparts and in the second step, the queries are processed and the top-k semantically related results retrieved. In order to present the top-k results to the user for evaluation, each result retrieved from the queries is separately analyzed in terms of its similarity to the original query and its cohesiveness in data source. Since there may be a large number of semantically related results, retrieving the top-k results is potentially costly. Therefore, we propose two pruning techniques, inter-query and intra-query pruning. Since the candidate queries are generated by replacing non-mapped keywords, some keywords are shared between the candidate queries. We exploit this property to propose a more efficient batch query processing technique to improve the performance substantially. The issue of finding semantically related results for queries with no-but-semantic-match problem has not been addressed in the context of semi-structured data before. Our contributions are as follows:
\begin{enumerate}
	\item We are the first to formulate the no-but-semantic-match problem in XML keyword search. 
	\item We propose two pruning methods and an efficient approach of processing the no-but-semantic-match query. 
 	\item Based on keywords the candidate queries have in common, we also propose a method to process multiple queries in a batch which improves the performance substantially.   
	\item We conduct extensive experiments which verify the effectiveness and efficiency of our solutions on two real datasets.   
\end{enumerate}

The rest of the paper is organized as follows: Section 2 discusses XML keyword search and presents the no-but-semantic-match problem. Section 3 presents the details of our pruning ideas and the efficient processing of the no-but-semantic-match query. Section 4 presents the batch query processing scheme to further improve the performance. The experiments are presented in Section 5. Section 6 reviews the related work. Finally, Section 7 concludes our paper.

\section{Background}
\subsection{Preliminaries}
An XML document is an ordered tree $T$ with labeled nodes and a designated root. 
All XML elements are treated as nodes containing information in $T$. 
There are parent-child and sibling relationships between the nodes. 
The depth of the tree is denoted as $d$, and the root node has a depth of 1. 
Each node $v$ in the tree $T$ is marked with a unique identifier in Dewey code, which describes the path from the root to the node $v$ as a sequence of numbers separated by a dot (``."). 
Sibling nodes have Dewey codes of equal length with a unique last number.

\begin{example}. Fig. \ref{fig:Motivating} shows an XML tree which contains information about staff and students of a university. 
The root node's Dewey code is 0. 
Dewey code 0.0.1 refers to a node containing information about a member of the university and the code prefix 0.0 refers to its parent node.
\end{example}

\textbf{Keyword Match Node:} A node $m$ in the tree $T$ is a match node for keyword $k_i$ if it contains $k_i$. 
e.g., the match nodes for keyword $k_1 = database$ presented in Fig. \ref{fig:Motivating} are: $m_1^1 = [0.0.2.0.0]$, $m_1^2 = [0.1.2.0.0]$, $m_1^3 = [0.2.3.0.0]$, and $m_1^4 = [0.2.4.0.1.0]$.

\textbf{Keyword Inverted List:} Each keyword $k_i$ corresponds to a list $S_i$ of entries and each entry corresponds to a node $m$ which contains $k_i$ in the tree $T$. 
e.g., the keyword inverted list for keyword $k_1 = database$ is $S_1=\{[0.0.2.0.0],[0.1.2.0.0],[0.2.3.$ $0.0],[0.2.4.0.1.0]\}$.

\textbf{Smallest Lowest Common Ancestor (SLCA):} Let $lca(m_1,...,m_n)$ returns the lowest common ancestor (LCA) of match nodes $m_1,...,m_n$. 
Then LCAs of query $q$ on $T$ are defined as $LCA(q)=\{v|v = lca(m_1,...,m_n),$ $m_i \in S_i ( 1 \leq i \leq n) \}$. 
SLCAs are a subset of LCAs which do not have other LCAs as child nodes and defined as $SLCA(q)$.

\begin{example}. In Fig. \ref{fig:Preliminaries}, for a keyword query $q$=\{$Jack$, $database$\}, there are 4 LCA nodes which are computed as: $LCA(q)$ = $\{lca([0.0.3.0],[0.0.2.0.0]),lca([0.0.3.0],[0.1$ $.2.0.0]),lca([0.1.1$ $.0.0],[0.1.2.0.0]),lca([0.2.1.0.0],[0.2.3.0$ $.0])\}$=$\{[0],[0,0],[0.1]$ $,[0.2]\}$. 
Since the LCA node $[0]$ is the ancestor node of $[0.0]$,$[0.1]$ and $[0.2]$, it is not an SLCA and should be removed. Therefore, $SLCA(q) = \{[0.0],[0.1],[0.2]\}$.
\end{example}

\textbf{Keyword Query and Subtree Result:} In XML data, a keyword query $q$ consists of a set of keywords $\{k_1,k_2, ... ,k_n\}$. 
A result $r = (v_{slca},\{m_1,m_2,...,m_n\})$ for $q$ is a subtree in $T$ which contains all keywords $k_i\in q$. 
Here, we consider $v_{slca}$, the root of the subtree, an SLCA node, i.e. $v_{slca}\in SLCA(q)$.
    
\textbf{Tightest SLCA Subtree Result:} For an SLCA node, there may exist several subtree results. 
This is because that under an SLCA node $v_{slca}$, we may find several match nodes $\{m_i^j\}$ for the keyword $k_i, (1 \leq i \leq n, 1 \leq j \leq n_i)$, where $n_i$ is the number of match nodes for $k_i$ under $v_{slca}$.
Let $m_i^{l_i}$ be the closest match node from $\{m_i^j\}$ to $v_{slca}$ for $k_i (1 \leq i \leq n)$, then we get the the tightest subtree result $r = (v_{slca},\{m_1^{l_1},..., m_i^{l_i},...,m_n^{l_n}\})$.

For example, for $SLCA(q) = [0.2]$ in Fig. \ref{fig:Preliminaries}, there are two subtree results, (c) and (d). 
The tightest subtree result is (c) $r =([0.2],\{[0.2.1.0.0],[0.2.3.0.0]\})$. 

We argue to return only the tightest SLCA subtree results to the user as these results match the user's search intention better than the results containing the sparsely distributed keyword match nodes under $v_{slca}$. That is, a result is more likely to be meaningful when the result subtree is more tight and cohesive (for survey \cite{HristidisKPS06}, \cite{FengLWZ10}).

\begin{table}
\centering
\caption{The list of symbols}
\begin{tabular}{ |l|l| }
\hline
\centering
$Symbol$ & $Meaning$ \\ \hline
\hline
$\lambda(q_0,q^\prime)$ & Similarity score of $q_0$ to $q^\prime$ \\ \hline
$\alpha$ & Tuning parameter \\ \hline
$\sigma^{min}$ & Threshold score \\ \hline
$\sigma(r,q^\prime,T)$ & Total score of a result \\ \hline
$q_0$ & User original query  \\ \hline
$q^\prime$ & A candidate query  \\ \hline
$\mathcal{Q}$ & A set of candidate queries \\ \hline
$\mathcal{S}$ & A set of Keyword Inverted lists \\ \hline
$\mathcal{B}$ & Candidate query batch \\ \hline
$\mathcal{R}$ & A set of results \\ \hline
$\mathcal{R}^*$ & A set of top-k results \\ \hline
$\Delta(r_1,r_2)$ & The score difference between $r_1$ and $r_2$ \\ \hline
$r$ & A result \\ \hline
$v_{slca}$ & A subtree result root \\ \hline
$m$ & A match node \\ \hline
$n$ & Number of keywords in a query \\ \hline
$m^l$ & Tightest match node \\ \hline
$\mathcal{P}$ & An execution plan \\ \hline
$c({\mathcal{B}})$ & Cost of an execution plan \\ \hline
$T$ & XML data \\ \hline
$d(r,T)$ & Number of edges in a result $r$ \\ \hline
$\theta(r,T)$ & Cohesiveness score of a result $r$ \\ \hline
$\mathcal{K}$ & A set of candidate keywords \\ \hline
$k$ & A query keyword \\ \hline
\end{tabular}
\vspace{-2ex}
\end{table}    

\subsection{Problem Statement} \label{sec:pstate}
\begin{definition}\textbf{(No-Match Problem)}
Given a keyword query $q_0$ = \{ $k_1$,$k_2$, ... ,$k_n$ \} on $T$, if $\exists$ $k_i\in q_0$ such that $S_i$ = $\emptyset$, we say that query $q_0$ has a no-match problem over $k_i$.      
\end{definition}

If a user submits a keyword query $q_0$ that has a no-match problem, traditional systems return an empty result set. 
However, the missing keyword $k_i$ that causes the no-match problem may have semantic counterparts in the data source $T$ which may produce results the user might be interested in, if the candidate keywords are sufficiently similar to $k_i\in q_0$. 
We use $\mathcal{K}_i$ to denote the list of candidate keywords that can be used to replace $k_i\in q_0$.

\begin{example}. Consider the keyword query $q_0$ = $\{Jack$, $lecturer$, $class\}$ presented in Example 1. 
It is easy to verify that the keywords $k_2 = lecturer$ and $k_3 = class$ cause a no-match problem for $q_0$. 
Candidate keywords that can be used instead of $k_2$ and $k_3$ for $q_0$ are: $\mathcal{K}_2$ = \{$academic$, $full$ $professor$ \} and $\mathcal{K}_3$ = \{$course$,$grade$,$event$\}.
\end{example}

\begin{definition}\textbf{(No-But-Semantic-Match Problem)} \\
Given a keyword query $q_0$ = \{ $k_1$,$k_2$, ... ,$k_n$ \} with a no-match problem on $T$, i.e., $\exists k_i \in q_0$ such that $S_i = \emptyset$, but $\mathcal{K}_i$ $\neq$ $\emptyset$, then we say that $q_0$ has a no-but-semantic-match problem over $k_i$.      
\end{definition}

The no-but-semantic-match problem is a special case of the no-match problem. The problem can be addressed in the following way: 
(a) find a candidate keyword list $\mathcal{K}_i$ that can be used to replace $k_i\in q_0$; 
(b) generate candidate queries $q^\prime$ for $q_0$ by replacing $k_i$ with $k^\prime_i\in \mathcal{K}_i$;  
(c) execute $q^\prime$ in the data source $T$ to produce the semantically related results $\mathcal{R}$ for $q_0$; 
(d) score and rank the results $r\in\mathcal{R}$ to return only the top quality results to the user for evaluation.

\begin{example}. Consider the keyword query presented in Example 1. 
The query $q_0$ has no-but-semantic-match problem over $k_2 =lecturer$ and $k_3 =class$. 
The semantic counterparts for $k_2$ and $k_3$ are: $\mathcal{K}_2$ = \{$academic$, $full$ $professor$\} and $\mathcal{K}_3$ = \{$course$,$grade$,$event$\}. 
These candidate keywords are combined with the rest of the keywords to generate semantically related candidate queries for $q_0$. 
The generated candidate queries are: $q_1$ = \{$Jack$, $academic$, $course$\}, $q_2$ = \{$Jack$, $academic$, $grade$\}, $q_3$ = \{$Jack$, $full$ $professor$, $course$\}, $q_4$ = \{$Jack$, $full$ $professor$, $grade$\}, $q_5$ = \{$Jack$, $academic$, $event$\}, and $q_6$ = \{$Jack$, $full$ $professor$, $event$\}. 
\end{example}

We use $\mathcal{Q}$ to denote the list of candidate queries. As mentioned before, the candidate queries $q^\prime\in\mathcal{Q}$ need to be executed against $T$ to produce the semantically related result set $\mathcal{R}$ for $q_0$. We know that these candidate queries $q^\prime$ are generated by replacing $k_i$ with $k^\prime_i\in\mathcal{K}_i$. 
However, not all candidate keywords $k^\prime_i\in\mathcal{K}_i$ are semantically similar to the user given keyword $k_i\in q_0$ and also, not all semantically related results $r\in\mathcal{R}$ are meaningful to the same degree. 
Therefore, we need to score the produced results $r\in\mathcal{R}$, denoted by $\sigma(r, q^\prime, T)$, as given as follows:

\begin{equation}
\sigma(r, q^\prime, T)=sim(q_0, q^\prime)\times coh(r, T)
\label{eq:score}
\end{equation}
where, $r$ is a result for the candidate query $q^\prime$, the $sim(q_0, q^\prime)$ measures the similarity of $q^\prime$ with $q_0$ and $coh(r, T)$ measures the cohesiveness of $r$ in $T$. 
The rank of a result $r\in\mathcal{R}$ is calculated as follows:

\begin{equation}
rank(r, T)=|\{r^{\prime}|\sigma(r^{\prime}, q^{\prime\prime}, T)>\sigma(r, q^\prime, T)\}|+1
\label{eq:rank}
\end{equation}  
 
\begin{definition}\label{de:topk} \textbf{Top-k Semantically Related Results.} Given a keyword query $q_0 = \{k_1,k_2,...,k_n\}$ on $T$, having the no-but-semantic-match problem, we want to discover $k$ results from $\mathcal{R}$ that maximizes the scoring function given in Eq. \ref{eq:score} or in terms of ranking the results $\{r|rank(r, T) \leq k \}$.   
\end{definition}
    
\section{Our Approach}
We propose a two phase approach to solve the no-but-semantic-match problem in XML data $T$. The schematic diagram of our approach is illustrated in Fig. \ref{fig:framework}. In the first phase, the semantic counterparts for the non-mapped keywords of the user query are extracted from the ontological knowledge base. Next, the candidate queries are generated by replacing the non-mapped keywords with their semantic counterparts and the similarities between the candidate queries and the original query are computed. In the second phase, the candidate queries are executed against the data source $T$. The results are scored based on Eq. \ref{eq:score} and finally, only the top-k results are presented to the user for evaluation. The results are scored based on the followings: (a) similarity of the candidate queries to the user given query; and (b) the cohesiveness of the results. 

As the candidate queries are generated using the ontological knowledge base, we use the \emph{ontological similarity} of the candidate query to the user given query as the measure of similarity for the first parameter. The details for computing this similarity is presented in section \ref{sec:ontology}.
The details for computing the \emph{cohesiveness} of the results is presented in section \ref{sec:coherency}. Since there are a number of candidate queries that should be executed against the data source $T$ and each candidate query may have several results that needs to be scored, we propose efficient pruning techniques to avoid unnecessary computations and terminate early. The pruning ideas and the details of our candidate query processing technique are presented in Section \ref{sec:pruning}. We also propose a \emph{batch} query processing technique to speed up the computations further by sharing the computations among the candidate queries, which is described in Section \ref{sec:multiple}.                 

\begin{figure}[tb]
\centering
 \includegraphics[width = 3.20in]{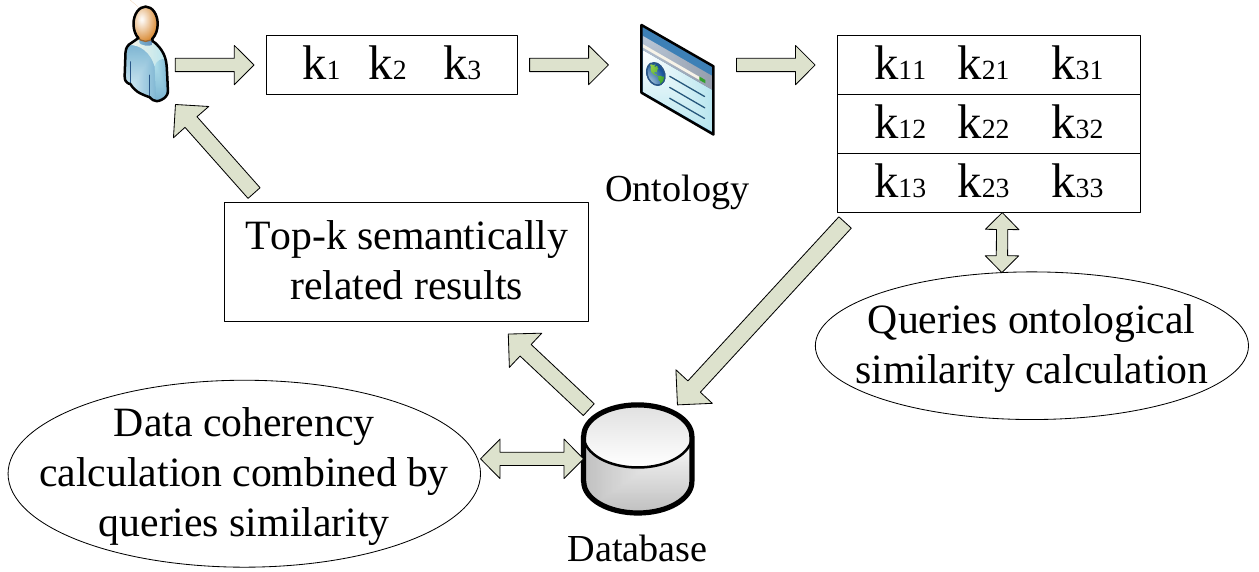}
	\caption{Schematic diagram of our approach for solving no-but-semantic-match problem in XML data}
	\label{fig:framework}
\end{figure}

\subsection{Candidate Queries}
\label{sec:ontology}
This section describes how to generate the candidate queries $\mathcal{Q}$ for the user query $q_0$ and compute their similarity to $q_0$. 

\subsubsection{Generating Candidate Queries}
\label{sec:generatingcandidatequeries}
To adhere closely to the user's intentions, the candidate keywords $k^{\prime}_i \in \mathcal{K}_i$ must be as close as possible to the non-mapped keywords $k_i\in q_0$. In this study we use WordNet, which is widely used in the literature \cite{BouquetKZ05} for finding semantic counterparts for $k_i \in q_0$. 
We categorize our semantic candidate keywords derived from WordNet into four groups \cite{Miller90}: (a) synonyms denoted as $Syn(k_i)$, (b) coordinate terms denoted as $Cot(k_i)$, (c) hyponyms denoted as $Hpo(k_i)$, and (d) hypernyms denoted as $Hpe(k_i)$. The candidate keyword list $\mathcal{K}_i$ for a keyword $k_i\in q_0$ contains all types of ontological counterparts as shown in Eq. \ref{eq:keywordList}.  

\begin{equation}
\label{eq:keywordList}  
\mathcal{K}_i = Syn(k_i) \cup Cot(k_i) \cup Hpo(k_i) \cup Hpe(k_i) 
\end{equation}

However, not all candidate keywords extracted from the ontological knowledge base are available in $T$. The keyword list has to be reduced to the candidates which have direct mapping in the data source. To do this, an inverted keyword list using hash indices can be queried in $\mathcal{O}(1)$ time. Finally, the candidate queries $\mathcal{Q}$ are generated by replacing the non-mapped keywords $k_i \in q_0$ with each of their semantic counterparts $k^{\prime}_i \in \mathcal{K}_i$ in turn.

\subsubsection{Measuring Candidate Query Similarity}
\label{sec:candidatequerysimilarity}
In order to measure the similarity between a candidate query $q^\prime\in\mathcal{Q}$ and the user's original query $q_0$, firstly we measure the individual similarity between the candidate keyword $k^{\prime}_i \in q^\prime$ and the corresponding non-mapped keyword $k_i \in q_0$ using Wu and Palmer's metric \cite{WuP1994}. This metric establishes the depths of both keywords and their least common subsumer (LCS) according to the WordNet structure and produces the degree of similarity between these two keywords, $SimWP(k_i,k^{\prime}_i)$,  as shown in Eq. \ref{eq:SimWP}. 

\begin{equation}
\label{eq:SimWP}  
SimWP(k_i,k^{\prime}_i) = \dfrac{ {2} \times {dep(LCS)}} { {dep(k_i)} + {dep(k^{\prime}_i)} }
\end{equation}

where $dep(k_i)$ returns the depth of the keyword $k_i$ in the WordNet structure. This metric is symmetric, i.e., $SimWP$ $(k_i,k^{\prime}_i)=SimWP(k^{\prime}_i, k_i)$. However, WordNet has a hierarchical structure. That is, the candidate keyword $k^{\prime}_i\in q^\prime$ could be a more special type (e.g., hyponyms) or a more general type (e.g., hypernyms) for the non-mapped keyword $k_i \in q_0$ in WordNet. Therefore, we incorporate the specialization/generalization aspect of $k_i\in q_0$ into the Wu and Palmer similarity metric as given as follows:
\begin{equation}
\label{eq:DSim}  
DSim(k_i,k^{\prime}_i) =\frac{dep(k^{\prime}_i)} {max(dep(k_i,k^{\prime}_i))} \times SimWP(k_i,k^{\prime}_i)    
\end{equation}

where $DSim(k_i,k^{\prime}_i)$ is the directional similarity of keyword $k_i\in q_0$ to keyword $k^{\prime}_i\in q^\prime$. The directional similarity $DSim(k_i,k^{\prime}_i)$ penalizes the more general keyword types of $k_i\in q_0$ by weighting $SimWP(k_i,k^{\prime}_i)$ with $\frac{dep(k^{\prime}_i)} {max(dep(k_i,k^{\prime}_i))}$. Finally, the similarity of $q^{\prime}$ to the original query $q_0$ is computed by considering all the replacements in $q^{\prime}$ as follows:   

\begin{equation}
\label{eq:querysimilarity}
\lambda(q_0,q^{\prime}) = \prod\limits_{i = 1}^ n {DSim(k_i \in q_0,k'_i \in q^{\prime})}
\end{equation}
where $n$ is the number of keywords in $q_0$ that have been replaced to generate $q^{\prime}$ ($1\leq n \leq |q|$).

\begin{example}. Consider the keyword query $q_0$ = \{$Jack$, $lecturer$, $class$\} presented in Example 1. Here, the second and the third keywords cause the no-match problem for $q_0$. The candidate keyword list for these two non-mapped keywords are: $\mathcal{K}_2$ = \{$academic:0.91$, $full$ $professor:0.84$\} and $\mathcal{K}_3$ = \{$course:1$, $grade:1$, $event:0.35$\}, where each candidate keyword is labeled with their corresponding \emph{DSim} scores. Now, the candidate queries are generated by replacing the non-mapped keywords in $q_0$ with their candidate keywords and scored as follows:\\ 
$q_1=\{Jack, academic, course\}$,$\lambda(q_0,q_1)$=$0.91 \times 1 = 0.91$,\\
$q_2=\{Jack, academic, grade\}$,$\lambda(q_0,q_2)$=$0.91 \times 1 = 0.91$,\\
$q_3=\{Jack,fullprof,course\}$,$\lambda(q_0,q_3)$=$0.84\times1=0.84$,\\
$q_4=\{Jack,fullprof,grade\}$,$\lambda(q_0,q_4)$=$0.84 \times 1=0.84$,\\        
$q_5=\{Jack,academic,event\}$,$\lambda(q_0,q_5)$=$0.91 \times 0.35$ $=0.31$ ,and 
$q_6=\{Jack,fullprof,event\}$,$\lambda(q_0,q_6)$=$0.84\times0.35=0.29$.
\end{example}

From the above, it is easy to verify that $q_1$ and $q_2$ are the most similar candidate queries to the original query.
\begin{figure*}[tb]
\centering
 \includegraphics[width = 5.5in]{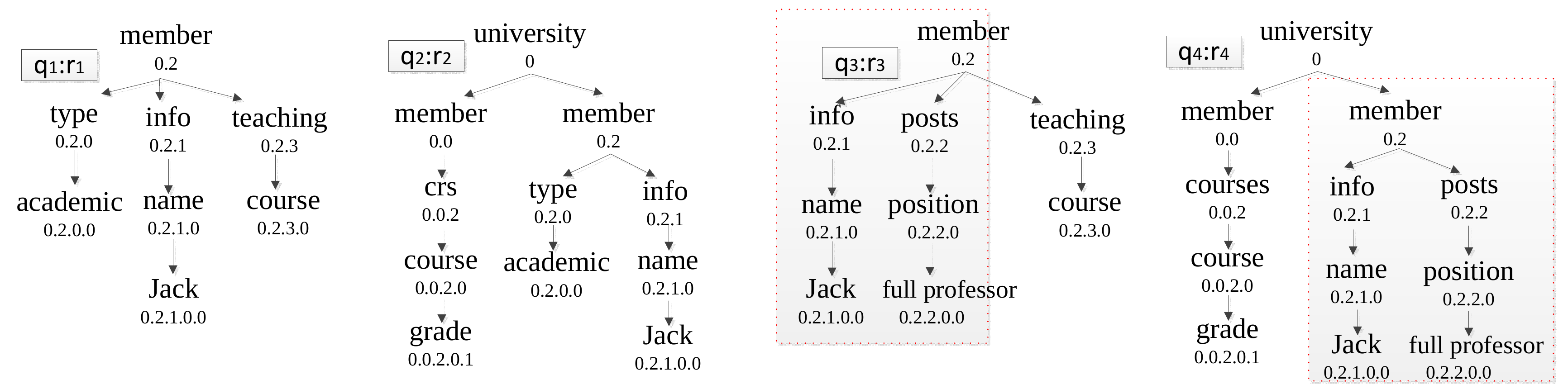}
	\caption{Subtree results for candidate queries $\mathcal{Q}$ in Example 7 after executing them against data given in Figure \ref{fig:Motivating}.}
	\label{fig:Coherency}
\end{figure*}

\subsection{Cohesiveness of Results}
\label{sec:coherency}
There can be potentially many candidate queries with a large number of results and not all results $r\in \mathcal{R}$ are meaningful to the same degree. In XML, if the match nodes in the result $r$ are near to each other, the result is considered to be more cohesive. Intuitively, when a result subtree is more cohesive, it is more likely to be relevant and meaningful (for survey \cite{HristidisKPS06}, \cite{FengLWZ10}). To measure the cohesiveness of a result $r$, we firstly compute the distance between each match node $m$ and the root $v_{slca}$ of the result subtree $r$. Then, we compute the overall distance of a result $r$ w.r.t. the data source $T$ as given as follows \cite{FengLWZ10}: 

\vspace{-2ex}
\begin{equation}
\label{eq:distance}
d(r,T) = \sum\limits{(l_m - l_{v_{slca}})} 
\end{equation}
where $r$ is the result subtree, $v_{slca}$ denotes the root of the result $r$, $m$ is the match node, $l_m$ is the level of $m$, and $l_{v_{slca}}$ is the level of the root in $r$. Clearly, the larger this distance is, the lower the cohesiveness score for the result $r$ should be. Therefore, we compute the cohesiveness of a result subtree $r$ w.r.t. the data source $T$ as given as follows \cite{FengLWZ10}: 
\begin{equation}
\theta(r,T) = \frac{1}{\log_{\alpha}(d(r,T) + 1) + 1}
\label{eq:coherency}
\end{equation}
where $\alpha$ is the tuning parameter by which the user can trade off between the similarity of the candidate queries and the cohesiveness of the results. If we set $\alpha$ to a larger value, the sensitivity to the cohesiveness of the results gets smaller. That is, the total score $\sigma$ of a result $r$ is more dependent on the similarity of the query $\lambda(q_0, q^\prime)$ than its cohesiveness.
     
\begin{example}. Consider the candidate keyword queries $\mathcal{Q}$ in Example 7. The result subtrees of these queries are illustrated in Fig. \ref{fig:Coherency}. Using $\alpha = 4$, we compute the cohesiveness of the results as follows:\\$d(r_1,T) = 7,  \theta(r_1,T) = \frac{1}{2.5}  = 0.4$,\\ 
$d(r_2,T) = 10, \theta(r_2,T) = \frac{1}{2.72} = 0.36$,\\ 
$d(r_3,T) = 8,  \theta(r_3,T) = \frac{1}{2.58} = 0.38$,\\ and 
$d(r_4,T) = 11, \theta(r_4,T) = \frac{1}{2.79} = 0.35$. 
\end{example}   

\subsection{Processing of Candidate Queries}
\label{sec:pruning}
A na\"ive approach to processing the no-but-semantic-match query $q_0$ first generates the candidate queries $\mathcal{Q}$ and then computes all semantically related results $r\in \mathcal{R}$ by executing the queries $q^\prime\in \mathcal{Q}$ against the data source $T$. 
Then it applies Eq. \ref{eq:querysimilarity} to the results $\mathcal{R}$ to establish the similarity of the corresponding candidate query $q^\prime$ to the original query $q_0$ and determines the cohesiveness in the data source $T$ according to Eq. \ref{eq:coherency}. 
The results are then sorted based on their total score $\sigma$ (Eq. \ref{eq:score}) to obtain the top-k ranked (Eq. \ref{eq:rank}) semantically related results.
 
Assume that the initial query $q_0=\{k_1,...,k_n\}$ has the no-but-semantic-match problem for all $k_i\in q_0$, $|\mathcal{Q}|$ is the maximal number of candidate queries produced for $q_0$, $d$ is the depth of the tree $T$, $|S|$ and $|S_1|$ are the maximal and minimal sizes of the inverted keyword lists for the semantic counterparts $k_i^\prime$, respectively and $|\mathcal{R}|$ is the maximal number of semantically related results for the candidate queries in $\mathcal{Q}$, then the complexity of the naive approach becomes $|\mathcal{Q}| \times nd|S_1|\log|S| + |\mathcal{R}| \log|\mathcal{R}|$. However, both $|\mathcal{Q}|$ and $|\mathcal{R}|$ could be potentially large, which makes the na\"ive approach impractical. We propose two efficient pruning techniques, called the \emph{inter-query} pruning and \emph{intra-query} pruning to significantly reduce the sizes of $\mathcal{Q}$ and $\mathcal{R}$, respectively.

\subsubsection{Inter-Query Pruning}
\label{sec:interquerypruning}

Assume the candidate queries $\mathcal{Q}=\{q_1, q_2,...,q_l, q_{l+1}, ...$ $q_{|\mathcal{Q}|}\}$ are sorted based on their similarities $\lambda(q_0, q_i)$ to the original query $q_0$. 
We obtain the following lemma. 

\begin{lemma}
\label{lem:InterPruning}
Assume $R^*$ is the $k$ results of $\mathcal{Q}^\prime=\{q_1, q_2,...$ $,q_l\}$ and $\sigma^{min}$ is the min-score of the results $R^*$. Then, we can stop processing the rest of the candidate queries $\mathcal{Q}^{\prime\prime}=\{q_{l+1}, ...q_{|\mathcal{Q}|}\}$ if $\lambda(q_0,q_{l+1})<\sigma^{min}$.
\end{lemma}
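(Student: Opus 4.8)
The plan is to exploit the monotone decreasing order of the candidate queries together with the fact that the total score of any result factors as a product of a query-similarity term and a cohesiveness term, the latter being bounded by $1$. First I would recall from Eq.~\ref{eq:coherency} that for any result $r$ of any query, $\theta(r,T)\le 1$, since $d(r,T)\ge 0$ forces $\log_\alpha(d(r,T)+1)+1\ge 1$. Hence for every $j\ge l+1$ and every result $r$ of $q_j$ we have $\sigma(r,q_j,T)=\lambda(q_0,q_j)\times\theta(r,T)\le\lambda(q_0,q_j)$.

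Next I would use the sorted order: since $\mathcal{Q}$ is arranged so that $\lambda(q_0,q_1)\ge\lambda(q_0,q_2)\ge\cdots\ge\lambda(q_0,q_{|\mathcal{Q}|})$, the hypothesis $\lambda(q_0,q_{l+1})<\sigma^{min}$ propagates to all later queries: for every $j\ge l+1$, $\lambda(q_0,q_j)\le\lambda(q_0,q_{l+1})<\sigma^{min}$. Chaining this with the previous inequality gives $\sigma(r,q_j,T)<\sigma^{min}$ for every result $r$ arising from any query in $\mathcal{Q}^{\prime\prime}$.

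Finally I would close the argument by the definition of $R^*$ and $\sigma^{min}$: $R^*$ already contains $k$ results, each with score at least $\sigma^{min}$, all drawn from queries in $\mathcal{Q}^{\prime}$. Since no result from $\mathcal{Q}^{\prime\prime}$ can attain a score of $\sigma^{min}$ or higher, no such result can displace any member of $R^*$ in the top-$k$ ranking of Definition~\ref{de:topk}; therefore $R^*$ is already a valid top-$k$ answer and the queries $q_{l+1},\dots,q_{|\mathcal{Q}|}$ need not be processed. The only point requiring a little care — the ``main obstacle,'' though it is mild — is the handling of ties at the boundary: if several results share the score $\sigma^{min}$ one must check that the strict inequality $\lambda(q_0,q_{l+1})<\sigma^{min}$ (rather than $\le$) is exactly what guarantees that a later result, whose score is \emph{strictly} below $\sigma^{min}$, can never be ranked within the first $k$ positions regardless of how ties among the current top-$k$ are broken.
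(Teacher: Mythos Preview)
Your argument is correct and follows essentially the same line as the paper's proof: both rely on the bound $\theta(r,T)\le 1$ from Eq.~\ref{eq:coherency} together with the sorted order of $\mathcal{Q}$ to conclude that no result from $\mathcal{Q}^{\prime\prime}$ can exceed $\sigma^{min}$. The only cosmetic difference is that the paper casts the step as a proof by contradiction (assuming some $r'$ from $q_{l+1}$ beats $\sigma^{min}$ and deriving $\lambda(q_0,q_{l+1})>\sigma^{min}$), whereas you argue directly; your version is slightly more explicit in propagating the inequality to all $q_j$ with $j\ge l+1$.
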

\begin{proof}
Assume that $r$ is a min-scored result in $R^*$ for $q^\prime\in\mathcal{Q}^\prime$, i.e., $\sigma^{min}=\sigma(r, q^\prime, T)$ and $r^\prime$ is a result of the candidate query $q_{l+1}$ whose similarity score is higher than any result of the queries $\mathcal{Q}^{\prime\prime}=\{q_{l+1}, ...q_{|\mathcal{Q}|}\}$. Now, assume that $\sigma(r^\prime,q_{l+1},T)>\sigma(r, q^\prime,T)$. We prove that this can not happen if $\lambda(q_0,q_{l+1})<\sigma^{min}$. To be scored higher than $r$, $r^\prime$ must satisfy the following: $\lambda(q_0,q_{l+1})>\frac{\sigma^{min}}{\theta(r^\prime, T)}$. However, the highest possible value of $\theta$ for any result in $T$ is 1. By putting this into the above, we get $\lambda(q_0,q_{l+1})>\sigma^{min}$, which contradicts the assumption. Therefore, $R^*$ consists of the top-$k$ semantically related results according to Def. \ref{de:topk} whose ranks are $\le k$.
\end{proof}

\begin{example}. Consider the candidate queries $\mathcal{Q}$ given in Example 7. 
If we want to present the top-1 result to the user, and after processing the candidate queries up to $q_4$ we get $\sigma^{min} = 0.91 \times 0.4 = 0.36$, then we do not need to process $q_5$ as $\lambda(q_0,q_5) = 0.31 < \sigma^{min}$. 
Thus, from $q_5$ to the end of the list of $\mathcal{Q}$, no queries can score higher than $\sigma^{min}$ and therefore, we can stop processing them. 
\end{example} 

\subsubsection{Intra-Query Pruning}
\label{sec:intraquerypruning}
Although the \emph{inter-query pruning} technique does not execute all of the candidate queries $q^\prime\in\mathcal{Q}$ against $T$, it employs the pruning technique only in the first phase of the framework. That is, once we start processing a candidate query $q^\prime$, we compute all of its results. 
Consider the candidate queries $\mathcal{Q}$ of $q_0$ given in Example 7 and assume that the user requests only the top-1 result for $q_0$. Also, assume that the candidate queries in $\mathcal{Q}$ are sorted based on their similarities with $q_0$ and we have already processed the candidate queries from $q_1$ to $q_5$. The current top-1 result is $r_1$ (see in Fig. \ref{fig:Coherency}) and $\sigma^{min}$ is $0.36$. Now, while processing the candidate query $q_3$, we can discard the result $r_3$ of $q_3$ (as shown in Fig. \ref{fig:Coherency}) while generating it. That is, while reading through the keyword inverted lists $S_{Jack}$, $S_{fullprofessor}$ and $S_{course}$ for $q_3$, we can partially compute $r_3$ consisting of the keywords $\{Jack, full$ $professor\}$ for $q_3$ only (as highlighted in Fig. \ref{fig:Coherency}), denoted by $r_3^p$, and compare the score 0.34 of $r_3^p$ with the current $\sigma^{min}$, we can decide that the complete $r_3$ consisting of keywords $\{Jack, full$ $professor, course\}$, denoted by $r_3^c$, can never outrank $r_1$ as $\theta(r_3^{c}, T)\le \theta(r_3^{p}, T)$. The same occurs while computing $r_4$ of $q_4$ and we can discard $r_4$ before generating the ultimate result. We call the above query pruning technique as the \emph{intra-query pruning}. 

\begin{algorithm}[tb]
\SetKwInOut{Input}{Input}
\SetKwInOut{Output}{Output}
\Input{User Query $q_0$, Tuning Parameter $\alpha$, Keyword Inverted Lists $\mathcal{S}$ }
\Output{Top-k Semantically Related Results $\mathcal{R}^*$}
 
 $\mathcal{Q}\gets generateCandidateQueries(q_0)$;\par	
 
 \While{$q^\prime\gets\mathcal{Q}\cdot$getNext()$\neq null$}{
	$q^\prime\cdot$sim$\gets\lambda(q_0, q^\prime)$\tcp*{\small{according to Eq.\ref{eq:querysimilarity} }}\par
	}
 $\mathcal{Q}\gets sortCandidates$($\mathcal{Q}$)\tcp*{\small{based on $q^\prime\cdot$sim}} \par

 $\mathcal{R}^*\gets null$ \tcp*{\small{$\mathcal{R}^*$ is a min heap}} \par 
 $\sigma^{min}\gets$MAXVAL \tcp*{\small{max value}}\par

\While{$q^\prime\gets\mathcal{Q}\cdot$getNext()$\neq null$}
{

	\If{$\mathcal{R}^*\cdot getSize()=k$ and $q^\prime\cdot sim<\sigma^{min}$}
		{
			\textbf{break} \tcp*{\small{inter-query pruning}}\par			
		} 
		
	\If{$\mathcal{R}^*\cdot getSize()=k$}
	{
		$root\gets \mathcal{R}^*\cdot root(); \sigma^{min}\gets root\cdot score$;\par
	}
	
	$\mathcal{S}^\prime\gets \{\}$;\par
	\ForEach{$k_i\in q^\prime$}
		{
		  $S_i\gets retriveKeywordInvertedList(k_i, \mathcal{S})$;\par
			$\mathcal{S}^\prime\gets \mathcal{S}^\prime\cup S_i$;\par
		}
	
	$\mathcal{R}^*\gets processQuery(\mathcal{R}^*, \sigma^{min}, \alpha, q^\prime, \mathcal{S}^\prime)$;\par 
	
}
\Return {$\mathcal{R^*}$}
 \caption{The Framework}
\label{alg:queryprocessingalgorithm}
\end{algorithm}
\subsubsection{The Framework}
Algorithm \ref{alg:queryprocessingalgorithm} presents the framework for processing the no-but-semantic-match query $q_0$ submitted by the user. First, it generates the candidate queries $\mathcal{Q}$ for $q_0$ as explained in Section \ref{sec:generatingcandidatequeries}, shown on line 1. The lines 2-4 compute the similarity between the candidate queries $\mathcal{Q}$ and the user query $q_0$ as explained in Section \ref{sec:candidatequerysimilarity} and sort them. Then, a min-heap is initialized with $\mathcal{R}^*$ to $null$ and the min-score $\sigma^{min}$ to $MAXVAL$ in lines 5-6. In lines 8-9, we stop processing the candidate queries in $\mathcal{Q}$ as soon as we find a query $q^\prime\in\mathcal{Q}$ if $|\mathcal{R}^*|=k$ and $q^\prime.sim<\sigma^{min}$. Otherwise, if $|\mathcal{R}^*|=k$, we update $\sigma^{min}$ by reading the root entry of the heap $\mathcal{R}^*$ and adding its score $root.score$ to $\sigma^{min}$ in lines 10-11. Then, for each keyword $k_i \in q^\prime$ we retrieve their corresponding inverted lists and pass it to the $processQuery$ method (which is explained in detail in the following section) in line 16 to retrieve the results of $q^\prime$ and insert the eligible results into $\mathcal{R}^*$. 

\subsubsection{The processQuery Method}
\label{sec:processquerymethod}
In order to process the no-but-semantic-match query $q_0$, we need to execute each candidate keyword query $q^\prime\in\mathcal{Q}$ against the data source $T$ in the $processQuery$ method. There are two benchmark algorithms in the literature to compute the keyword query results on XML data $T$ as given as follows: (a) scan eager \cite{XuP05} and (b) anchor based \cite{SunCG07} algorithms. However, these two benchmark algorithms are not readily available to implement our $processQuery$ method. These algorithms only find the root of the subtree results in $T$, but ignore the distribution of the keyword match nodes in the subtree. To implement our $processQuery$ method with these benchmark algorithms, we need to address the following issues which are specific to our problem: (a) finding the tightest nodes under the confirmed SLCA root $v_{slca}$ and (b) scoring the result partially based on its candidate query similarity and cohesiveness to apply \emph{intra-query} pruning. We propose two techniques to implement the $processQuery$ method based on the benchmark algorithms as follows: 

\begin{enumerate}
\item \textbf{S}can \textbf{E}ager based \textbf{Q}uery \textbf{P}rocessing (SE-QP) and
\item \textbf{AN}chor based \textbf{Q}uery \textbf{P}rocessing (AN-QP). 
\end{enumerate}

\textbf{SE-QP Algorithm.} Like scan-eager algorithm\cite{XuP05}, SE-QP firstly sorts the inverted lists of the keywords in $q^\prime$. Then, it picks a match node $m_1$ from the shortest inverted list $S_1$ and then, finds the closest match nodes to $m_1$ from other lists to compute the result root $v_{slca}$. However, the tightest subtree result computation and result scoring is delayed until we can confirm that this $v_{slca}$ can be an actual SLCA node. Therefore, the cursor of each of the inverted list is retained until we decide that this $v_{slca}$ cannot be an ancestor of any other result roots. Once we confirm that this $v_{slca}$ is an actual SLCA node, we compute the tightest subtree result for it and score the result. While scoring the result, we also apply \emph{intra-query} pruning here. Then, we advance the cursors of all of the lists and continue the above steps until we access all nodes in the list $S_1$.   

\begin{algorithm}[tb]
\SetKwInOut{Input}{Input}
\SetKwInOut{Output}{Output}
\Input{$\mathcal{R}^*, \sigma^{min}, \alpha, q^\prime, \mathcal{S}^\prime$}
\Output{Result Set:$\mathcal{R}^*$}

$sortLists(\mathcal{S}^\prime); r \gets null; v_{slca} \gets null$;\par 
\While{ $m_1 \gets getNext(S_1) \neq null$}
{ 
	$m_i \gets closest(m_1,S_i), \forall i \in [2,n]$;\par
	$v^{u}_{slca} \gets lca(m_1,...,m_n)$;
		
	\If{$r \neq null$ \textbf{and} $v_{slca} \not\prec_a v^{u}_{slca}$}
	{		
			
	 \For{$i = 1 \to  n$}
	 {
			$m^{li}_i \gets getTight(S_i,r.cursor_i)$;\par
			$d_i \gets getDist(v_{slca},m^{li}_i)$; $d\gets d+d_i$;\par 
			$r\cdot score\gets q^\prime\cdot sim\times\frac{1}{log_\alpha(d+1)+1}$;\par		
		
			\If{$r\cdot score<\sigma^{min}$}
			{
				\small{stop reading lists and jump to line 14.}				
			}
		}	
		$r \gets (v_{slca},\{m^{li}_i,\forall i \in [1,n]\})$;\par
		update top-k list $\mathcal{R}^*$ with $r$;\par			
	}
	$r \gets null$;\par
	\If{$v^{u}_{slca} \not\prec_a v_{slca}$ } 
	{
		$v_{slca} \gets v^{u}_{slca}; r.add(S_i.cursor,\forall i \in [1,n])$;\par
	} 
}
\If{$v^{u}_{slca} \not\prec_a v_{slca}$}
{
  $r \gets (v^u_{slca},\{m^{li}_i,\forall i \in [1,n]\})$;\par
	score $r$ and update $\mathcal{R}^*$ with $r$ if $r\cdot score>\sigma^{min}$; 
}

\Return {$\mathcal{R}^*$}
\caption{SE-QP}
\label{alg:SE-QP}
\end{algorithm}    

The SE-QP query processing technique is pseudocoded in Algorithm \ref{alg:SE-QP}. In line 1, we sort the inverted list of all keywords in $q^\prime$ and do the initialization. In lines 2-3, it finds the match node $m_1$ from the shortest list $S_1$ and the match nodes $m_i$ from other lists. In line 4, we find the potential root result $v^{u}_{slca}$ which then should be confirmed as an actual SLCA node. In line 5, we check $v^{u}_{slca}$ with the previous result root $v_{slca}$. If $v_{slca}$ is not an ancestor for $v^{u}_{slca}$, denoted as $v_{slca} \not\prec_a v^{u}_{slca}$, $v_{slca}$ is confirmed as the SLCA result root. Then, the corresponding tightest subtree result for $v_{slca}$ is retrieved. To do so, we scan each list $S_i$ by moving its cursor $r.cursor_i$ backward and forward to find the closest match nodes under $v_{slca}$, which is implemented in function $getTight$ of line 7. For each closest match node $m^{li}_i$, the distance of $m^{li}_i$ with $v_{slca}$ is computed by $getDist$ function in line 8 and the score of the result $r.score$ is computed partially in lines 8-9. We stop scanning other lists if the partial score cannot beat $\sigma^{min}$ (intra-query pruning) and jump to line 14, which is given in lines 10-11. Otherwise, we keep scanning all lists to compute the tightest subtree result and the ultimate score of $r$ for the $v_{slca}$. We update the min heap $\mathcal{R}^*$ by this result $r$ in line 13. Now, we update $v_{slca}$ with the current result root $v^{u}_{slca}$ if $v^{u}_{slca}\not\prec_a v_{slca}$ in lines 15-16. In lines 17-19, if the last result root node is an actual SLCA, the similar steps are conducted to score its tightest subtree result and if promising, is used to update $\mathcal{R}^*$.

\textbf{AN-QP Algorithm.}
Like scan-eager algorithm\cite{XuP05}, SE-QP performs worse when the inverted lists have similar sizes (e.g., match node distribution). Also, it incurs many redundant computations when the data distribution is skewed in $T$. For example, if the match nodes are mostly distributed in one part of the XML tree in an inverted list, SE-QP reads all the nodes in $S_1$ and computes their LCAs to finalize the corresponding SLCAs. However, lots of these nodes can be skipped because they are far from the nodes in other inverted lists and cannot create SLCA nodes.

In order to skip the non-promising match nodes, like \cite{SunCG07}, AN-QP considers only the \emph{anchor} match nodes for computing SLCA nodes. A set of match nodes $M=\{m_1, ...m_n\}$ for $q^\prime$ is said to be \emph{anchored} by a match node $m_a\in M$ if for each $m_i\in M\setminus \{m_a\}$, $m_i=closest(m_a, S_i)$, where $closest(m_a, S_i)$ returns the match nodes in the list $S_i$ which is closest to the node $m_a$\cite{SunCG07}. Unlike SE-QP, the \emph{anchor} match node $m_a$ is picked from among inverted lists (not necessarily from the shortest one) so that it can maximize the skipping of redundant computations. Similar to SE-QP, AN-QP first extracts the SLCA result root $v_{slca}$ and thereafter, finds the tightest subtree result under $v_{slca}$ and partially score the result to apply \emph{intra-query} pruning.

\begin{algorithm}[tb]
\SetKwInOut{Input}{Input}
\SetKwInOut{Output}{Output}
 \Input{$\mathcal{R}^*, \sigma^{min}, \alpha, q^\prime, \mathcal{S}^\prime$}
 \Output{Result Set:$\mathcal{R}^*$}
$r \gets null;v_{slca} \gets null;$\par 
$m_a \gets getAnchor(\{getNext(S_i), \forall i\in[1,n]\})$;\par

\While{ $m_a \neq null$ }
{	
  $m_i \gets closest(m_a,S_i), \forall i \in [1,n] \& i \neq a$;\par
	$v^{u}_{slca} \gets lca(m_1,...,m_n)$;\par
	
	\If{$r \neq null$ \textbf{and} $v_{slca} \not\prec_a v^{u}_{slca}$} 
	{		
			\For{$i = 1 \to  n$} 
			{
				$m^{li}_i \gets getTight(S_i,r.cursor_i)$;\par								
				$d_i \gets getDist(v_{slca},m^{li}_i)$; $d\gets d+d_i$\par 
				$r\cdot score\gets q^\prime\cdot sim\times\frac{1}{log_\alpha(d+1)+1}$;\par		
		
				\If{$r\cdot score<\sigma^{min}$}
				{
					stop reading lists and jump to line 15.					
				}
			}			
				$r \gets (v_{slca},\{m^{li}_i,\forall i \in [1,n]\})$;\par				
				update top-k list $\mathcal{R}^*$ with $r$; \par		
		}
		$r \gets null$;\par
	
		\If{$v^{u}_{slca} \not\prec_a v_{slca}$}
		{
			$v_{slca} \gets v^{u}_{slca}$; $r.add(S_i.cursor), \forall i \in [1,n]$;\par 
		}	
	
	$m_a \gets getAnchor(\{getNext(S_i), \forall i \in [1,n]\})$;\par

}
\If{$v^{u}_{slca} \not\prec_a v_{slca}$}
{
	$r \gets (v^u_{slca},\{m^{li}_i,\forall i \in [1,n]\})$;\par
	score $r$ and update $\mathcal{R}^*$ with $r$ if $r\cdot score>\sigma^{min}$; 
}	  

\Return {$\mathcal{R}^*$}
\caption{AN-QP}
\label{alg:AN-QP}
\end{algorithm}

\begin{figure*}

\centering
        \subfloat[Plan $\mathcal{P}_1$, $c(\mathcal{P}_1)=38$]{\label{LabelA}
  \includegraphics[width=1.7in,height=1in]{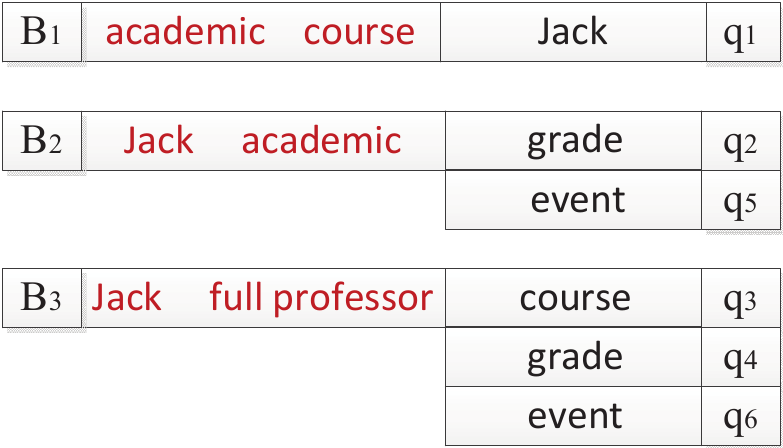}%
}
\hspace{3em}
\subfloat[Plan $\mathcal{P}_2$, $c(\mathcal{P}_2)=39$]{\label{labelB}
  \includegraphics[width=1.7in,height=1in]{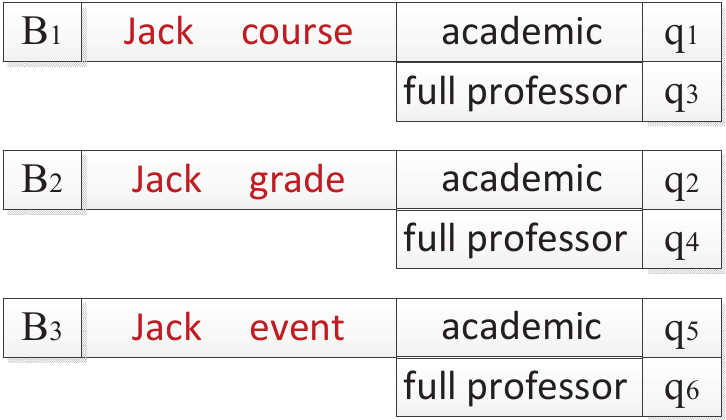}%
}
\hspace{3em}
\subfloat[Plan $\mathcal{P}_3$, $c(\mathcal{P}_3)=34$]{\label{labelC}%
  \includegraphics[width=1.7in,height=1in]{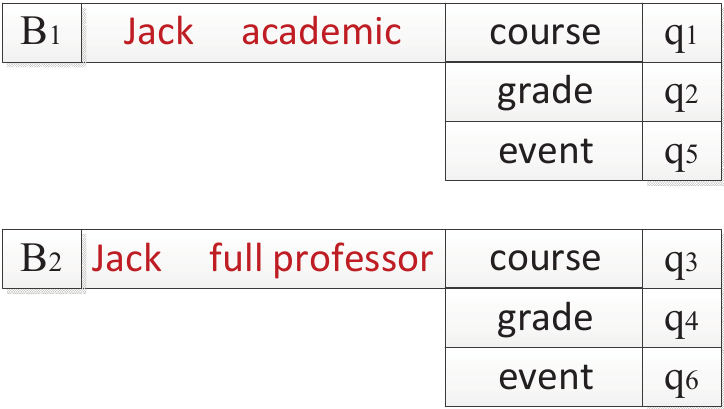}%
}		
        
				\caption{A part of the possible execution plans for processing the candidate queries in $\mathcal{Q}$.}
				\label{fig:plans}
\vspace{-3ex} 
\end{figure*}

The AN-QP technique is pseudocoded in Algorithm \ref{alg:AN-QP}. Line 2 finds the anchor node $m_a$ from the the inverted lists $\{S_1,...,S_n\}$, which is implemented in function $getAnchor$. The potential result root $v^{u}_{slca}$ is computed after finding the closest nodes $m_i\in S_i$ to $m_a$, $\forall i\in [1, n]$ and $i\neq a$ in lines 4-5. Here, if $v_{slca} \not\prec_a v^{u}_{slca}$, $v_{slca}$ is confirmed as the SLCA result root as given in line 6. Therefore, we retrieve the closest match nodes $m^{li}_i, \forall i\in [1, n]$ under $v_{slca}$. Similar to SE-QP,	the function $getTight$ of line 8, scans each list $S_i$ by moving its cursor $r.cursor_i$ backward and forward to compute the tightest subtree result for $v_{slca}$. Then, we find the distance of each $m^{li}_i$ with $v_{slca}$ by the function $getDist$ and add it to the total distance of $r$ in line 9. Then, we compute the score of the result $r.score$ partially in line 10. Here, we apply intra-query pruning if the partial score cannot beat $\sigma^{min}$ and jump to line 15, which is given in lines 11-12. Otherwise, we compute the ultimate score of the result by scanning all lists to retrieve the tightest subtree result under $v_{slca}$ and update $r$ with its corresponding data in line 13. In line 14, the promising result $r$ is inserted into $\mathcal{R}^*$. The current result root $v^{u}_{slca}$ is saved into $v_{slca}$, if $v^{u}_{slca}\not\prec_a v_{slca}$ in lines 16-17. We update the anchor node $m_a$ in line 18 for the next iteration. We check the last result root node and retrieve its tightest subtree result if it is an actual SLCA node in lines 19-20. Finally, we score it and update $\mathcal{R}^*$ with it if $r\cdot score>\sigma^{min}$ in line 21.  
                                           
\section{Batch Processing}
\label{sec:multiple}
This section investigates a more efficient method for processing the candidate queries $\mathcal{Q}$ obtained from the initial query $q_0$ with the no-but-semantic-match problem. As the candidate queries in $\mathcal{Q}$ are generated by replacing the keywords of the initial query, they usually share a subset of keywords.  It is possible to compute the results of these shared subsets of keywords among the queries in $\mathcal{Q}$ and then merge the results of the shared part with the exclusive part of each query $q^\prime \in \mathcal{Q}$ \cite{YaoLL013}. However, two challenges arise here: (a) finding the groups of queries, called \emph{batches}, which share a subset of keywords among them and can be executed efficiently; and (b) finding the tightest SLCA results for the shared part which can be ultimately merged with the exclusive part of each candidate query in a batch. 

\subsection{Constructing the Candidate Query Batch}
Given two candidate queries $q_1,q_2\in \mathcal{Q}$, assume that $\mathcal{K}^s(q_1, q_2)$ denotes the set of keywords shared by them, i.e., $\mathcal{K}^s(q_1, q_2)\subseteq q_1, q_2$. We can achieve the best performance by putting $q_1$ and $q_2$ in a batch if they share the maximal set of keywords, i.e., $|\mathcal{K}^s(q_1,q_2)| = |q_1|-1 = |q_2|-1$ \cite{YaoLL013}. A candidate query batch is defined as given below:

\begin{definition} A candidate query batch, denoted by $\mathcal{B}$, is a subset of $\mathcal{Q}$ such that the following conditions hold: (a) $\forall{q_1,q_2 \in \mathcal{B}}$, $|\mathcal{K}^s(q_1,q_2)| = |q_1|-1 = |q_2|-1$; (b) $\forall{q_1,q_2,q_3 \in \mathcal{B}}$, $\mathcal{K}^s(q_1,q_2)=\mathcal{K}^s(q_2,q_3)=\mathcal{K}^s(q_1,q_3)$; and (c) $1 \leq |\mathcal{B}| \leq |\mathcal{Q}|$.
\end{definition}

To execute the candidate queries in $\mathcal{Q}$, we have to construct the set of batches that can cover all queries in $\mathcal{Q}$. We call the set of candidate query batches that cover the queries in $\mathcal{Q}$ an \emph{execution plan}, which is defined below:     

\begin{definition} An execution plan, denoted by $\mathcal{P}$, is a set of candidate query batches such that: (a) $\mathcal{Q} = \bigcup_{i = 1}^{|\mathcal{P}|} \mathcal{B}_i\in\mathcal{P}$ and (b) $\forall \mathcal{B}_1,\mathcal{B}_2 \in \mathcal{P}$, $\mathcal{B}_1\cap \mathcal{B}_2 = \emptyset$.
\end{definition}

An execution plan $\mathcal{P}$ has its evaluation cost which is the summation of the execution costs of its constituent batches. The execution cost of a batch $\mathcal{B}$ directly depends on the inverted keyword lists which have to be accessed. Assume that $\mathcal{K}^u$ is the set of keywords of all candidate queries in a batch $\mathcal{B}$ that has not been covered by $\mathcal{K}^s$, i.e., $\bigcup_{\forall q_1\in\mathcal{B}} q_1\setminus\mathcal{K}^s$. We estimate the cost of executing $\mathcal{B}$, denoted by $c(\mathcal{B})$ as:

\begin{equation}
c(\mathcal{B}) = \sum\limits_{k_1 \in \mathcal{K}^s} {|S_{k_1}|} + \sum\limits_{k_2 \in \mathcal{K}^u} {(|min(S_{\mathcal{K}^s})|+|S_{k_2}|)}
\end{equation}
where $min(S_{\mathcal{K}^s})$ returns the shortest inverted keyword list size among the keywords in $\mathcal{K}^s$. However, there exist many plans for $\mathcal{Q}$ as shown in Fig. \ref{fig:plans}. The optimal plan has the least cost. Discovering this optimal plan is a combinatorial optimization problem as there are many ways of constructing the candidate query batches from $\mathcal{Q}$. Here, we propose a \emph{greedy approach} for discovering a sub-optimal plan which consists of the following steps: 
(a) retrieve the topmost similar query $q_1\in\mathcal{Q}$ to $q_0$; (b) construct all plausible batches for $q_1$ as follows: (i) remove a keyword $k_1\in q_1$ and construct $\mathcal{K}^s$ as $q_1\setminus k_1$; (ii) retrieve all $q_2\in\mathcal{Q}$ such that $\mathcal{K}^s\subset q_2$; (iii) insert $q_1$ and all $q_2$ into a plausible batch $\mathcal{B}_1$; (c) make the batch $\mathcal{B}_1$ as the actual batch that has the least unit cost $\frac{c(\mathcal{B}_1)}{|\mathcal{B}_1|}$; (d) remove all queries $\mathcal{B}_1$ from $\mathcal{Q}$; and (e) repeat the above steps until $\mathcal{Q}$ is empty.

\begin{example}. Consider the candidate queries $\mathcal{Q}$ presented in Example 7 and the sizes of the inverted lists as follows: $|S_{Jack}|=3$, $|S_{academic}|=1$, $|S_{full professor}|=1$, $|S_{course}|=6$, $|S_{grade}|=2$, $|S_{event}|=2$.\\
We start with the topmost query $q_1 \in \mathcal{Q}$ and remove one keyword at a time from $q_1$ to create the plausible candidate query batches as follows:\\  
$\mathcal{B}_{1.1}$ $(\mathcal{K}^s$ $=\{Jack,academic\},$ $\mathcal{K}^u$=$\{course,grade,event\})$\\
$\mathcal{B}_{1.2}$ $(\mathcal{K}^s=\{Jack,course\}$,$\mathcal{K}^u$=$\{academic,fullprofessor\})$\\   
$\mathcal{B}_{1.3}$ $(\mathcal{K}^s$ $=\{academic,course\},$ $\mathcal{K}^u=\{Jack\})$\\
We estimate the costs of the candidate query batches as follows:
$\frac{c(\mathcal{B}_{1.1})}{|\mathcal{B}_{1.1}|}=5.6 ,\frac{c(\mathcal{B}_{1.2})}{|\mathcal{B}_{1.2}|}=8.5, \frac{c(\mathcal{B}_{1.3})}{|\mathcal{B}_{1.3}|}=11$\\  
Therefore, $\mathcal{B}_{1.1}$ is the initial candidate query batch. After excluding the queries in $\mathcal{B}_{1.1}$ from $\mathcal{Q}$, the next topmost similar candidate query $q_3$ is considered and the following plausible batches are constructed:\\ 
$\mathcal{B}_{2.1}$ $(\mathcal{K}^s=\{Jack,full professor\}$,$\mathcal{K}^u$=$\{course,grade,event\})$\\
$\mathcal{B}_{2.2}(\mathcal{K}^s=\{Jack,course\},\mathcal{K}^u=\{full professor\})$\\
$\mathcal{B}_{2.3}(\mathcal{K}^s=\{full professor,course\},\mathcal{K}^u=\{Jack\})$\\
The costs of the above plausible batches are as follows:\\
$\frac{c(\mathcal{B}_{2.1})}{|\mathcal{B}_{2.1}|}=5.6 ,\frac{c(\mathcal{B}_{2.2})}{|\mathcal{B}_{2.2}|}=13, \frac{c(\mathcal{B}_{2.3})}{|\mathcal{B}_{2.3}|}=11$\\
Hence, $\mathcal{B}_{2.1}$ is the next candidate query batch. Now, if we exclude all queries in $\mathcal{B}_{2.1}$ from $\mathcal{Q}$, $\mathcal{Q}$ becomes empty and the process stops. Finally, plan $\mathcal{P}_3$ is our execution plan as shown in Fig. \ref{fig:plans}(c), which is sub-optimal.
\end{example} 

\subsection{Processing the Candidate Query Batch}

To process a candidate query batch $\mathcal{B}$, we need to compute the results of the shared part $\mathcal{K}^s$ first. Then, we need to merge these shared part results with the non-shared keywords $\mathcal{K}^u$ for $q^{\prime\prime}\in\mathcal{B}$. However, computing the shared part results that can be merged with the unshared part is a non-trivial problem. This is because, the SLCA result roots of the shared part $\mathcal{K}^s$ do not guarantee to be the SLCA result roots for $q^{\prime\prime}\in\mathcal{B}$. The result roots of $q^{\prime\prime}$ may ascend to higher levels in the tree $T$ when the shared part results are merged with the unshared part $q^{\prime\prime}\setminus\mathcal{K}^s$.

Consider the potential shared part results of $\{Jack,$ $academic\}$ for the batch $\mathcal{B}_1$ of plan $\mathcal{P}_3$ as presented in Fig. $\ref{fig:SharingResult}$. Now, when we merge these results with the keyword $course$ for $q_1\in\mathcal{B}_1$, $r^s_1$ contributes to the final SLCA result root, which is $v^{r_1}_{slca}=[0.2]$ (see in Fig. \ref{fig:Coherency}). However, when we merge these potential shared part results with the keyword $grade$ for $q_2\in\mathcal{B}_1$, $r^s_3$ contributes to the final SLCA result and the result root ascends to $v^{r_3}_{slca}=[0]$ (see in Fig. \ref{fig:Coherency}). This indicates that we need to retain $r^s_1$ as well as $r^s_3$ as the actual shared part results, though the root of $r^s_3$ is not the SLCA result root of the shared part $\{Jack, academic\}$, but the root of $r^s_1$ is. Here, we do not need to retain $r^s_2$ as $d(r^s_{2},T) = 7>d(r^s_{3},T) = 6$ and both $r^s_2$ and $r^s_3$ share the same root.

\begin{figure}
\centering
             \includegraphics[height=1.75in,width = 3in]{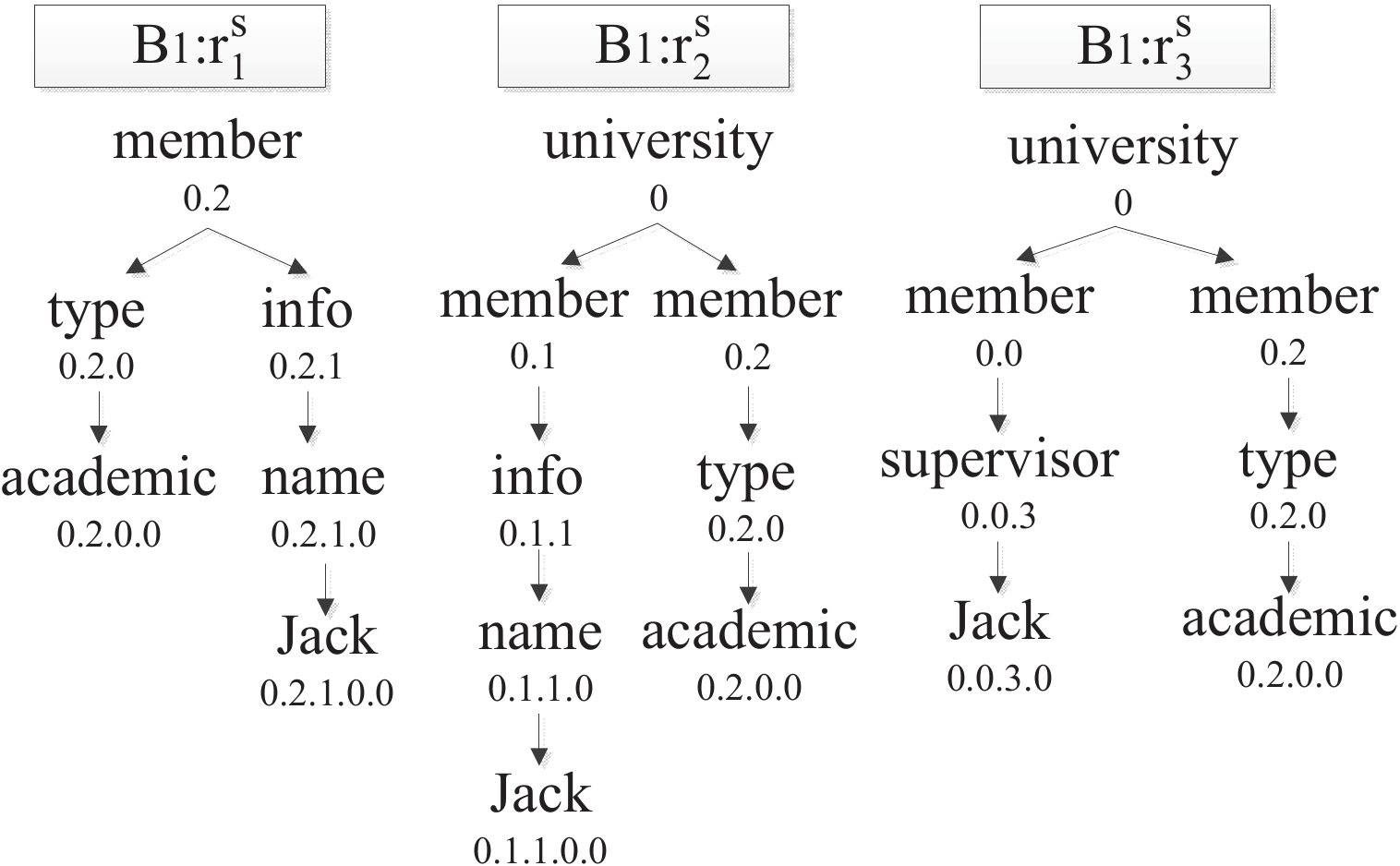}
						  	\caption{Shared part processing for query batch $\mathcal{B}_1$ of $\mathcal{P}_3$.}
							\label{fig:SharingResult}	
	\vspace{-3ex} 
\end{figure}

\begin{lemma}
Assume $q^{\prime\prime}\in \mathcal{B}$ and $v^s_{slca}$ is the SLCA result root of the shared part of $\mathcal{B}$. Then, the tightest match nodes of the final result of $q^{\prime\prime}$ are under $v^s_{slca}$ or under one of the ancestors of $v^s_{slca}$.         
\end{lemma}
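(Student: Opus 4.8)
The plan is to fix an arbitrary query $q^{\prime\prime}\in\mathcal{B}$ and an arbitrary final (tightest SLCA) result $r=(v,\{m_1^{l_1},\dots,m_n^{l_n}\})$ of $q^{\prime\prime}$, and to show that its SLCA root $v$ either coincides with an SLCA result root of the shared part or is an ancestor of one; the statement of the lemma then follows immediately, since every tightest match node $m_i^{l_i}$ lies in the subtree rooted at $v$. First I would record the structure of $q^{\prime\prime}$: by the definition of a batch, $|\mathcal{K}^s|=|q^{\prime\prime}|-1$, so $q^{\prime\prime}=\mathcal{K}^s\cup\{k_u\}$ for a unique unshared keyword $k_u$, and each $m_i^{l_i}$ is a match node of the $i$-th keyword of $q^{\prime\prime}$ that is a descendant of $v$ or $v$ itself (that it is the closest such node is not needed here, only the containment).

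Next I would restrict attention to the shared keywords. Let $M^s=\{m_i^{l_i}:k_i\in\mathcal{K}^s\}$ be the tightest match nodes of $r$ belonging to shared keywords and set $w=lca(M^s)$. Since every node of $M^s$ is a descendant of $v$ (or equals $v$), so is $w$; and since $M^s$ contains exactly one match node per shared keyword, $w$ is an LCA of the shared part, that is, $w\in LCA(\mathcal{K}^s)$. Now, among the LCA nodes of $\mathcal{K}^s$ that lie at or below $w$ — a nonempty finite set, since it contains $w$ — pick a deepest one and call it $v^s_{slca}$. By maximality it has no LCA of $\mathcal{K}^s$ as a proper descendant, so it is an SLCA result root of the shared part; and it lies at or below $w$, hence at or below $v$. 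Thus $v$ either equals $v^s_{slca}$ or is a (proper) ancestor of it, that is, $v=v^s_{slca}$ or $v\prec_a v^s_{slca}$.

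Finally, combining the two facts — all tightest match nodes $m_1^{l_1},\dots,m_n^{l_n}$ of the final result of $q^{\prime\prime}$ lie in the subtree of $v$, and $v$ is $v^s_{slca}$ or an ancestor of it — each tightest match node lies under $v^s_{slca}$ when $v=v^s_{slca}$, and under the ancestor $v$ of $v^s_{slca}$ otherwise. This is exactly the assertion of the lemma.

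The step I expect to be the main obstacle is the middle one — pinning down cleanly that the SLCA root $v$ of a final result of $q^{\prime\prime}$ is linked by the ancestor relation to an SLCA of the shared part. The two ingredients that make it work are that $lca(M^s)$ is a genuine LCA of $\mathcal{K}^s$ sitting at or below $v$, and the structural fact that every LCA of a keyword set is an SLCA of that set or an ancestor of one (obtained by descending to a deepest LCA). A minor amount of care is also needed for the degenerate case $|\mathcal{K}^s|=1$, where the ``SLCA of the shared part'' collapses to a single match node, but the same reasoning carries through without change.
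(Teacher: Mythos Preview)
Your argument is correct. The paper itself states this lemma without proof, so there is nothing to compare against; your proposal supplies the missing justification. The key observation you isolate --- that the LCA $w=lca(M^s)$ of the shared-keyword match nodes inside a final result is a genuine LCA of $\mathcal{K}^s$ lying at or below the final SLCA root $v$, and that any LCA either is an SLCA or has one as a descendant --- is exactly what is needed to conclude that $v$ coincides with some shared-part SLCA $v^s_{slca}$ or is an ancestor of it. The finiteness argument (descend along LCAs until you hit one with no LCA descendant) is the clean way to produce that SLCA. Your handling of the degenerate case $|\mathcal{K}^s|=1$ is also fine, since then every match node of the single shared keyword is trivially an SLCA of the shared part.
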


Therefore, to process a candidate query batch shared part $\mathcal{K}^s$, we find the closest match nodes under the shared part result root $v^s_{slca}$ as well as under all of its ancestors.

\subsubsection{Inter and Intra-Batch Pruning}
Assume the candidate queries in a batch $\mathcal{B}_i$ are sorted based on their similarities to $q_0$ as follows: $\{q_1,q_2,...,q_{l-1}$ $,q_l,...,q_{|\mathcal{B}_i|}\}$. Also, $\mathcal{R}^s_{\mathcal{B}_i}$ is the set of shared part results for $\mathcal{B}_i$. Then, the lower bound of the overall \emph{distances} of the results $\mathcal{R}_{\mathcal{B}_i}$ of $\mathcal{B}_i$, denoted by $\underline{d}(\mathcal{R}_{\mathcal{B}_i},T)$, is $min\{d(r^s\in \mathcal{R}^s_{\mathcal{B}_i},T)\}$. Now, the upper bound of the \emph{cohesiveness} of $\mathcal{R}_{\mathcal{B}_i}$, denoted by $\overline\theta(\mathcal{R}_{\mathcal{B}_i},T)$ is computed using $\underline{d}(\mathcal{R}_{\mathcal{B}_i},T)$ in Eq. 8. The upper bound of the $\sigma$ of a result $r_l$ of query $q_l\in \mathcal{B}_i$ is: 

\begin{equation}
\overline{\sigma}(r_l,q_l,T) = \lambda(q_0,q_l) \times \overline\theta(\mathcal{R}_{\mathcal{B}_i},T)
\label{eq:BatchPruning}
\end{equation}
       
Assume $\mathcal{B}^{\prime}_i= \{q_1,q_2,...,q_{l-1}\}$ and $\mathcal{R}^*$ is the $k$ results of $\mathcal{Q}^\prime = \{\mathcal{B}_1,...,\mathcal{B}_{i-1},\mathcal{B}_{i}^\prime\}$. Also, $\sigma^{min}$ is the min-score for $\mathcal{R}^*$. Then, we can stop processing the candidate queries $\{q_l,...,q_{|\mathcal{B}_i|}\}\in\mathcal{B}_i$ if $\overline{\sigma}(r_l,q_l,T) < \sigma^{min}$. We call the above pruning technique as \emph{intra-batch} pruning if $l>1$, otherwise, we call it \emph{inter-batch} pruning (prune the entire batch).

\begin{algorithm}[tb]
\SetKwInOut{Input}{Input}
\SetKwInOut{Output}{Output}
 \Input{User Query $q_0$, Tuning Parameter $\alpha$, Keyword Inverted Lists $\mathcal{S}$}
 \Output{Top-k Semantically Related Result $\mathcal{R^*}$ }

Repeat Lines 1-6 of Algorithm 1.\par
\While{ $q^\prime \gets getNext(\mathcal{Q}) \neq null$}
{
	\If{$\mathcal{R}^*\cdot getSize()=k$ \textbf{and} $q^\prime\cdot sim<\sigma^{min}$}
	{
		\textbf{break} \tcp*{\scriptsize{inter-query pruning}}\par		
	} 		
	\For{ i = 1 $\to$ n  }
	{
		$\mathcal{K}^s \gets q^\prime \setminus k_i$\par
		$\mathcal{B} \gets getBatch(\mathcal{Q},\mathcal{K}^s)$\par
		$\mathcal{K}^u \gets getUnique(\mathcal{B},\mathcal{K}^s)$\par		
		$\mathcal{B}.add(\mathcal{B}_i, c(\mathcal{K}^s,\mathcal{K}^u))$\par											
	}	
	$\mathcal{B}^* \gets getMinCost(\mathcal{B})$; $\mathcal{S}^\prime \gets \{\}$;\par
			
	\small{retrieve  Inverted Lists $\forall k_i \in \mathcal{B}^*.\mathcal{K}^s$ into $S^\prime$};\par	
	$\mathcal{R}^s \gets sharedPartComputation(\sigma^{min}, \alpha, q^{\prime\prime}, \mathcal{S}^\prime)$\par
	
	\While{ $q^{\prime\prime} \gets getNext(\mathcal{B}^*) \neq null$ }
	{
		$\overline{r\cdot score} \gets q^{\prime\prime}\cdot sim \times \overline{coh(\mathcal{R}^{s}, T)}$; \par
	  \If{$\mathcal{R}^*\cdot getSize()=k$ \textbf{and} $\overline{r\cdot score} \textless \sigma^{min}$}
			{\textbf{break}\tcp*{\scriptsize{inter and intra-batch pruning}}}
		\If{$\mathcal{R}^*\cdot getSize()=k$}
	  {
			$root\gets \mathcal{R}^*\cdot root(); \sigma^{min}\gets root\cdot score$;\par
		}
		$k^u \gets q^{\prime\prime} \setminus \mathcal{B}^*.\mathcal{K}^s$;\par
		$S_1 \gets retrieveKeywordInvertedList(k^u, \mathcal{S})$;\par
		$\mathcal{R}^* \gets mergeResults(\mathcal{R}^*, \mathcal{R}^s, \sigma^{min}, \alpha, q^{\prime\prime}, S_1)$;\par		
	}	
	$\mathcal{Q} \gets \mathcal{Q} \setminus \mathcal{B}^*$\par
 }
\Return{$\mathcal{R^*}$}
 
\caption{BA-QP}
\label{alg:multipleprocessing}
\end{algorithm}

\subsubsection{The Framework}
Algorithm \ref{alg:multipleprocessing} presents the framework for \textbf{ba}tch \textbf{q}uery \textbf{p}rocessing technique (BA-QP) of the no-but-semantic-match problem. Similar to first 6 lines of Algorithm \ref{alg:queryprocessingalgorithm}, it generates the candidate queries, measures their similarity, sorts them and does the initializations in line 1 to apply \emph{inter-query} pruning first. That is, at each step of iteration, we take the topmost similar query $q^\prime \in \mathcal{Q}$ in line 2 and stop processing the candidate queries in $\mathcal{Q}$ as soon as we find a query $q^\prime\in \mathcal{Q}$ such that $|\mathcal{R}^*|=k$ and $q^\prime.sim<\sigma^{min}$ as given in lines 3-4. 

Otherwise, we construct the sub-optimal candidate query batch $\mathcal{B}^*$ for $q^\prime$ as given in lines 5-10. In line 12, the shared part $\mathcal{K}^s$ of the candidate query batch $\mathcal{B}^*$ is processed and its results are stored in $\mathcal{R}^s$. Then, for each query $q^{\prime\prime} \in \mathcal{B}^*$, the upper bound of its actual results' score $\overline{r.score}$ is computed in line 14. In lines 15-16, we apply \emph{inter and intra-batch} pruning where we stop processing the candidate query batch $\mathcal{B}^*$  if $|\mathcal{R}^*|=k$ and $\overline{r.score}<\sigma^{min}$. 
Otherwise, if $|\mathcal{R}^*|=k$, we update $\sigma^{min}$ by reading the root entry of the heap $\mathcal{R}^*$ in lines 17-18. In lines 19-20, the unshared keyword part $k^u$ of the batch query $q^{\prime\prime}$ and the corresponding inverted list $S_1$ are retrieved. The procedure $mergeResults$ in line 21 generates the final results for $q^{\prime\prime}$ by merging shared part results $\mathcal{R}^s$ with the unshared part $k^u$ and inserts the promising results into $\mathcal{R}^*$. Finally, the queries $q^{\prime\prime} \in \mathcal{B}^*$ are excluded from the $\mathcal{Q}$ as pseudocoded in line 22. The above steps continue until $\mathcal{Q}$ becomes empty.

\begin{algorithm}[tb]
\SetKwInOut{Input}{Input}
\SetKwInOut{Output}{Output}
 \Input{$\sigma^{min}, \alpha, q^\prime, \mathcal{S}^\prime$}
 \Output{Shared Result $\mathcal{R}^s$ }
$\mathcal{R}^s\gets $null;\par
$m_a \gets getAnchor(\{getNext(S_i), \forall i\in[1,n-1]\})$;\par
\While{ $m_a \neq null$ }
{	
  $m_i \gets closest(m_a,S_i), \forall i \in [1,n-1] \text{ and } i \neq a$;\par
	$v_{slca} \gets lca(m_1,...,m_{n-1})$;\par 
	$\mathcal{A} \gets v_{slca} \cup getAncestors(v_{slca})$;\par
		\While{$v_{slca}^s \gets getNext(\mathcal{A}) \neq null$}
		{		  
			\If{$v_{slca}^s \notin R^{s}$}
			{			  
				\For{$i = 1 \to  n-1$ }
				{					
					$m^{li}_i \gets getTight(S_i,S_i.cursor)$;\par
					$\small{d_i \gets getDist(v_{slca}^s,m^{li}_i);r.d\gets r.d+d_i}$;\par
					$r\cdot score\gets q^\prime\cdot sim\times\frac{1}{log_\alpha(r.d+1)+1}$;\par
					\If{$r\cdot score<\sigma^{min}$}
				  {
					  stop reading lists, jump to line 6.					 
				  }
				}
				$r \gets (v^s_{slca},\{m^{li}_i,\forall i \in [1,n]\})$;\par
				$\mathcal{R}^s\cdot insert$($r$);\par				
			}		
		}					
			
	$m_a \gets getAnchor(\{getNext(S_i), \forall i \in [1,n-1]\})$;\par
}	  

\Return{$\mathcal{R}^s$}
 \caption{BA-QP (sharedPartComputation)}
\label{alg:sharedprocessing}
\end{algorithm}

The details of the $sharedPartComputation$ method in line 12 of Algorithm \ref{alg:multipleprocessing} is presented in Algorithm \ref{alg:sharedprocessing}. In lines 2-5, we compute the result root $v_{slca}$. Then, we compute all the ancestors of $v_{slca}$ which is implemented in the function $getAncestors$ and put them in the potential result roots $\mathcal{A}$ in line 6. For each potential result root $v_{slca}^s\in\mathcal{A}$, we check if it is already processed and is in memory $\mathcal{R}^s$ in line 8. If $v_{slca}^s\not\in \mathcal{R}^s$, this result has not been processed and thus, we compute the closest nodes under $v_{slca}^s$ in the inverted lists $S_i, \forall i \in [1,n-1]$, their distance to $v_{slca}$, and their score (lines 9-12). In lines 13-14, we check if the score can beat $\sigma^{min}$ (intra-query pruning). In lines 15-16, $r$ is updated and inserted into $\mathcal{R}^s$ because it is promising. In line 17, we update the anchor match node $m_a$ for the next iteration.

The details of the $mergeResults$ method in line 21 of Algorithm \ref{alg:multipleprocessing} is presented in Algorithm \ref{alg:merging}. In lines 2-5, we compute the potential result root $v^u_{slca}$ which is yet to be confirmed. If the previous result root $v_{slca}$ is not an ancestor of $v^{u}_{slca}$, then $v_{slca}$ is confirmed as the SLCA result root in line 6. Thus, from $\mathcal{R}^s$, we retrieve the precomputed closest match nodes $m^{li}_i, \forall i \in [1,n-1]$ of the shared part which is implemented in function $retrieveNode$ in line 8 and the distance of the shared part which is implemented in the function $retrieveDist$ in line 9. Then, in lines 10-11 we compute the closest match node $m^{ln}_n$ for the unshared part and its distance to $v_{slca}$ and finally, add this distance to the total distance of $r$. In lines 13-15, we compute $r.score$ and if $r.score > \sigma^{min}$, the result is added to $\mathcal{R}^*$. We update $v_{slca}$ with the current result root $v^{u}_{slca}$ if $v^{u}_{slca}\not\prec_a v_{slca}$ in lines 16-17. Then, the anchor match node $m_a$ is updated in line 18. In lines 19-21, if the last result root is SLCA and is not considered, similar steps are taken and the corresponding result is inserted into $\mathcal{R}^*$ if it is promising.

\begin{algorithm}[tb]
\SetKwInOut{Input}{Input}
\SetKwInOut{Output}{Output}
 \Input{$\mathcal{R}^*, \mathcal{R}^s, \sigma^{min}, \alpha, q^{\prime\prime}, S_1$}
 \Output{Top-k Result $\mathcal{R}^*$ }
$r \gets null$;$v_{slca} \gets null$; \par
$m_a \gets getAnchor(\{getNext(S_1),getNext(S_2)\})$;\par
\While{ $m_a \neq null$ }
{	
  $m_i \gets closest(m_a,S_i), \forall i \in [1,2] \text{ and } i \neq a$;\par
	$v^{u}_{slca} \gets lca(m_1,m_2)$;\par	
	\If{$r \neq null$ \textbf{and} $v_{slca} \not\prec_a v^{u}_{slca}$}
	{		
			\If{$v_{slca} \in \mathcal{R}^s$}
			{
			  $\small{m^{li}_i\gets retrieveNode(r,\mathcal{R}^s), \forall i \in [1,n-1]}$;\par
				$d\gets retrieveDist(r,\mathcal{R}^s)$;\par
				$m^{ln}_n\gets getTight(S_1,r.cursor_1)$;\par 
				$d\gets d + getDist(v_{slca},m^{ln}_n)$; \par
				$r \gets (v_{slca},\{m^{li}_i,\forall i \in [1,n]\})$;\par
							
				$r\cdot score\gets q^{\prime\prime}\cdot sim\times\frac{1}{log_\alpha(d+1)+1}$;\par		
				\If{$r\cdot score>\sigma^{min}$}
				{
					update top-k list $\mathcal{R}^*$ with $r$; \par
				}			
			}
	
		\ElseIf{$v^{u}_{slca} \not\prec_a v_{slca}$}
		{
			$v_{slca} \gets v^{u}_{slca}$; $r.add(S_1.cursor)$;\par 
		}	
	}
	
	$m_a \gets getAnchor(\{getNext(S_i), \forall i \in [1,2]\})$;\par
}

\If{$v^{u}_{slca} \not\prec_a v_{slca}$}
{
	make $r$ by repeating lines 8-12;\par
	score $r$ and insert to $\mathcal{R}^*$ if promising; 
}	  	  

\Return{$\mathcal{R}^s$}
 \caption{BA-QP (mergeResults)}
\label{alg:merging}
\end{algorithm}

\begin{table*}
\caption{A Sample of test queries for IMDB and DBLP datasets}
\vspace{-1ex}
\begin{tabular}{ |l|l|l|l|l|l| }
\hline
\multicolumn{6}{ |c| }{Test Query Set} \\
\hline
\# & IMDB & $|\mathcal{Q}|$ & \# & DBLP & $|\mathcal{Q}|$ \\ \hline
\hline
$q_0.1$ & ghost, badgering, movie & 12 & $q_0.1$ & exigency, analysis, system & 252  \\ \hline
$q_0.2$ & battler, spanish, drama & 272 & $q_0.2$ & academic, fraudulence, threat & 4500 \\ \hline
$q_0.3$ & slump, federal, reserve, harshness, documentary & 285 & $q_0.3$ & information, ordination, track & 360 \\ \hline
$q_0.4$ & reproach, trespasser, fight, drama & 357 & $q_0.4$ & involvement, neuroscience, indicant, information & 350\\ \hline
$q_0.5$ & treasonist, zombie, shiver & 1295 & $q_0.5$ & mutter, alarm, analysis & 2079 \\ \hline
$q_0.6$ & research, outlander, universe & 3528 & $q_0.6$ & deceit, type, analysis & 8190 \\ \hline
$q_0.7$ & mass murder, horror, perfidy & 11900 & $q_0.7$ & online, trust, selling & 276  \\ \hline
$q_0.8$ & victory, exaltation, drama & 851 & $q_0.8$ & aftermath, type, analysis, system & 225 \\ \hline
$q_0.9$ & partiality, perfidy, fear, english & 442 & $q_0.9$ & psychopathy, symptom, visualization, science & 3276 \\ \hline
$q_0.10$ & criminal, overcharge, loneliness & 12240 & $q_0.10$ & interloper, search , analysis, system & 352  \\ \hline
$q_0.11$ & criminal, fear, spanish & 1152 & $q_0.11$ & trace, audit , system & 238 \\ \hline
$q_0.12$ & victory, exuberance, drama, fight & 437 & $q_0.12$ & trace, type, analysis & 255 \\ \hline
\end{tabular}
\vspace{-3ex} 
\label{tbl:sample}
\end{table*}

\section{Experiments}
This section evaluates the effectiveness and the efficiency of our approach for solving no-but-semantic-match problem in XML keyword search. To compare our results, we adapt and implement the closely related existing XOntoRank method \cite{FarfanHRW09} which uses ontology to enhance the XML keyword search on medical datasets. The adaptation is achieved as follows. We create a hash map for the ontologically relevant keywords to the original keywords by using ontological knowledge base. Then for each keyword $k_i \in q_0$ we look up the hash map and find the candidates and put them in the set of candidate keywords $\mathcal{K}$. Then we create Onto-DIL for each keyword $k_i \in q_0$ and all of its associated candidates in $\mathcal{K}$. Afterward, we compute the node score for each entry based on the relevance degree of the original keyword $k_i$ to the candidate keywords in $\mathcal{K}$. Finally, the inverted list is added to Onto-DIL $\mathcal{S}_{onto}$. After creating $\mathcal{S}_{onto}$, we compute LCAs by using Onto-DIL index and for each result $r$, the score is computed using the node score of its matched nodes. If the result score $r.score$ is better than the threshold $\sigma^{min}$, the result is added to top-k list $\mathcal{R}^*$. The above is pseudocoded in Algorithm \ref{alg:XOnto-QP}. We term this method as XO-QP in this paper.

\begin{algorithm}[tb]
\SetKwInOut{Input}{Input}
\SetKwInOut{Output}{Output}
 
\Input{User Query $q_0$, Data $T$, Ontology $O$}
\Output{Top-k Semantically Related Results $\mathcal{R}^*$}
$H \gets createHashMap(q_0,O)$;\par
$\sigma^{min} \gets 0$;\par
\ForEach{$k_i\in q_0$}
{
	 $\mathcal{K} \gets findRelevant(k_i,H)$;\par	 
	 $S^\prime \gets createDIL(k_i \cup \mathcal{K}, T)$;\par
	 $S^\prime.add(NS(k_i,k)), \forall k \in \mathcal{K}$;\par	
	 $\mathcal{S}_{onto} \gets \mathcal{S}_{onto} \cup S^\prime$;\par 
}

\While{inverted lists in $\mathcal{S}_{onto}$ $\neq$ null}
{ 
	$r \gets computeLCA(\mathcal{S}_{onto})$;\par
	$r.score \gets \sum{m.NS}, \forall m \in r$;\par
	\If{$r.score > \sigma^{min}$}
	{
		 update top-k list $\mathcal{R}^*$ with $r$;\par
		 $update(\sigma^{min})$;\par	
	}	
}

\Return {$\mathcal{R}^*$}

\caption{XO-QP}
\label{alg:XOnto-QP}
\end{algorithm}     

\vspace{-1ex}
\subsection{Settings}
\textbf{Datasets and Queries:} We evaluate our algorithms on two real datasets: (a) IMDB 170MB, that includes around 150,000 recent movies and TV series. (b) DBLP 650MB, which contains publications in major journals and proceedings. We use a wide range of queries to test the efficiency of our proposed methods for each dataset. For each test query, we choose keywords which satisfy the followings: (a) a keyword should be used often by the users and (b) a non-exiting keyword should have some semantic counterparts, which have direct mapping in the data source. The test queries have no-but-semantic-match problem. Table \ref{tbl:sample} presents a part of the test queries called sample queries for detailed analysis of efficiency and effectiveness of our methods. 
\noindent\\
\textbf{Environment:} All algorithms are implemented in $C\#$ and the experiments are conducted on a PC with 3.2 GHz CPU, 8 GB memory running 64-bit windows 7.   

\begin{figure*}

\centering
        \subfloat[]{\label{LabelA}
  \includegraphics[width=2.1in,height=1.3in]{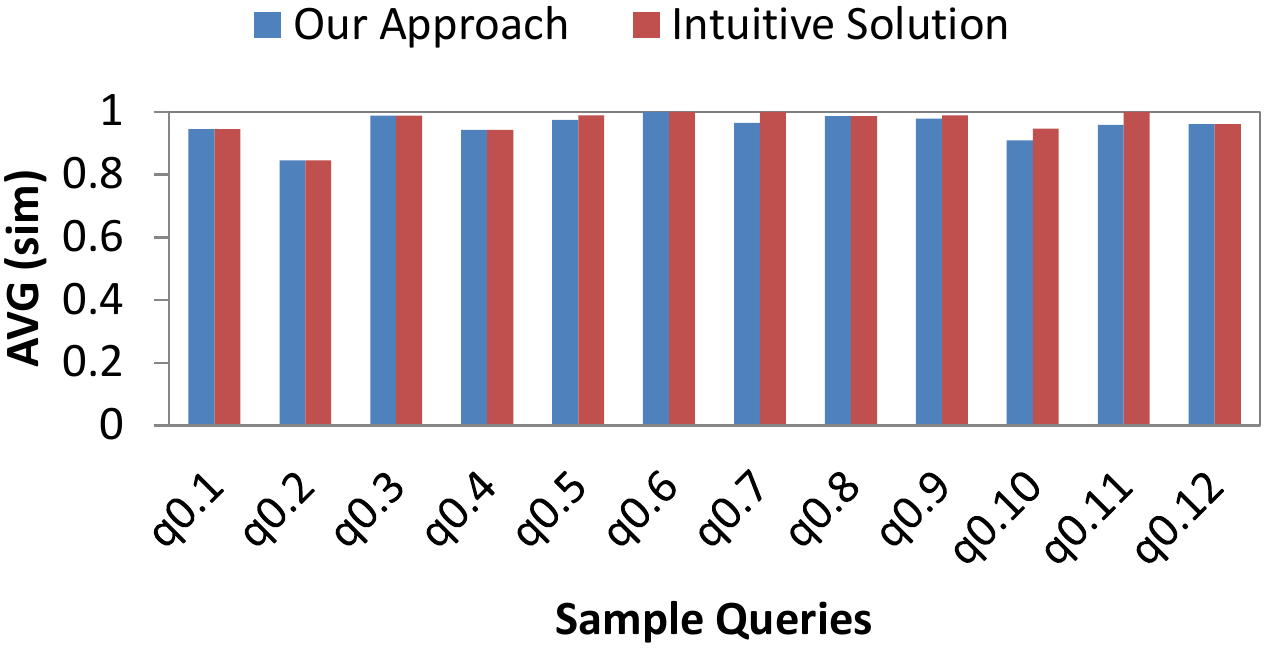}        
}
\hfill
\subfloat[]{\label{labelB}
  \includegraphics[width=2.1in,height=1.3in]{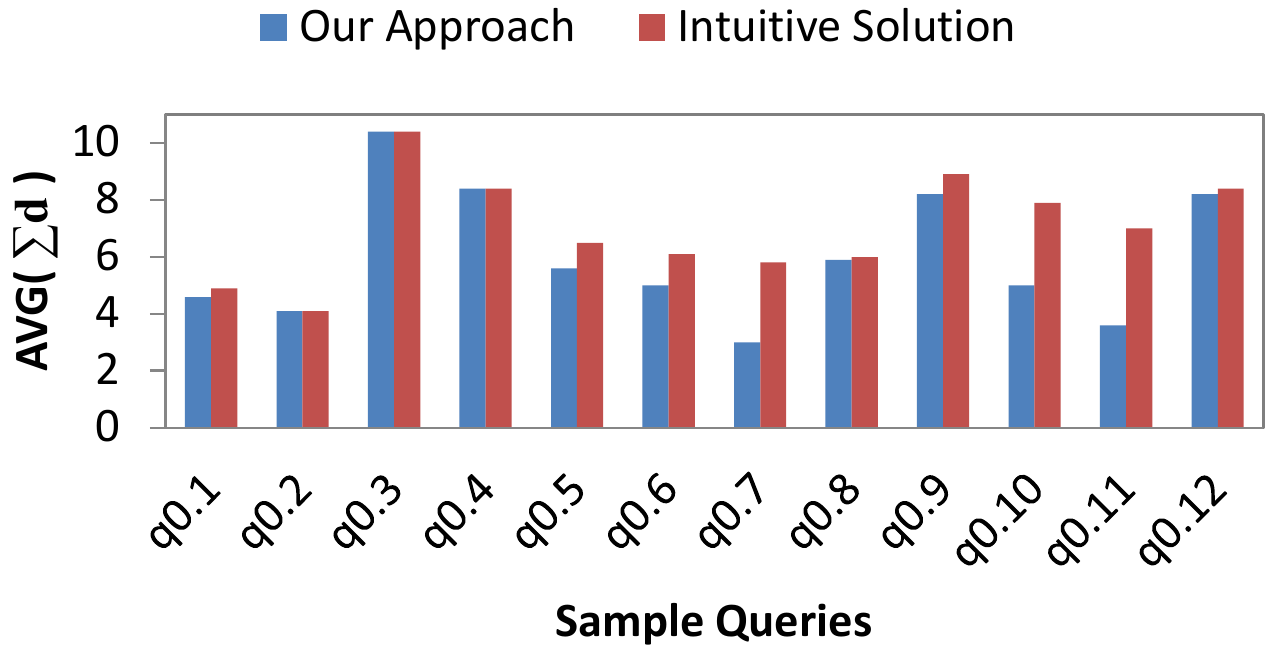}
}
\hfill
\subfloat[]{\label{labelC}%
  \includegraphics[width=2.1in,height=1.3in]{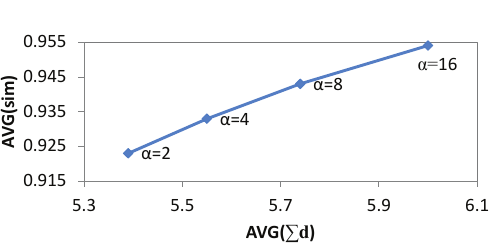}
}		
        
				\caption{(a) Average similarity of the candidate queries and (b) average distance of top-10 results for our approach and intuitive solution on IMDB; (c) Average query similarity versus average result distance with varying $\alpha$.}
				\label{fig:IMDBEffect}
\vspace{-1ex} 
\end{figure*}

\begin{figure*}

\centering
        \subfloat[]{\label{LabelA}
  \includegraphics[width=2.1in,height=1.3in]{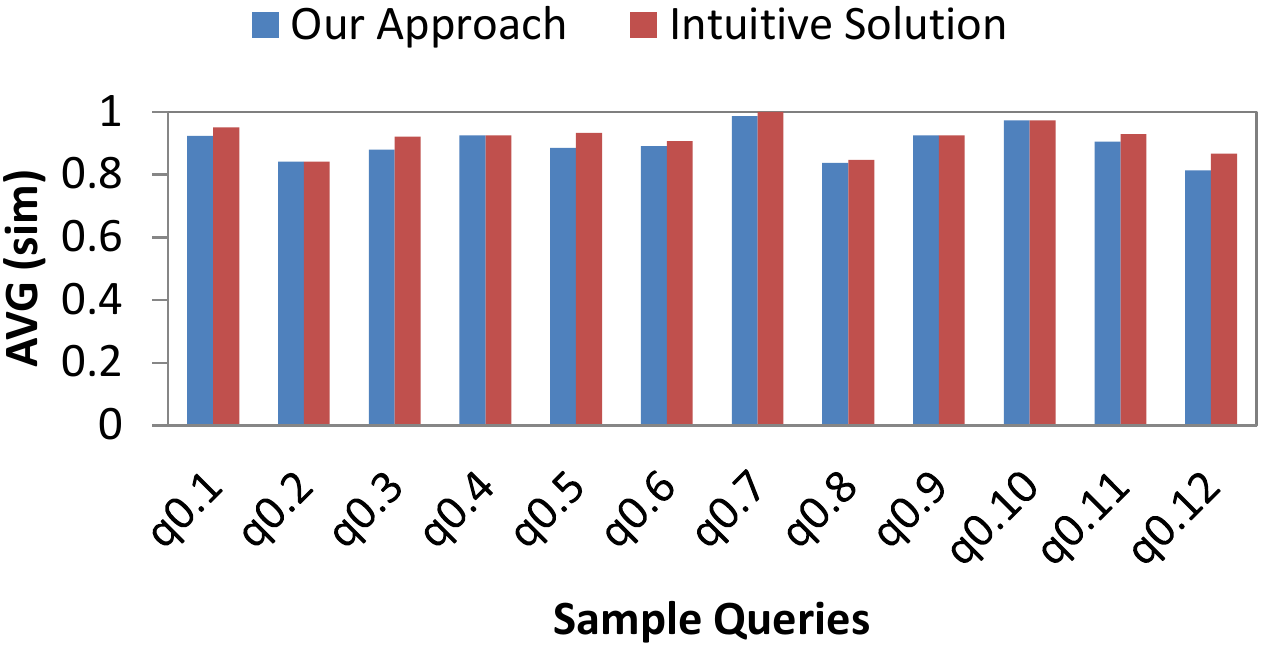}   
}
\hfill
\subfloat[]{\label{labelB}
   \includegraphics[width=2.1in,height=1.3in]{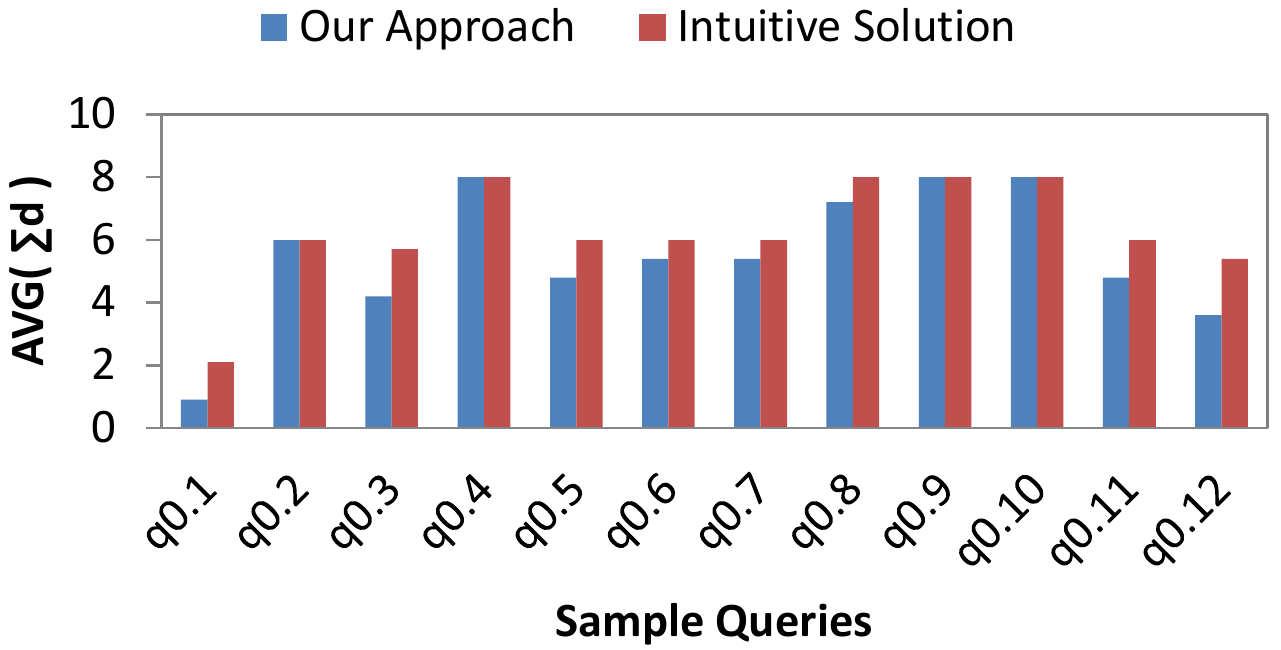}
}
\hfill
\subfloat[]{\label{labelC}%
   \includegraphics[width=2.1in,height=1.3in]{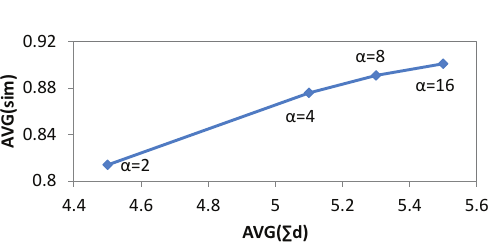}
}		
        
				\caption{(a) Average similarity of the candidate queries and (b) average distance of top-10 results for our approach and intuitive solution on DBLP; (c) Average query similarity versus average result distance with varying $\alpha$.}
				\label{fig:DBLPEffect}
\vspace{-1ex} 
\end{figure*}

\subsection{Effectiveness}
This section evaluates the effectiveness of our approach from different perspectives.
\subsubsection{Our Approach versus Intuitive Solution}
When a user issues a query and faces an empty result, she might try to change the initial query to obtain some results. However, being unfamiliar with the data source, the user is likely to think of a few synonyms of the keywords of the initial query. Assume the user is a kind of expert and succeeds in constructing the 10 topmost similar queries when changing the initial query. These queries may not produce good results but we consider this \emph{intuitive solution} as a benchmark to gauge the effectiveness of the technique we propose. Here, we compare the top-10 results retrieved by our approach which processes all possible candidate queries with the \emph{intuitive solution} which processes only the 10 topmost similar queries for all test queries given in Table 1. In Fig. $\ref{fig:IMDBEffect}$ (a) and Fig. $\ref{fig:DBLPEffect}$ (a), the average candidate query similarity of the top-10 results for the two approaches are presented for each dataset. Clearly for all the test queries, the average similarity of the results retrieved by our approach is very close to the average candidate query similarity of the results of the \emph{intuitive solution}. Fig. $\ref{fig:IMDBEffect}$ (b) and Fig. $\ref{fig:DBLPEffect}$ (b) show, on average, smaller number of edges (indicate better cohesiveness) for the top-10 results retrieved from our approach compared to the \emph{intuitive solution}. This shows that an ad-hoc approach, even if it is suggested by an expert, is unlikely to find results that are superior to those provided by the systematic method suggested in this study. In comparison with the \emph{intuitive solution}, our approach also does not retrieve results from the candidate queries that are far from the user given initial query in terms of candidate query similarity. In addition of it, our approach provides the flexibility of trading of the above two aspects, e.g., sometimes users may prioritize cohesive results over similarity to the original query. The tuning parameter $\alpha$ in our approach can be used to balance the weight of these two aspects. Fig. $\ref{fig:IMDBEffect}$(c) and Fig. $\ref{fig:DBLPEffect}$(c), show that as $\alpha$ becomes smaller, the similarity of the results decreases because more priority is given to result cohesiveness and therefore, the cohesiveness score of the results improves. That is, by setting $\alpha$ to smaller value, our approach effectively retrieves more cohesive results while the similarities of their contributing candidate queries are not very far from the initial query.                              

\begin{table*}
\centering
\caption{Ranking of $r_1$ is tolerant to $\alpha$ as it is better than $r_4$ in terms of both query similarity and result cohesiveness}
\begin{tabular}{ |l|l|l|l|l|l|l|l| }
\hline
$\alpha$ & $d(r_1,T)$ & $\lambda(q_0,q_1)\times\theta(r_1,T)$ & $\sigma(r_1,q_1,T)$ & $d(r_4,T)$ & $\lambda(q_0,q_4)\times\theta(r_4,T)$  & $\sigma(r_4,q_4,T)$ & $\Delta(r_1,r_4)$ \\ \hline
\hline
$2$ & 7 & $0.9167 \times 0.25$   & 0.2291  & 8 & $0.8462 \times 0.2398$  & 0.2029  & 0.0262\\ \hline
$3$ & 7 & $0.9167 \times 0.3456$ & 0.3168  & 8 & $0.8462 \times 0.3333$  & 0.282   & 0.0348\\ \hline
$4$ & 7 & $0.9167 \times 0.4$    & 0.3666  & 8 & $0.8462 \times 0.3868$  & 0.3273  & 0.0393\\ \hline
$8$ & 7 & $0.9167 \times 0.5$    & 0.4583  & 8 & $0.8462 \times 0.4862$  & 0.4114  & 0.0469\\ \hline
$16$ & 7 & $0.9167 \times 0.5714$& 0.5238  & 8 & $0.8462 \times 0.5578$  & 0.472   & 0.0518\\ \hline
\end{tabular}
\label{tbl:case1}
\end{table*} 

\begin{table*}
\centering
\caption{Trading off candidate query similarity and result cohesiveness in $r_2$ and $r_7$}
\begin{tabular}{ |l|l|l|l|l|l|l|l| }
\hline
\centering
$\alpha$ & $d(r_2,T)$ & $\lambda(q_0,q_2)\times\theta(r_2,T)$ & $\sigma(r_2,q_2,T)$ & $d(r_7,T)$ & $\lambda(q_0,q_7)\times\theta(r_7,T)$  & $\sigma(r_7,q_7,T)$ & $\Delta(r_2,r_7)$ \\ \hline
\hline
$2$ & 11 & $0.8462 \times 0.2181$  & 0.1845 & 7 & $0.7549 \times 0.25$    & 0.1887 &-0.0042\\ \hline
$3$ & 11 & $0.8462 \times 0.3065$  & 0.2593 & 7 & $0.7549 \times 0.3456$  & 0.2608 &-0.0015\\ \hline
$4$ & 11 & $0.8462 \times 0.3581$  & 0.303  & 7 & $0.7549 \times 0.4$     & 0.3019 & 0.0011\\ \hline
$8$ & 11 & $0.8462 \times 0.4555$  & 0.3854 & 7 & $0.7549 \times 0.5$     & 0.3774 & 0.008 \\ \hline
$16$ & 11 & $0.8462 \times 0.5273$ & 0.4462 & 7 & $0.7549 \times 0.5714$  & 0.4313 & 0.0149\\ \hline
\end{tabular}
\label{tbl:case2}
\end{table*} 

\subsubsection{Effect of Tuning Parameter}
Now, we provide two fine-grained case studies as follows: (1) the influence of the tuning parameter $\alpha$ on the ranking of the retrieved results and (2) trading off between query similarity and result cohesiveness in the final top-$k$ results based on $\alpha$.

\textbf{Case Study-1:} This case study demonstrates that our approach is tolerant to the settings of $\alpha$ if one result beats another one in terms of both query similarity and result cohesiveness. This is also expected as the user might explore the top cohesive results with better similarity first. Consider the queries given in Example 7. Here, we extract $r_1$ from $q_1$ and $r_4$ from $q_4$ as shown in Fig. \ref{fig:Coherency}. From Table \ref{tbl:case1}, we see that $r_1$ will always be ranked better than $r_4$ as $r_1$ has the higher overall score $\sigma$ than $r_4$ for all settings of $\alpha$. 

\textbf{Case Study-2:} This case study demonstrates how the user can trade off between query similarity and the result cohesiveness based on $\alpha$. Assume a user would like to explore the results with better cohesiveness first than those with higher similarity. Consider a candidate query $q_7=\{Jack,academic,position\}$ with $\lambda(q_0,q_7)=0.7549$ and a result $r_7$ from $q_7$ with $d(r_7,T)=7$. Now, we compare it with $r_2$ with $d(r_2,T)=11$ from $q_2$ with $\lambda(q_0, q_2)=0.8462$ as given in Example 7. From Table \ref{tbl:case2}, we observe that $r_7$ outranks $r_2$ for $\alpha=[2,3]$. Now, a user needs to set $\alpha>3$ to explore $r_2$ before $r_7$ in the result list by putting more emphasis on query similarity than result cohesiveness.

\subsubsection{Our Approach versus XO-QP}
In this section, we provide a detailed study of the top candidate query obtained from the top-10 results for different methods. The methods include our proposed method using the tuning parameter $\alpha = 2$ and $\alpha = 16$, intuitive solution (which only executes the top similar queries), XO-QP (which takes the result distance fixed), and XO-QP (which apply the result distance into ranking). In fact, the top-1 query, is the candidate query which has the biggest number of results among top-10 results. The number of results from a candidate query to the total number of top-k results is defined as $\eta(\mathcal{R}^*,q^\prime)$. Therefore, we provide in Table \ref{table:TopQueryIMDB} and Table \ref{table:TopQueryDBLP} the top-1 candidate queries on IMDB and DBLP datasets respectively. These queries have the maximum $\eta(\mathcal{R}^*,q^\prime)$ among the candidate queries which contribute to top-k results. Clearly, our method that set $\alpha$ to $2$ has the minimum $d(r,D)$ among other methods in both datasets. That's because if we set $\alpha$ to a lower number, we put more emphasis on result cohesiveness and therefore, the results with better cohesiveness are ranked higher. However, when we look at our method using $\alpha = 16$, the similarity of the top query in most cases is better comparing to our method using $\alpha = 2$. In the intuitive solution, the top candidate query has the maximum similarity in most cases. Since the intuitive solution only generates top similar queries to retrieve the semantically related results, the similarity of the top candidate query is the maximum in most cases. The XO-QP using fixed distance for ranking the results have the worst cohesiveness score in most cases. That's because XO-QP uses XRank\cite{GuoSBS03} for retrieving results and therefore, does not limit the results to SLCAs. Moreover, when we do not reflect the distance of the results into the ranking, more sparse results may be ranked higher and inserted to Top-$k$ list. However, when we reflect the distance into ranking of XRank, then the cohesiveness of the results improves.    
                    
\begin{table*}
\caption{Top candidate queries for different methods in IMDB }
\begin{tabular}{ |l|l|l|l|l|l| }
\hline
Original Query & Method & Top-1 Candidate Query & $\eta(\mathcal{R}^*,q^\prime)$ & $\lambda(q_0,q^\prime)$ & $d(r\in q^\prime,T)$ \\ \hline
\hline
$q_0.4$ & Our Approach($\alpha=2$)  & blame, fight, drama, intruder & 0.1 & $0.94$ & 8  \\ \hline
$q_0.4$ & Our Approach($\alpha=16$) & blame, fight, drama, intruder & 0.1 & $0.94$ & 8 \\ \hline
$q_0.4$ & Intuitive Solution  & blame, fight, drama, intruder & 0.1 & $0.94$ & 8  \\ \hline
$q_0.4$ & XO-QP (No Dist) & blame, fight, drama, squatter & 0.5 & 0.92 &  12 \\ \hline
$q_0.4$ & XO-QP (Dist) & blame, fight, drama, intruder & 0.3 & 0.94 &  11 \\ \hline
$q_0.7$ & Our Approach($\alpha=2$)  & massacre, panic, betrayal & 0.2 & $0.87$ & 3  \\ \hline
$q_0.7$ & Our Approach($\alpha=16$) & slaughter, treachery, revulsion & 0.1 & 1 & 6 \\ \hline
$q_0.7$ & Intuitive Solution & slaughter, treachery, revulsion  & 0.1 & 1 & 6  \\ \hline
$q_0.7$ & XO-QP (No Dist) & uxoricide, apprehension, betrayal & 0.1 & 0.8 & 8 \\ \hline
$q_0.7$ & XO-QP (Dist) & hit, panic, betrayal & 0.3 & 0.8 & 3 \\ \hline
$q_0.9$ & Our Approach($\alpha=2$)  & call, betrayal, fear, english & 0.1 & $0.85$ & 4  \\ \hline
$q_0.9$ & Our Approach($\alpha=16$) & fancy, treachery, fear, english & 0.1 & 1 & 8 \\ \hline
$q_0.9$ & Intuitive Solution  & fancy, treachery, fear, english & 0.1 & 1 & 8  \\ \hline
$q_0.9$ & XO-QP (No Dist) & bias, treachery, fear, english & 0.5 & 0.93 & 12 \\ \hline
$q_0.9$ & XO-QP (Dist) & tilt, betrayal, fear, english & 0.4 & 0.93 & 11 \\ \hline
$q_0.10$ & Our Approach($\alpha=2$)  & murderer, extortion, blood & 0.3 & $0.76$ & 3  \\ \hline
$q_0.10$ & Our Approach($\alpha=16$) & crook, extortion, desolation & 0.1 & 0.94 & 7 \\ \hline
$q_0.10$ & Intuitive Solution & crook, extortion, desolation & 0.1 & 0.94 & 7  \\ \hline
$q_0.10$ & XO-QP (No Dist) & outlaw, extortion, desolation & 0.3 & 0.94 & 9 \\ \hline
$q_0.10$ & XO-QP (Dist) & outlaw, extortion, desolation & 0.3 & 0.94 & 9 \\ \hline
$q_0.12$ & Our Approach($\alpha=2$)  & enthusiasm, drama, fight, triumph & 0.1 & 1 & 8  \\ \hline
$q_0.12$ & Our Approach($\alpha=16$) & enthusiasm, drama, fight, triumph & 0.1 & 1 & 8 \\ \hline
$q_0.12$ & Intuitive Solution & enthusiasm, drama, fight, triumph & 0.1 & 1 & 8  \\ \hline
$q_0.12$ & XO-QP (No Dist) & madness, drama, fight, triumph & 0.6 & 0.94 & 12 \\ \hline
$q_0.12$ & XO-QP (Dist) & madness, drama, fight, win & 0.4 & 0.87 & 10 \\ \hline
\end{tabular}
\label{table:TopQueryIMDB}
\end{table*}             

\begin{table*}
\caption{Top candidate queries for different methods in DBLP }
\begin{tabular}{ |l|l|l|l|l|l| }
\hline
Original Query & Method & Top-1 Candidate Query & $\eta(\mathcal{R}^*,q^\prime)$ & $\lambda(q_0,q^\prime)$ & $d(r\in q^\prime,T)$ \\ \hline
\hline
$q_0.1$ & Our Approach($\alpha=2$)  & emergency, research, system  & 0.4 & $0.88$ & 3  \\ \hline
$q_0.1$ & Our Approach($\alpha=16$) & emergency, research, system  & 0.4 & $0.88$ & 3 \\ \hline
$q_0.1$ & Intuitive Solution & emergency, research, system  & 0.4 & $0.88$ & 3  \\ \hline
$q_0.1$ & XO-QP (No Dist) & pinch, breakdown, system & 0.6 & 0.94 &  6 \\ \hline
$q_0.1$ & XO-QP (Dist) & pinch, breakdown, system & 0.6 & 0.94 &  6 \\ \hline
$q_0.7$ & Our Approach($\alpha=2$)  & online, content, exchange & 0.1 & $0.52$ & 0  \\ \hline
$q_0.7$ & Our Approach($\alpha=16$) & online, faith, merchandising & 0.1 & 1 & 6 \\ \hline
$q_0.7$ & Intuitive Solution & online, faith, merchandising & 0.1 & 1 & 6  \\ \hline
$q_0.7$ & XO-QP (No Dist) & online, faith, trading & 1 & 1 & 6 \\ \hline
$q_0.7$ & XO-QP (Dist) & online, faith, trading & 1 & 1 & 6 \\ \hline
$q_0.8$ & Our Approach($\alpha=2$)  & impact, order, analysis, system & 0.1 & 0.69 & 0  \\ \hline
$q_0.8$ & Our Approach($\alpha=16$) & impact, order, analysis, system & 0.1 & 0.69 & 0 \\ \hline
$q_0.8$ & Intuitive Solution & wake, variety, analysis, system & 0.1 & 0.94 & 8  \\ \hline
$q_0.8$ & XO-QP (No Dist) & wake, variety, analysis, system & 1 & 0.94 & 8 \\ \hline
$q_0.8$ & XO-QP (Dist) & wake, variety, analysis, system & 1 & 0.94 & 8 \\ \hline
$q_0.10$ & Our Approach($\alpha=2$)  & whole, study, analysis, system & 0.1 & 0.24 & 0  \\ \hline
$q_0.10$ & Our Approach($\alpha=16$) & intruder, hunt, analysis, system & 0.1 & 1 & 8 \\ \hline
$q_0.10$ & Intuitive Solution & intruder, hunt, analysis, system & 0.1 & 1 & 8  \\ \hline
$q_0.10$ & XO-QP (No Dist) & intruder, lookup, analysis, system & 0.3 & 1 & 8 \\ \hline
$q_0.10$ & XO-QP (Dist) & intruder, lookup, analysis, system & 0.3 & 1 & 8 \\ \hline
$q_0.11$ & Our Approach($\alpha=2$)  & gesture, testing, system & 0.1 & 0.72 & 0  \\ \hline
$q_0.11$ & Our Approach($\alpha=16$) & gesture, testing, system & 0.1 & 0.72 & 0 \\ \hline
$q_0.11$ & Intuitive Solution & hint, testing, system & 0.1 & 0.9 & 6  \\ \hline
$q_0.11$ & XO-QP (No Dist) & hint, search, system & 0.7 & 0.9 & 6 \\ \hline
$q_0.11$ & XO-QP (Dist) & hint, search, system & 0.7 & 0.9 & 6 \\ \hline
\end{tabular}
\label{table:TopQueryDBLP}
\end{table*}             
\subsection{Evaluation of Quality}
In this section, we evaluate the quality of our approach and compare it with XO-QP. Here, we select some sample queries with no-but-semantic-match problem for both datasets to conduct a comprehensive user study and thereafter, evaluate the overall quality of our approach. In order to carry out a fair user study, we select the users among both experts who have worked in XML keyword search areas and naive users who are graduate computer science students. To do the study, we present to the users with the original queries and their top-10 candidate queries/results retrieved from the top result list $\mathcal{R}^*$. After that, we ask the users to assess the quality of each candidate query with regards to their semantic similarity to the original query by scoring the candidate queries/results using Cumulated Gain metric \cite{JarvelinK02}. They score each candidate query/result from 0 to 5 points (5 means the best and 0 means the worst).

\begin{figure*}

\centering
        \subfloat[IMDB]{\label{IMDB}
  \includegraphics[width=3.2in,height=1.8in]{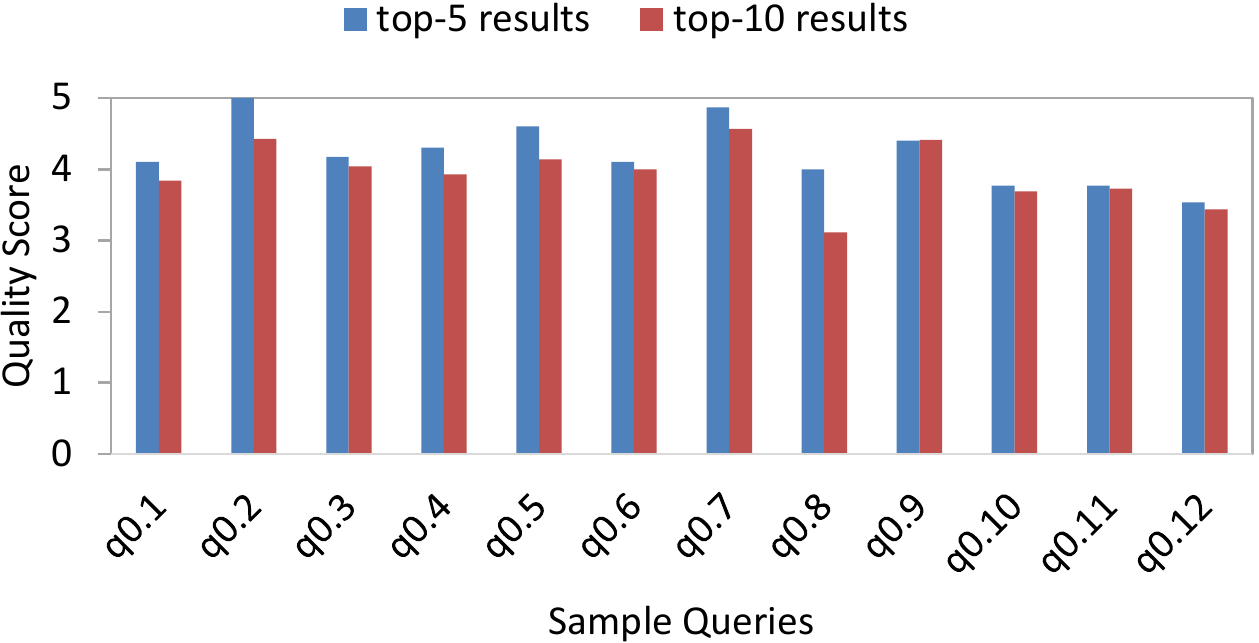}  
}
\hfill
\subfloat[DBLP]{\label{DBLP}
    \includegraphics[width=3.2in,height=1.8in]{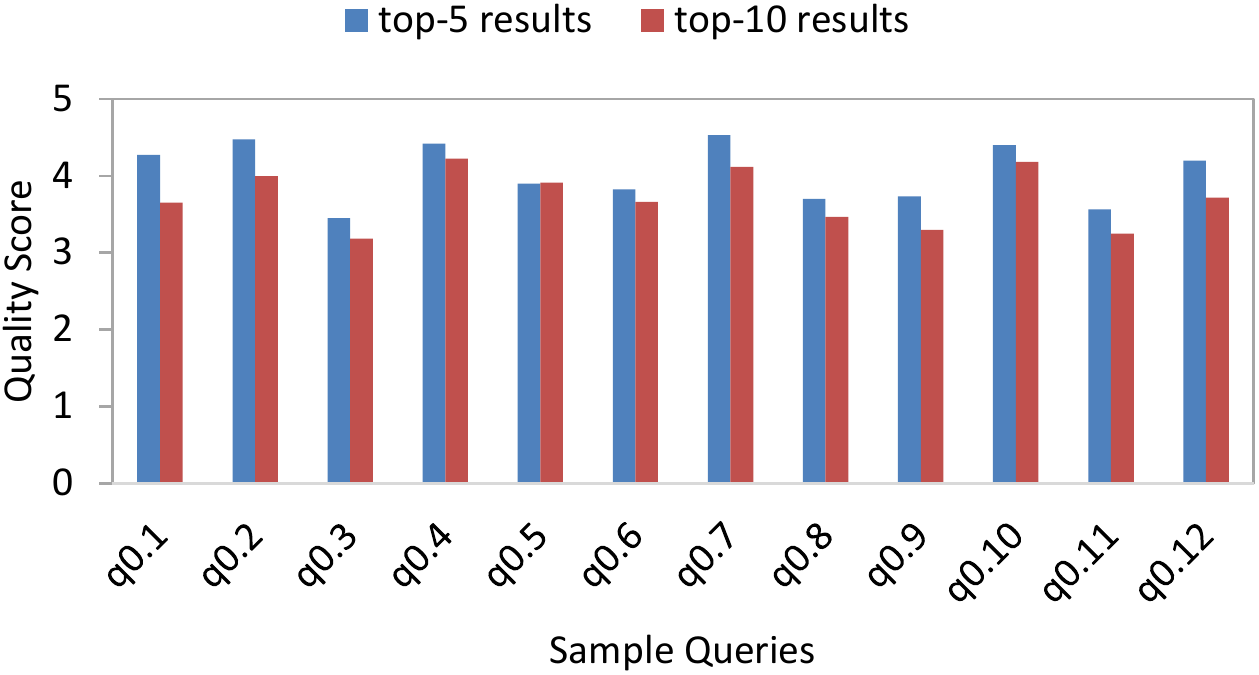}
}        
				\vspace{-1ex}
				\caption{Average quality of results in our approach: top-5 vs. top-10.}
				\label{fig:userstudy}
\end{figure*}

\begin{figure*}

\centering
        \subfloat[IMDB]{\label{IMDB}
  \includegraphics[width=3.2in,height=1.8in]{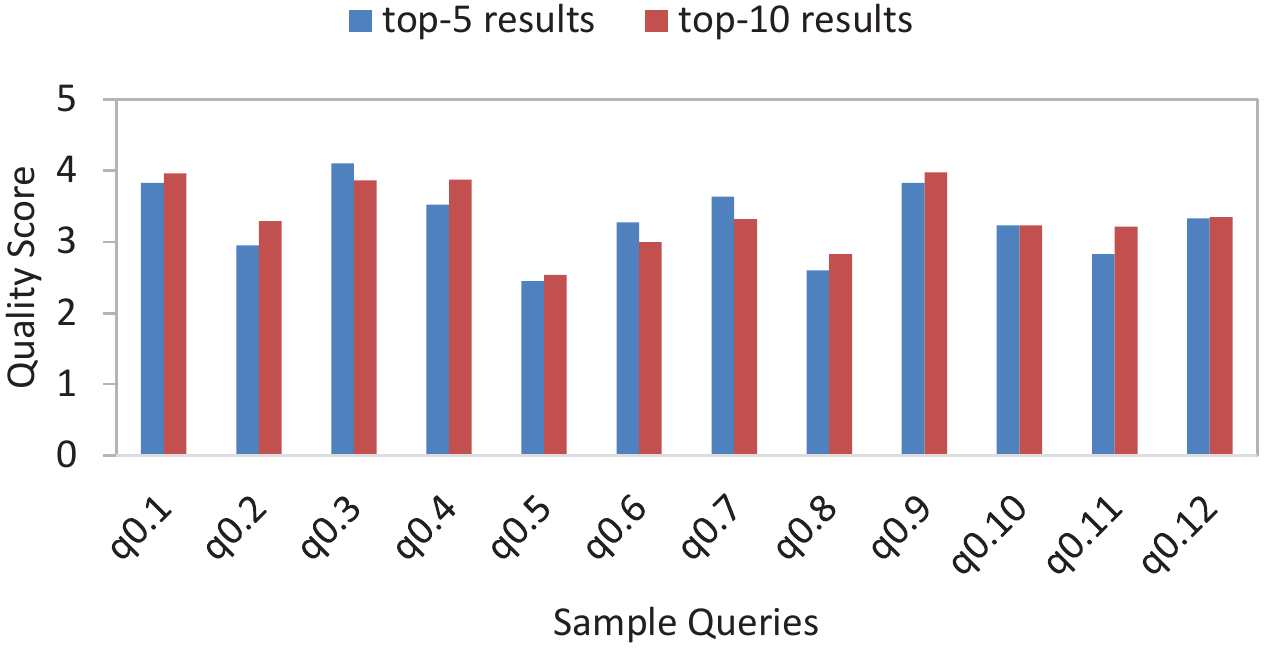}  
}
\hfill
\subfloat[DBLP]{\label{DBLP}
    \includegraphics[width=3.2in,height=1.8in]{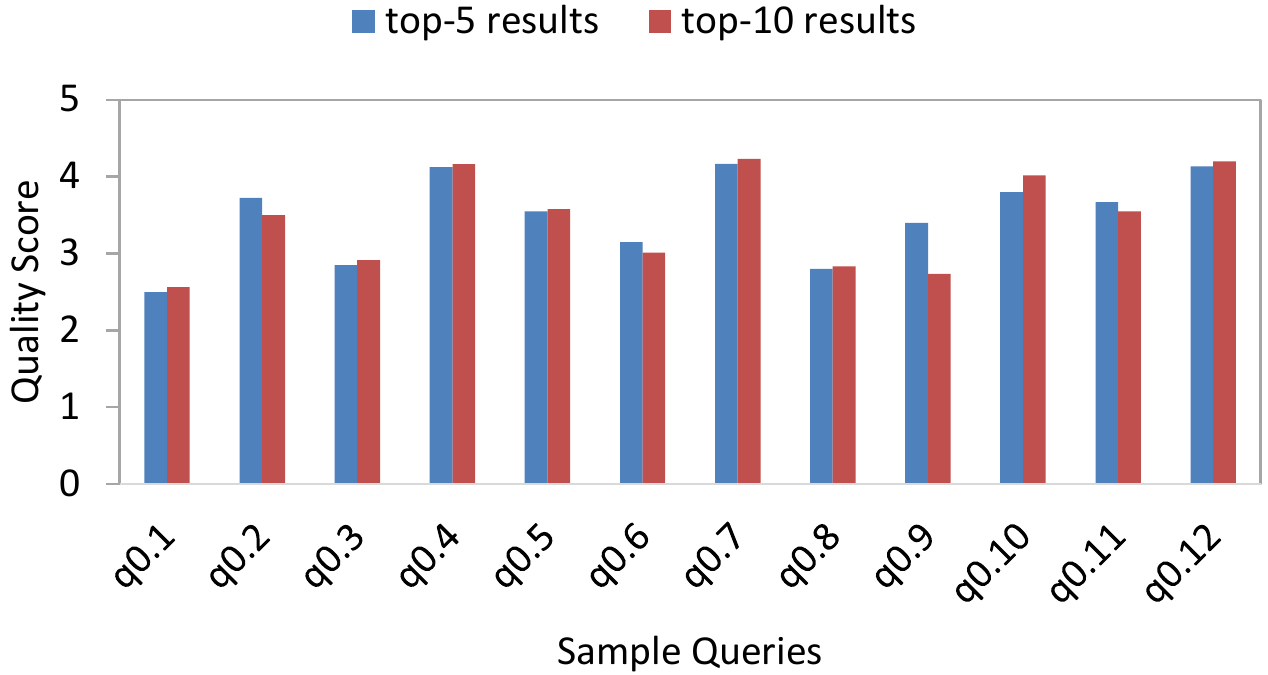}
}        
				\vspace{-1ex}
				\caption{Average quality of results in XO-QP: top-5 vs. top-10.}
				\label{fig:userstudyxonto}
\end{figure*}

\subsubsection{Ranking Scheme Comparison}
The average quality scores of the top-5 and top-10 queries /results for our approach and the existing counterpart XO-QP are presented in Fig. \ref{fig:userstudy} and Fig.\ref{fig:userstudyxonto} respectively. From Fig. \ref{fig:userstudy}, we observe that our proposed method suggests reasonable results for the no-match query for both top-5 and top-10 results. Also, we see that the average quality of top-5 results are always better than the average quality of top-10 results which indicates that our ranking function successfully ranks more similar and meaningful results higher than the rest of the results. However, we observe that in many cases for the existing method XO-QP, the average quality of results for top-10 is higher than top-5 results (for instance for $q_{0.2},q_{0.4},q_{0.11}$ in IMDB and for $q_{0.3},q_{0.7},q_{0.10}$ in DBLP) as shown in Fig. \ref{fig:userstudyxonto}. This indicates that XO-QP sometimes ranks some less similar and meaningful results higher. This is probably because of adapting XRank scheme into XO-QP. We conclude that our approach addresses the no-but-semantic-match problem better than the existing XO-QP method.                  

\begin{figure*}

\centering
        \subfloat[IMDB]{\label{IMDB}
  \includegraphics[width=3.2in,height=1.8in]{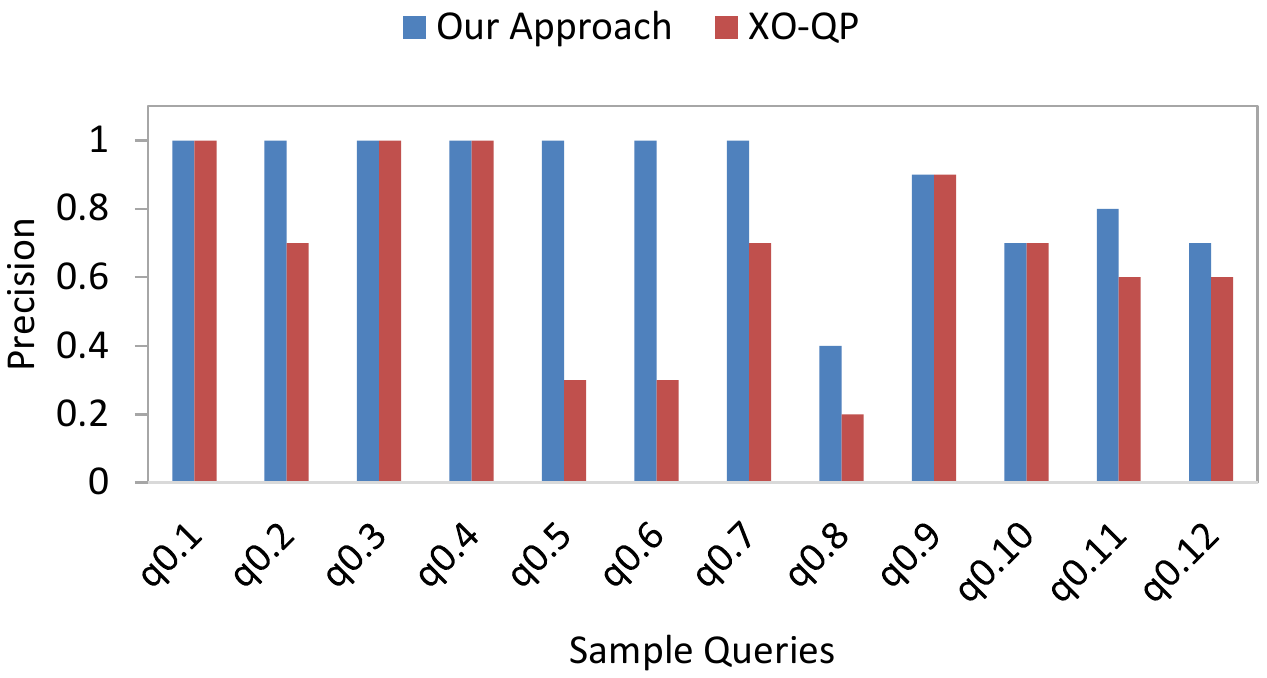}  
}
\hfill
\subfloat[DBLP]{\label{DBLP}
    \includegraphics[width=3.2in,height=1.8in]{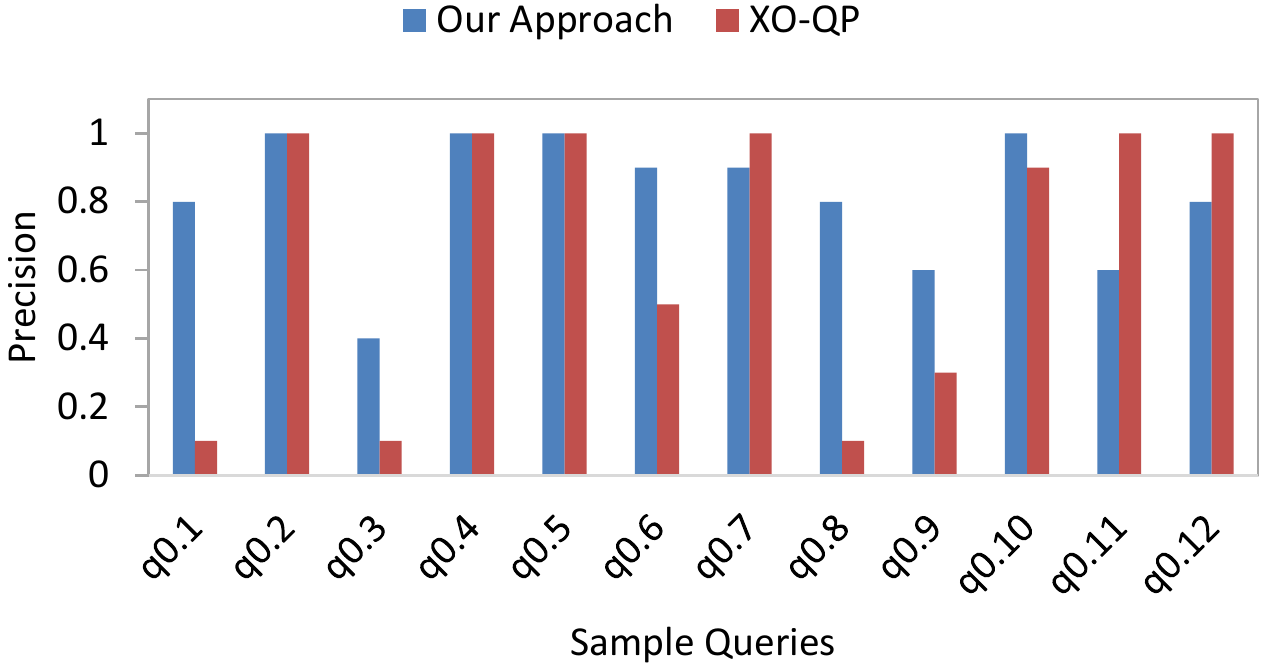}
}        
				\vspace{-1ex}
				\caption{Precision comparison for computing top-10 results.}
				\label{fig:precisioncomparison}
\end{figure*}

\subsubsection{Precision}
In this section, we compare the precision of our proposed method with the XO-QP method. In order to make a comparison, we count the number of meaningful results in top-10 results. A result is regarded as meaningful if the average quality score of the assessors is not less than 3. We see that the precision of our approach and XO-QP is presented on both IMDB and DBLP as shown in Fig. \ref{fig:precisioncomparison}. In IMDB, we see that the precision of our proposed method is better than XO-QP. This is because we use both semantic similarity and cohesiveness scores to rank the results and retrieve the tightest SLCA results with the maximum similarity to the original query keywords. Also in DBLP, we observe that the precision of our method is better than XO-QP in many cases specially in $q_{0.1}, q_{0.3}$ and $q_{0.8}$. That's because our method can effectively retrieve the most similar results that are cohesive and make a meaningful combination in terms of data cohesiveness. However, in some cases like $q_{0.11}$ and $q_{0.12}$, we see that the precision of our method is smaller than XO-QP because it uses less similar keywords in the candidate query to retrieve more cohesive results. But the precision is not largely deteriorated as we see in XO-QP in many cases.                  
\begin{figure*}

\centering
        \subfloat[IMDB]{\label{IMDB}
  \includegraphics[width=3.3in,height=2in]{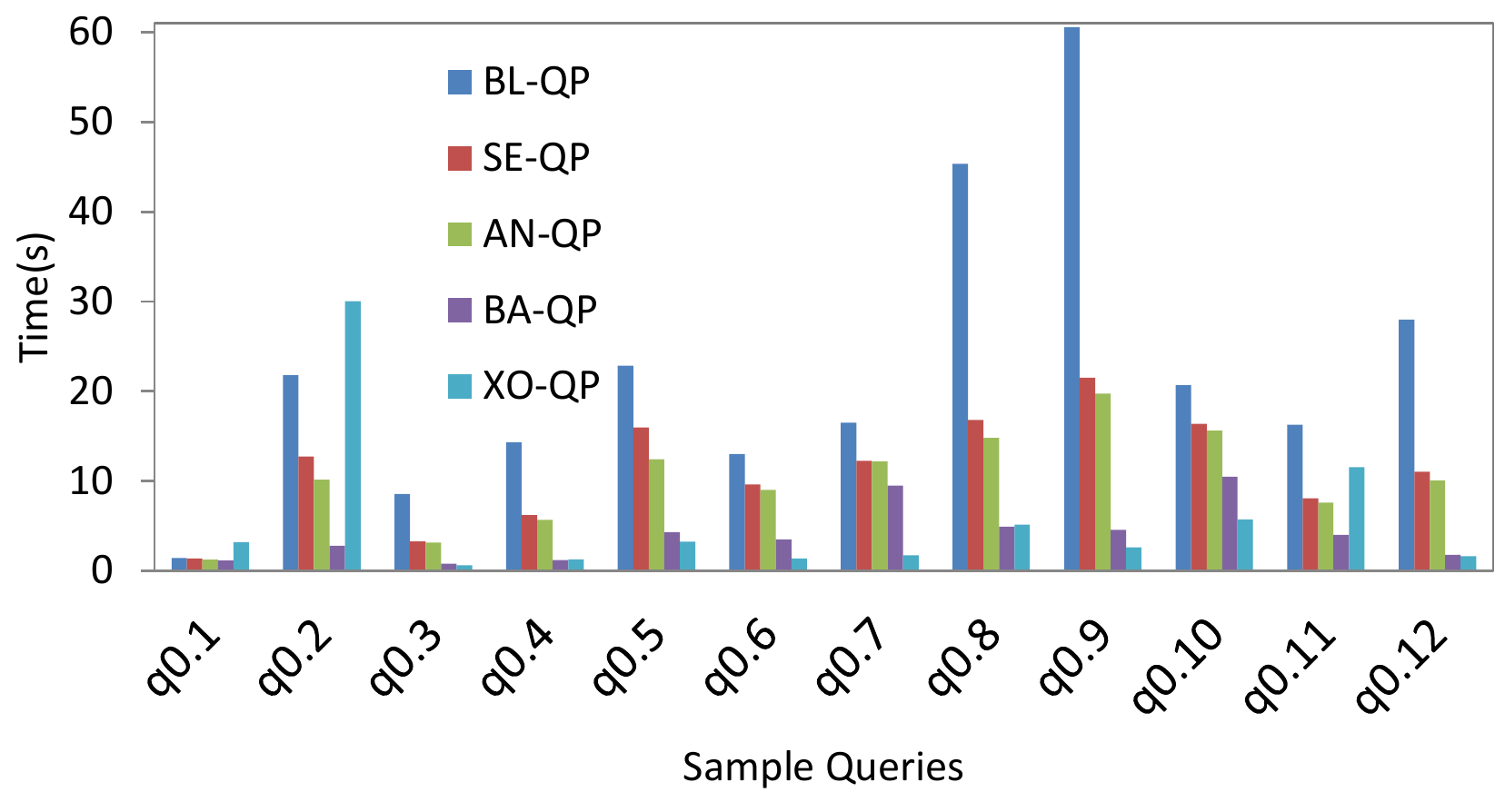}  
}
\hfill
\subfloat[DBLP]{\label{DBLP}
    \includegraphics[width=3.3in,height=2in]{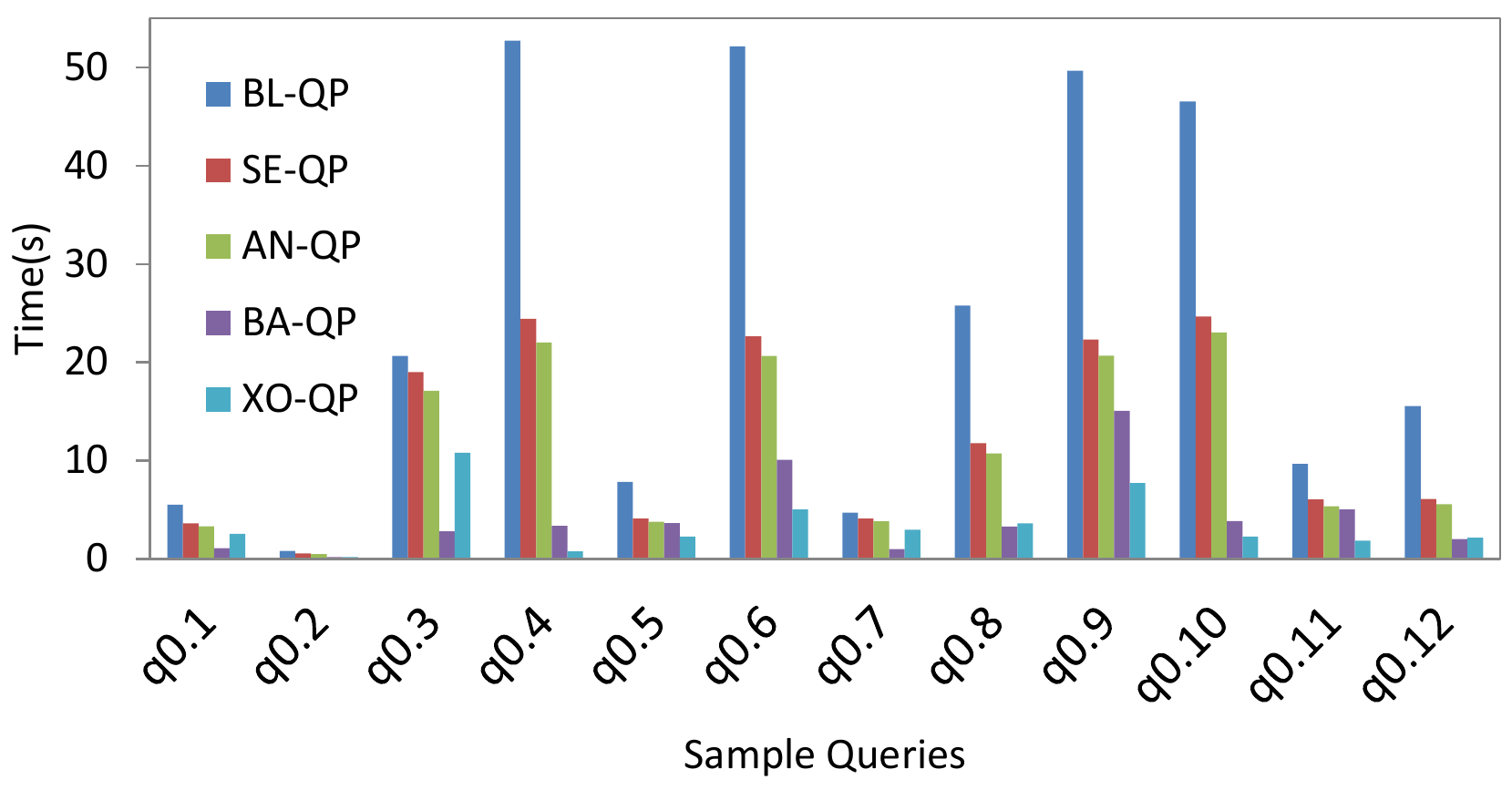}
}        
				\vspace{-1ex}
				\caption{Processing time of the sample queries for computing top-10 results.}
				\label{fig:ProcessingTime}
\vspace{-3ex} 
\end{figure*}

\subsection{Efficiency}
\label{sec:efficiency}
To demonstrate the efficiency of our approach, we use a baseline method which applies inter query pruning only, and uses scan eager for query processing. We call this baseline query processing as BL-QP. The purpose of this baseline is to demonstrate the efficiency of the intra-query pruning scheme in our proposed methods. Overall, we compare the performance of the following methods: (a)BL-QP, (b) SE-QP, (c) AN-QP, and (d) BA-QP, and (e) XO-QP.

\subsubsection{Processing Time}
Fig. $\ref{fig:ProcessingTime}$ shows the response time of computing the top-10 semantically related results for the sample queries (presented in Table \ref{tbl:sample}) on the IMDB and DBLP datasets. According to the results, BA-QP achieves the best performance. The reason behind this is that we apply inter and intra batch prunings in BA-QP in addition of inter and intra-query prunings. Also, we share the partial results of the shared keywords among the candidate queries in a batch. On average, BA-QP consumed 30 percent time of the time needed by the SE-QP and AN-QP methods. The difference, however, depends on the number of candidate queries. In Fig. $\ref{fig:ProcessingTime}$ (a), the response time for processing the test query $q_0.1$ in BA-QP is very close to those of SE-QP and AN-QP due to the small number of candidate queries for $q_0.1$ which is 12 and insufficient number of batches (in this case we have only one batch) to apply inter-batch pruning. Moreover, if the cost of merging the shared part $\mathcal{K}^s$ with the unshared part $\mathcal{K}^u$ is relatively high in most of the batches, BA-QP may not outperform the other methods as we see in $q_0.5$ on DBLP. In IMDB $q_0.5$, however, AN-QP outperforms SE-QP by a large margin because the data distributions in most of the inverted lists are skewed. In such case, many nodes are skipped in AN-QP, which improves the performance. Clearly, BL-QP has the worst performance on most cases because it does not apply the intra query pruning, therefore, incurs unnecessary computations. The XO-QP, however, is comparable to BA-QP	in many cases because we set the node score threshold to $0.9$ which means that only the relevant nodes to the keywords with the similarity degree bigger than $0.9$ are selected. If we set this value to a smaller number, XO-QP performance deteriorates and gets closer to AN-QP and SE-QP. Moreover, XO-QP builds a special inverted list called XOnto-DIL for the keywords which takes additional time and space for its creation while our proposed methods do not use such indexing and use the normal DIL for query processing.      

\begin{figure}
\centering       
			
        \includegraphics[width=2.5in,height=1.2in]{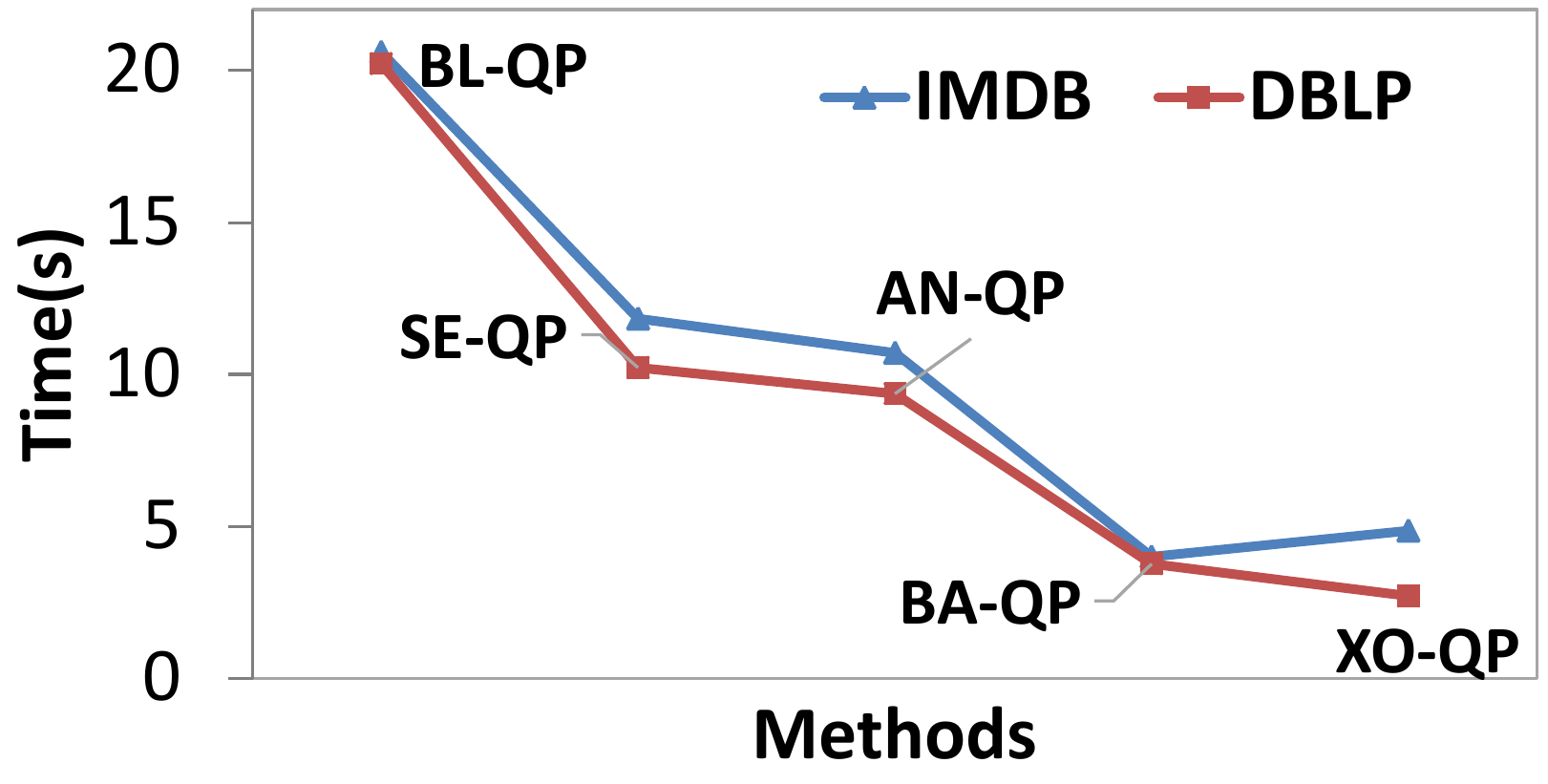}
				\vspace{-1ex}
        \caption{Average processing time on all test queries}     
				
\label{fig:TotalProcessingTime}
\vspace{-3ex}
\end{figure}

In Fig.\ref{fig:TotalProcessingTime}, the average processing time of different methods for all test queries are presented. Clearly the baseline method which does not use intra query pruning spends the maximum time for processing on both datasets. The SE-QP and AN-QP spend less time comparing to BL-QP because we apply inter and intra query pruning and therefore, the processing terminates early when we reach to the global $\sigma^{min}$. Moreover, there is a narrow improvement in the AN-QP over SE-QP due to using the anchor node processing which skips many redundant computations and expedites the efficiency. The BA-QP processing improves sharply on both datasets because in BA-QP, we execute queries in batch and share the computations among the queries in the batch and thereafter, we reach to the global $\sigma^{min}$ before SE-QP and AN-QP can do. Also, we can apply inter and intra-batch pruning which expedite its efficiency. In XO-QP, however, the performance is close to BA-QP on both datasets because we build the XOnto-DIL on the nodes with the relevance degree no less than $0.9$. In such condition, only the highly relevant nodes are selected to replace the keywords, therefore, the processing time does not grow considerably due to small number of candidate keywords. In contrast, by setting the threshold to lower numbers, the processing time will grow exponentially and gets closer to SE-QP. Furthermore, our methods do not need to build special indexing, therefore, avoid using additional space and offline processing time for building such indexing.                                          
\subsubsection{Effect of Query Length }

In this experiment, we choose test queries that have at least 100 candidate queries. At each step, we set their length to a number of settings ($\{3,4,5,6\}$) and compute the average response time of all queries when processing 100 candidate queries while setting $k$ to 10. In each step, for the test queries that have smaller number of keywords, we add additional keywords to increase their length. We also use the same queries to conduct experiments and compare the results with XO-QP method. In Fig. $\ref{fig:QueryLength}$ the effect of growing the length of queries on the response time is analyzed. The response time of BA-QP is almost fixed compared to XO-QP, BL-QP, SE-QP, and AN-QP on IMDB. Similarly in DBLP, BA-QP has the slowest growth in response time when the number of keywords increases while BL-QP, SE-QP and AN-QP show a big jump in processing times. Also, XO-QP shows a jump in the processing time when the query length increases. Thus, the performance of XO-QP is sensitive to the query length parameter. The sharpest increase in the processing time is related to BL-QP because it does not apply intra-query pruning, therefore, all the inverted lists are accessed during the processing and this incurs many useless computations specifically when the query length increases.  

In summary, we can conclude that BA-QP method is not sensitive to the number of keywords due to sharing the computations while the performance of SE-QP and AN-QP is highly sensitive to the query length.   

\begin{figure}

\centering
        \subfloat[IMDB]{\label{IMDB}
  \includegraphics[width=1.5in,height=1.2in]{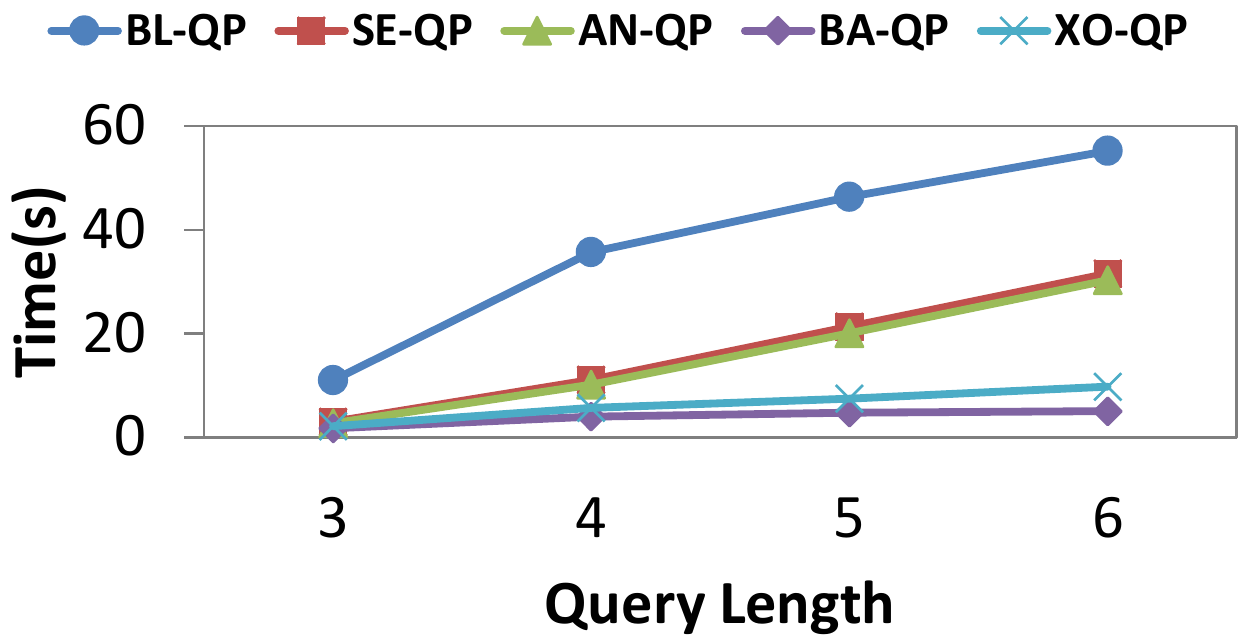} 
}
\hfill
\subfloat[DBLP]{\label{DBLP}
    \includegraphics[width=1.5in,height=1.2in]{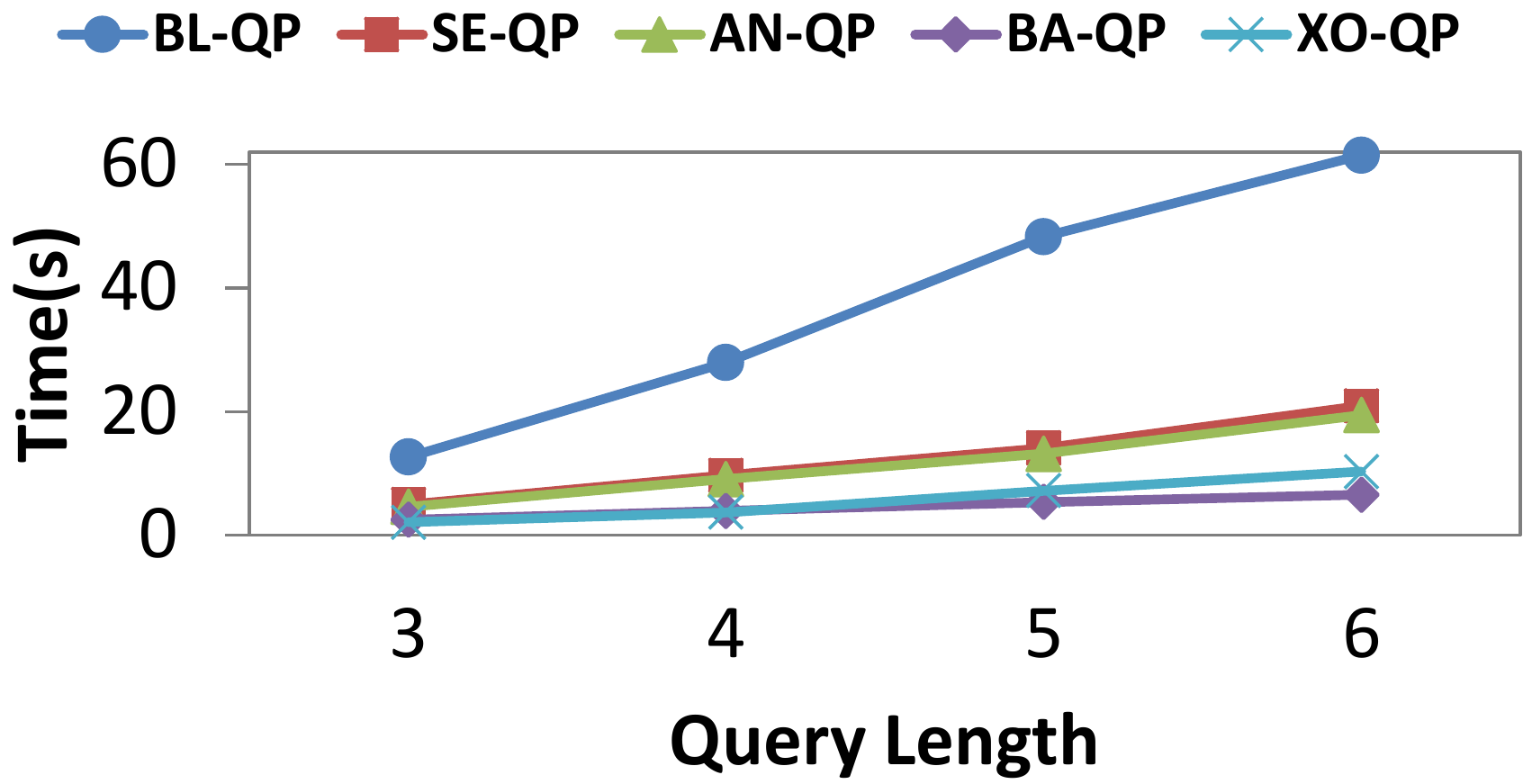}
}        
				\vspace{-1ex}
				\caption{Average processing time of the test queries with varying query length (i.e., number of keywords).}
				\label{fig:QueryLength}
\vspace{-3ex} 
\end{figure}

\vspace{-2ex}
\subsubsection{Effect of Top-$k$ List Size $k$}
In this experiment, we vary the top-$k$ size $k$ and compute the average processing time. In Fig. $\ref{fig:Top-k}$, we observe the effect of top-$k$ size $k$ on the average processing time for different methods. Clearly, the BA-QP and XO-QP methods are not that sensitive to the value of $k$. By increasing the value of $k$, processing times of BA-QP and XO-QP show only a small increase or almost fixed. On the other hand, for SE-QP and AN-QP methods, any increase on $k$, leads to a considerable jump on the average processing time. However, this increase is not big when we change $k$ from 10 to 20. When $k$ is selected as a bigger number, $\sigma^{min}$ will be smaller and this causes the inter query pruning in SE-QP and AN-QP to be less effective. That is, the application of inter-query pruning is delayed in SE-QP and AN-QP for big $k$. 
However in BA-QP, we execute a candidate query batch by sharing the computations among the queries in it and thereafter, we reach to the global $\sigma^{min}$ before SE-QP and AN-QP can do. Also, we can apply inter and intra-batch pruning with this early found global $\sigma^{min}$ which helps BA-QP to expedite its efficiency. 
       
\begin{figure}

\centering
        \subfloat[IMDB]{\label{IMDB}
   \includegraphics[width=1.5in,height=1.2in]{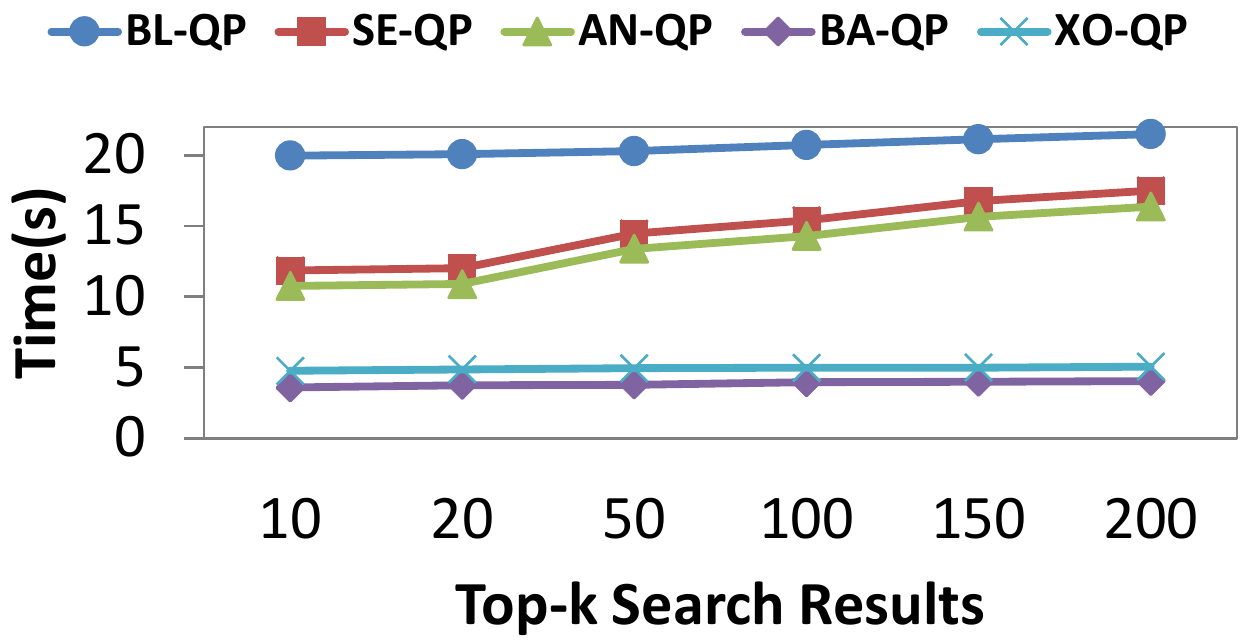}
}
\hfill
\subfloat[DBLP]{\label{DBLP}
     \includegraphics[width=1.5in,height=1.2in]{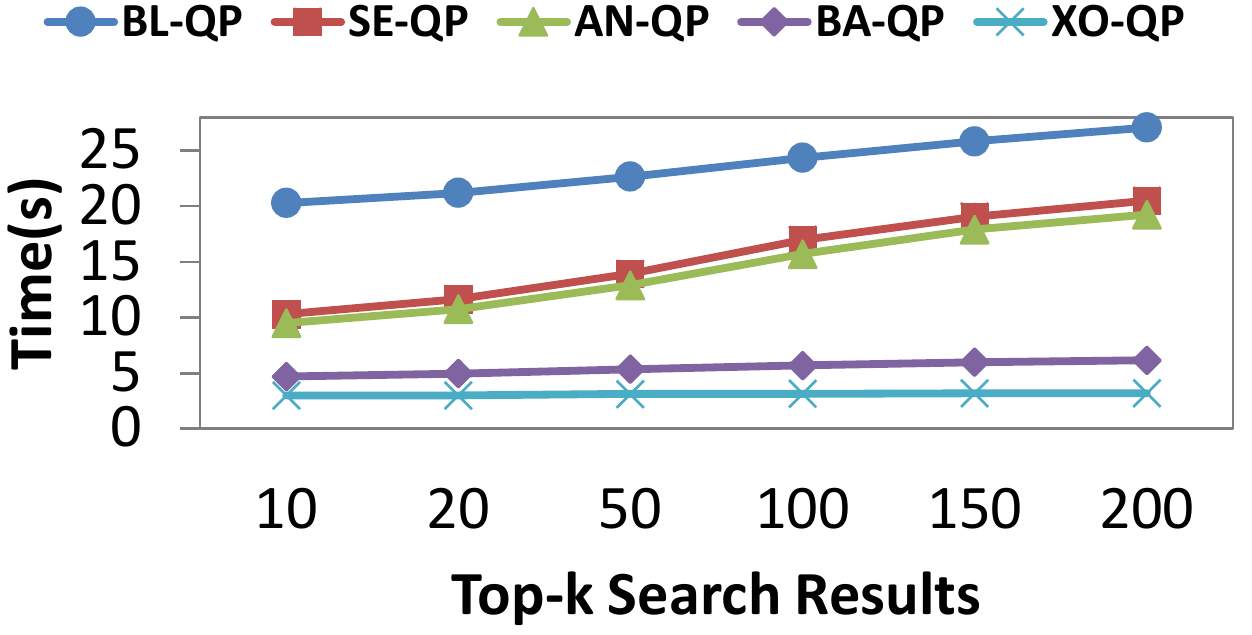}
}        
				\vspace{-1ex}
				\caption{Average processing time of the test queries with varying $k$ in top-k list.}
				\label{fig:Top-k}
\vspace{-3ex}
\end{figure}

\subsubsection{Effect of Tuning Parameter $\alpha$}
In this experiment, we vary $\alpha$ between a number of settings ($\{2,4,8,16\}$) and compute the average processing time of the test queries when $k = 10$. Fig. $\ref{fig:Tuningparameter}$ presents the effect of choosing different values for $\alpha$. Larger values of $\alpha$ reduce the sensitivity to data cohesiveness. This usually leads to a more effective intra-query pruning and decreases the processing time. On the contrary, as $\alpha$ decreases, the possibility for the partial results scores to be smaller than the $\sigma^{min}$ also decreases. In this case, the intra-query pruning becomes less effective and therefore, the processing time increases, specifically, in SE-QP and AN-QP. In BL-QP, however, there is no significant difference in the processing time when $\alpha$ changes to smaller number. This is because BL-QP does not apply intra query pruning and therefore, the processing time is not affected that much by $\alpha$.   

\begin{figure}

\centering
        \subfloat[IMDB]{\label{IMDB}
   \includegraphics[width=1.5in,height=1.1in]{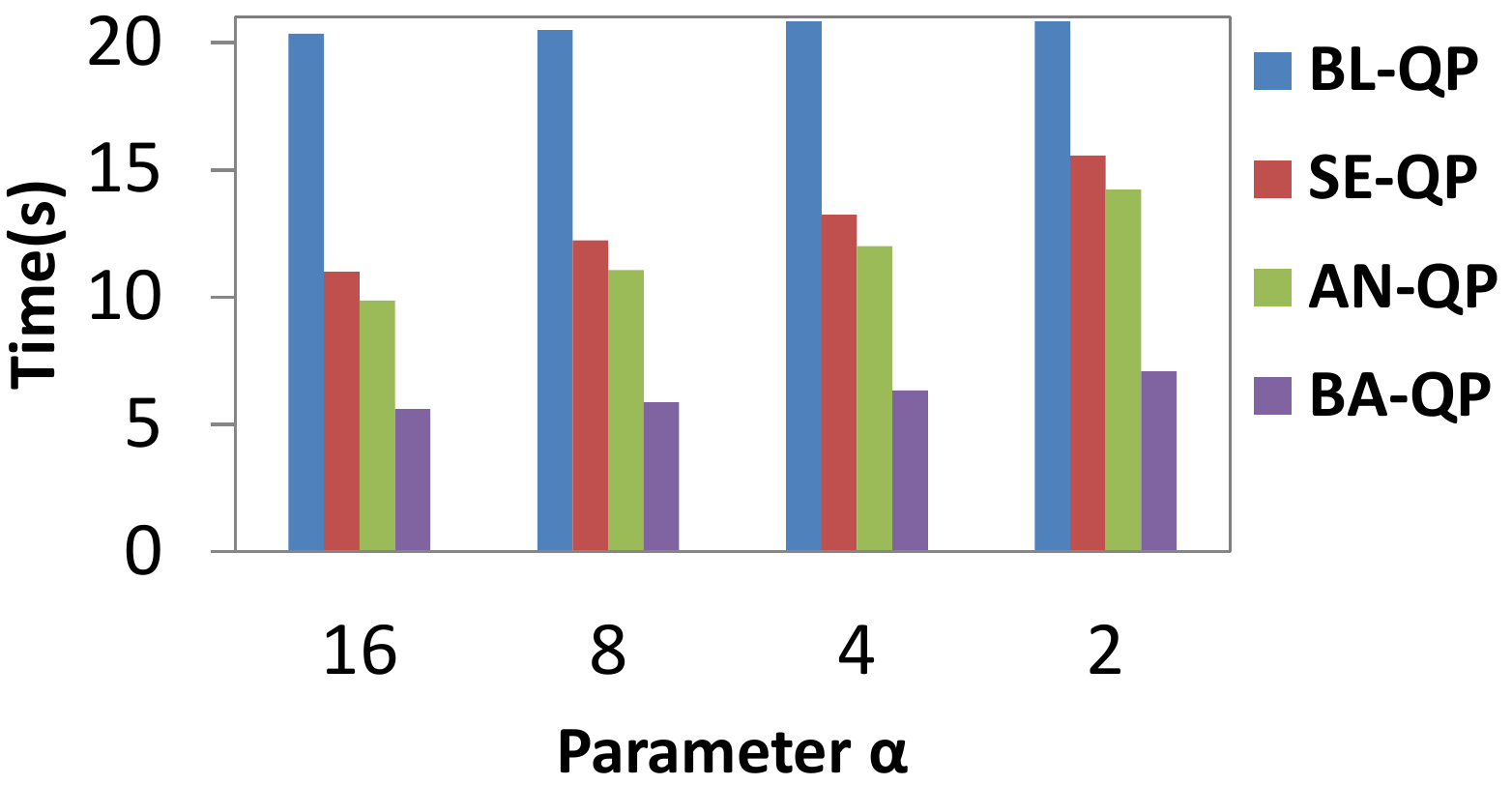}
}
\hfill
\subfloat[DBLP]{\label{DBLP}
     \includegraphics[width=1.5in,height=1.1in]{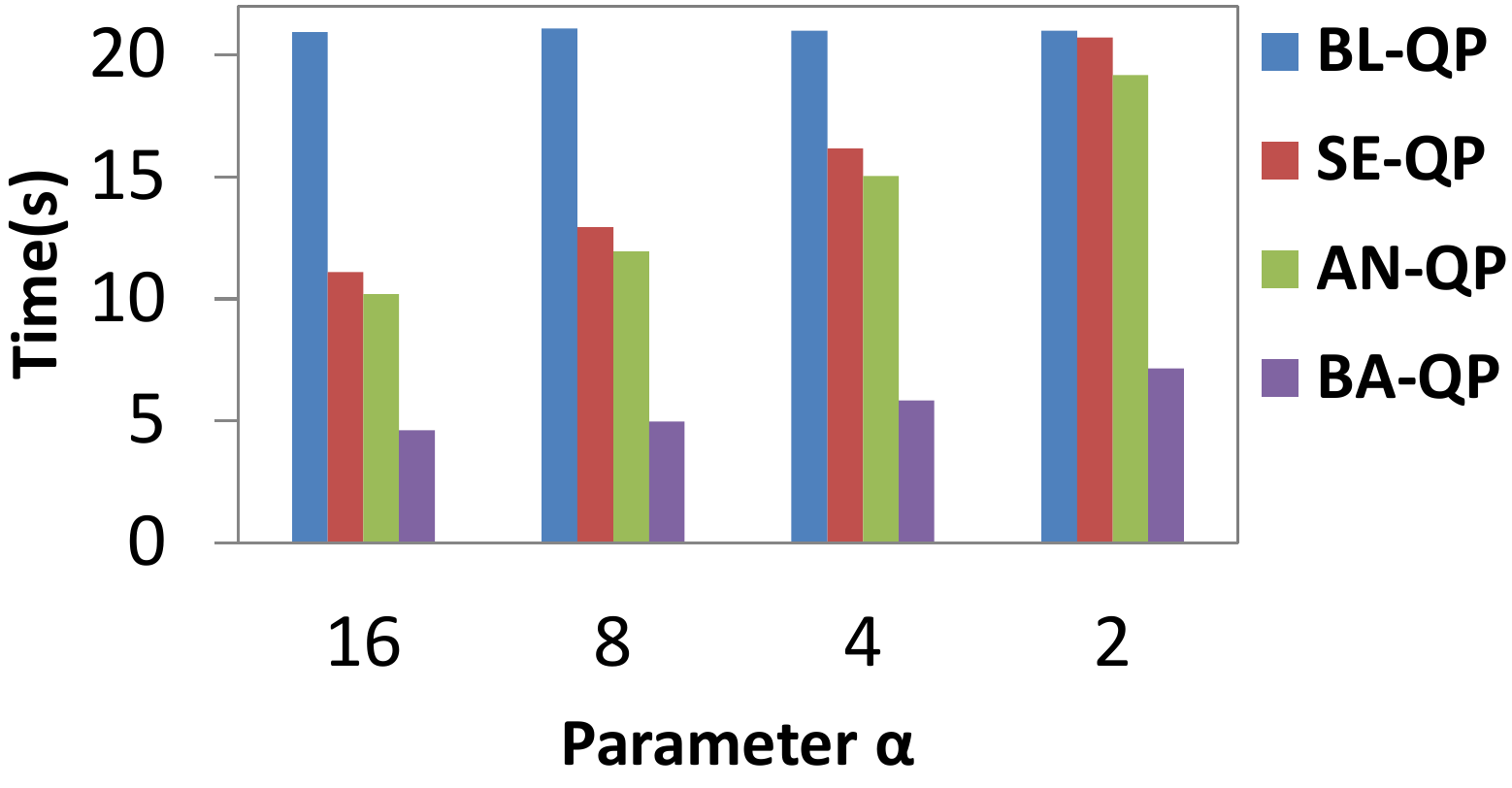}
}        
				\vspace{-1ex}
				\caption{Average processing time of the test queries with varying tuning parameter $\alpha$.}
				\label{fig:Tuningparameter}
\vspace{-3ex} 
\end{figure}

\subsubsection{Effect of Candidate Queries Number $|\mathcal{Q}|$}
For this experiment, we choose test queries which have at least 200 candidate queries. At each step we process a certain number ($\{40,80,120,160,200\}$) of their candidate queries and compute the average processing time when $k = 10$. In Fig. $\ref{fig:QueriesNumber}$, the effect of growing the number of candidate queries $|\mathcal{Q}|$ on the response time is analyzed. The response time of BA-QP method grows slowly compared to BL-QP, SE-QP, and AN-QP which show a sharp rise in each step. In BL-QP, the growth in the query processing time is the maximum among the methods because it does not apply intra query pruning, therefore, it incurs many unnecessary computations during execution of the candidate queries. We can conclude that BA-QP is not that sensitive to $|\mathcal{Q}|$. This is because BA-QP not only shares the computations among the candidate query batch but also applies inter and intra batch pruning with the early-found global $\sigma^{min}$ to expedite its performance.

\begin{figure}

\centering
        \subfloat[IMDB]{\label{IMDB}
   \includegraphics[width=1.5in,height=1.2in]{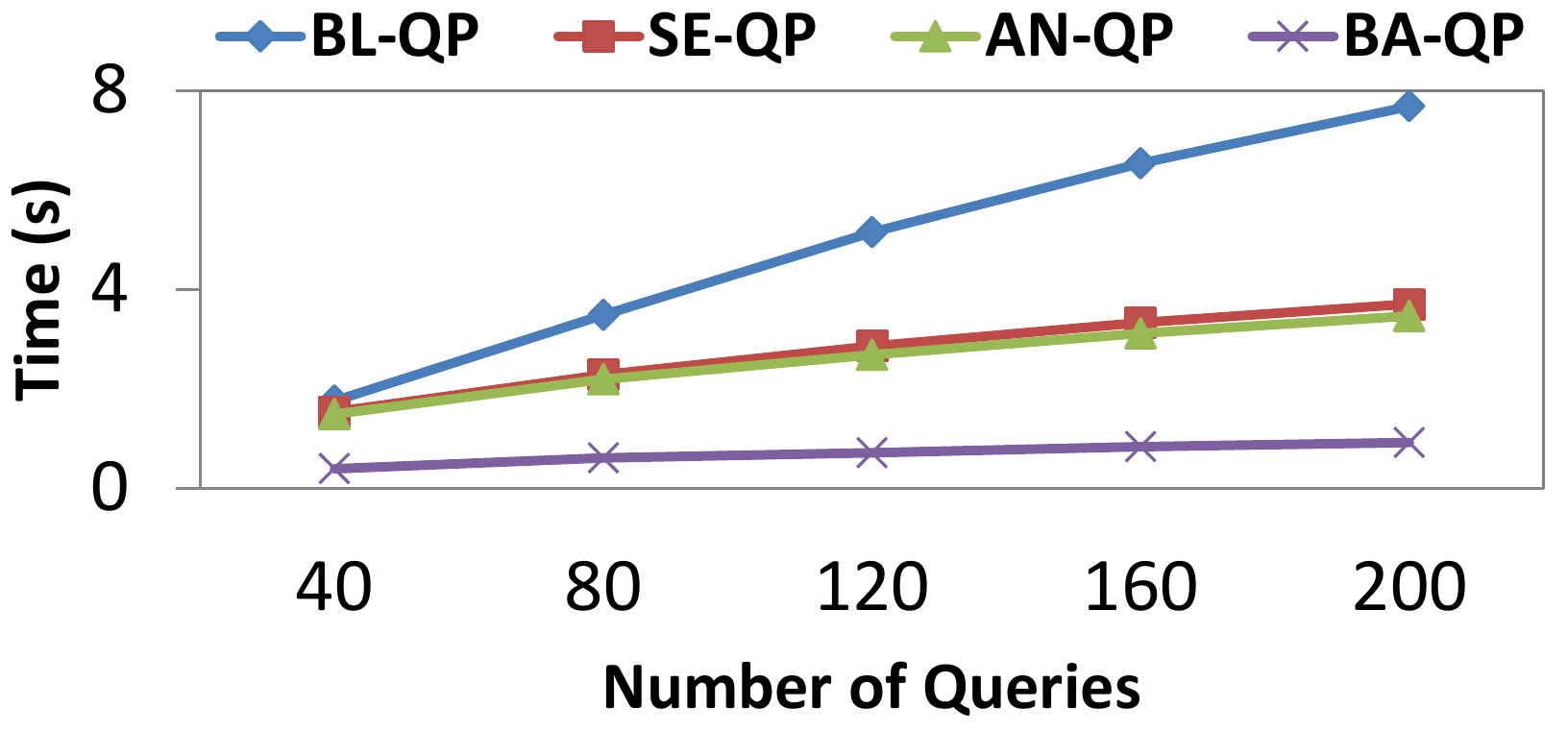}
}
\hfill
\subfloat[DBLP]{\label{DBLP}
     \includegraphics[width=1.5in,height=1.2in]{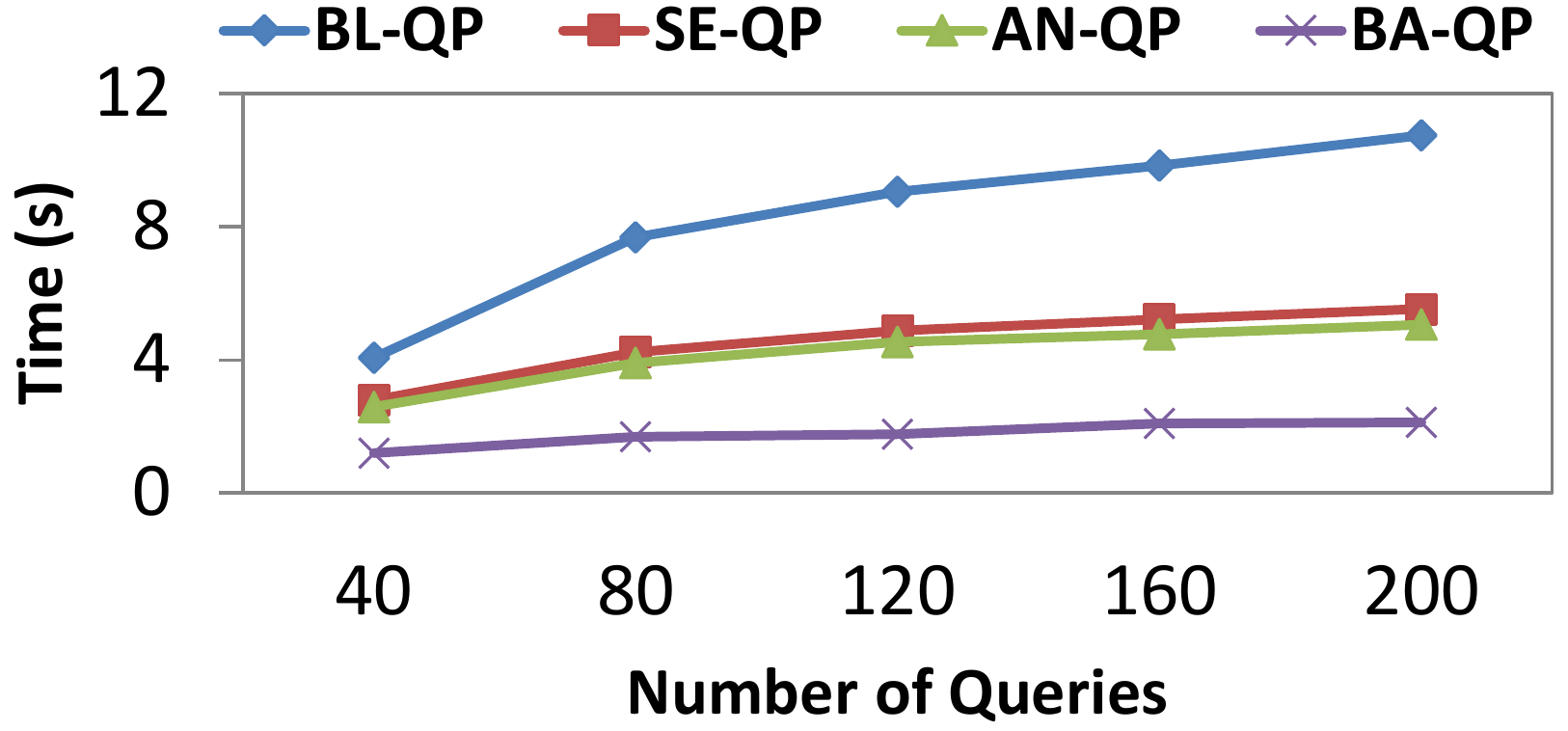}
}       
				\vspace{-1ex}
				\caption{Processing time with varying number of candidate queries $|\mathcal{Q}|$.}
				\label{fig:QueriesNumber}
\vspace{-3ex} 
\end{figure}
      
\begin{figure*}

\centering
        \subfloat[IMDB]{\label{IMDB}
   \includegraphics[width=3.3in,height=2.1in]{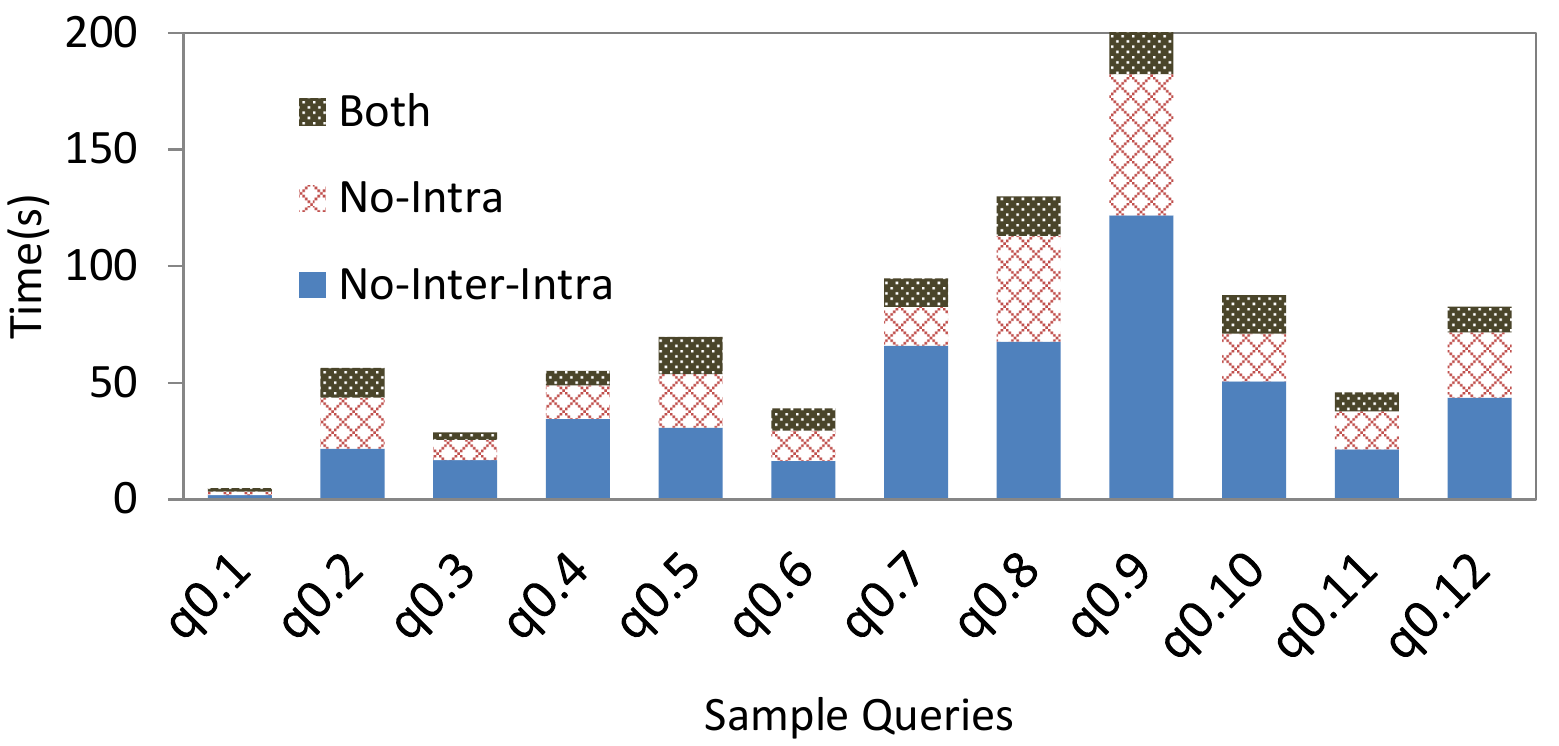}
}
\hfill
\subfloat[DBLP]{\label{DBLP}
     \includegraphics[width=3.3in,height=2.1in]{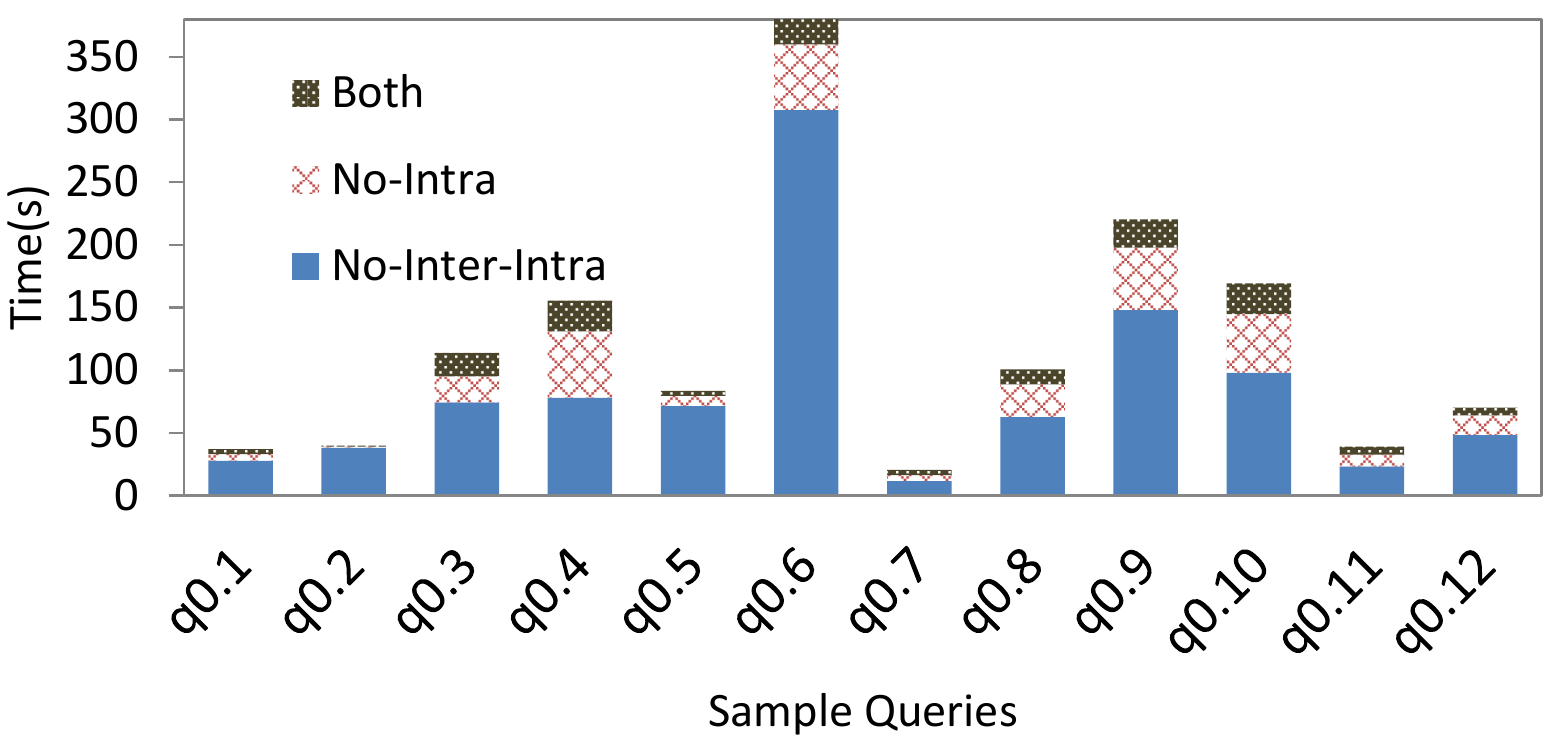}
}      
				\vspace{-1ex}
				\caption{Inter and Intra Query Pruning Improvement on Sample Queries.}
				\label{fig:PruningTime}
\vspace{-3ex} 
\end{figure*}

\subsubsection{Pruning Improvement}
In this section, we show the pruning effect on the processing time. From Fig. \ref{fig:PruningTime}, the sample queries processing time are shown for 3 scenarios: (a) when there is no pruning for processing the queries, (b) when only the inter query pruning is implemented, and (c) when both inter and intra pruning methods are implemented. The processing time are measured using SE-QP method. Clearly, the inter query pruning shows the most effective method to cut the processing time on most of the sample queries. If the number of candidate queries is large and the breaking point occurs when most of the queries are not executed, then inter query has the best performance. e.g., in the sample queries $q_{0.9}$ and $q_{0.10}$ on IMDB, the number of executed queries are $\frac{225}{442}$ and $\frac{4918}{12240}$ respectively and in the sample queries $q_{0.6}$ and $q_{0.9}$ on DBLP, the number of executed queries are $\frac{912}{8190}$ and $\frac{1287}{3276}$ respectively. Therefore in these cases, most of the queries are not executed by using inter query pruning and the processing time reduced sharply. The intra query pruning is more effective when the number of query keywords is bigger or the inverted list that is not accessed due to pruning is big sized. In such condition, some inverted lists are not accessed when the result is not able to beat the $\sigma^{min}$ and this expedites the processing time. For example, in the sample queries $q_{0.8}$ and $q_{0.9}$ on IMDB and $q_{0.4}$ and $q_{0.9}$ on DBLP, the query keywords are from 4 to 5 keywords and include some big sized inverted lists that are not accessed, therefore the processing time reduced considerably.\par 

\begin{figure}
\centering        
				
        \includegraphics[width=2.3in,height=1.4in]{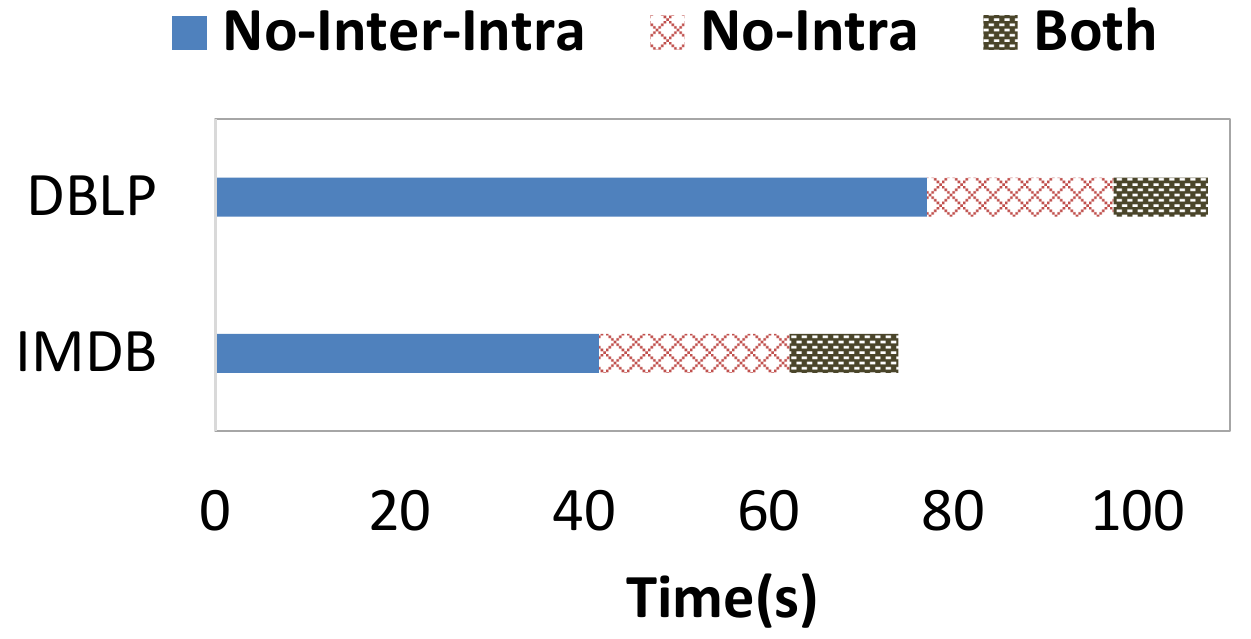}
       	\vspace{-1ex}
     		\caption{Average Pruning Improvement.}
\label{fig:TotalPruningTime}
\end{figure} 
      
Fig. \ref{fig:TotalPruningTime} presents the average processing time for the set of test queries for 3 scenarios: (a) no pruning is implemented, (b) only inter query pruning is implemented, and (c) both pruning techniques are implemented. Clearly, the processing time for the case with no pruning is maximum on both datasets. This shows that inter query pruning has the most tangible effect on the processing time by avoiding to execute the queries that cannot contribute to the $\mathcal{R}^*$. The efficiency improvement on IMDB and DBLP is 2 and 3 times respectively. After that, intra query pruning expedites the efficiency by avoiding to access all inverted lists when the result cannot beat $\sigma^{min}$.\par

Fig. \ref{fig:PruningTimeBAQP} presents the sample queries processing time for BA-QP in 2 scenarios: (a) when no batch pruning is implemented, (b) when batch pruning is implemented. In most cases, the processing time for the method which uses pruning has decreased. The batch pruning becomes more effective when the query length increases as shown in $q_0.3$ and $q_0.12$ on IMDB or in $q_0.4$ and $q_0.8$ on DBLP. Moreover, when the candidate queries contain some big inverted lists and we reach to the global $\sigma^{min}$ early, the improvement is more considerable as in $q_0.1$ and $q_0.2$ on IMDB and in $q_0.1$ and $q_0.7$ on DBLP.\par

\begin{figure*}

\centering
        \subfloat[IMDB]{\label{IMDB}
   \includegraphics[width=3.3in,height=2.1in]{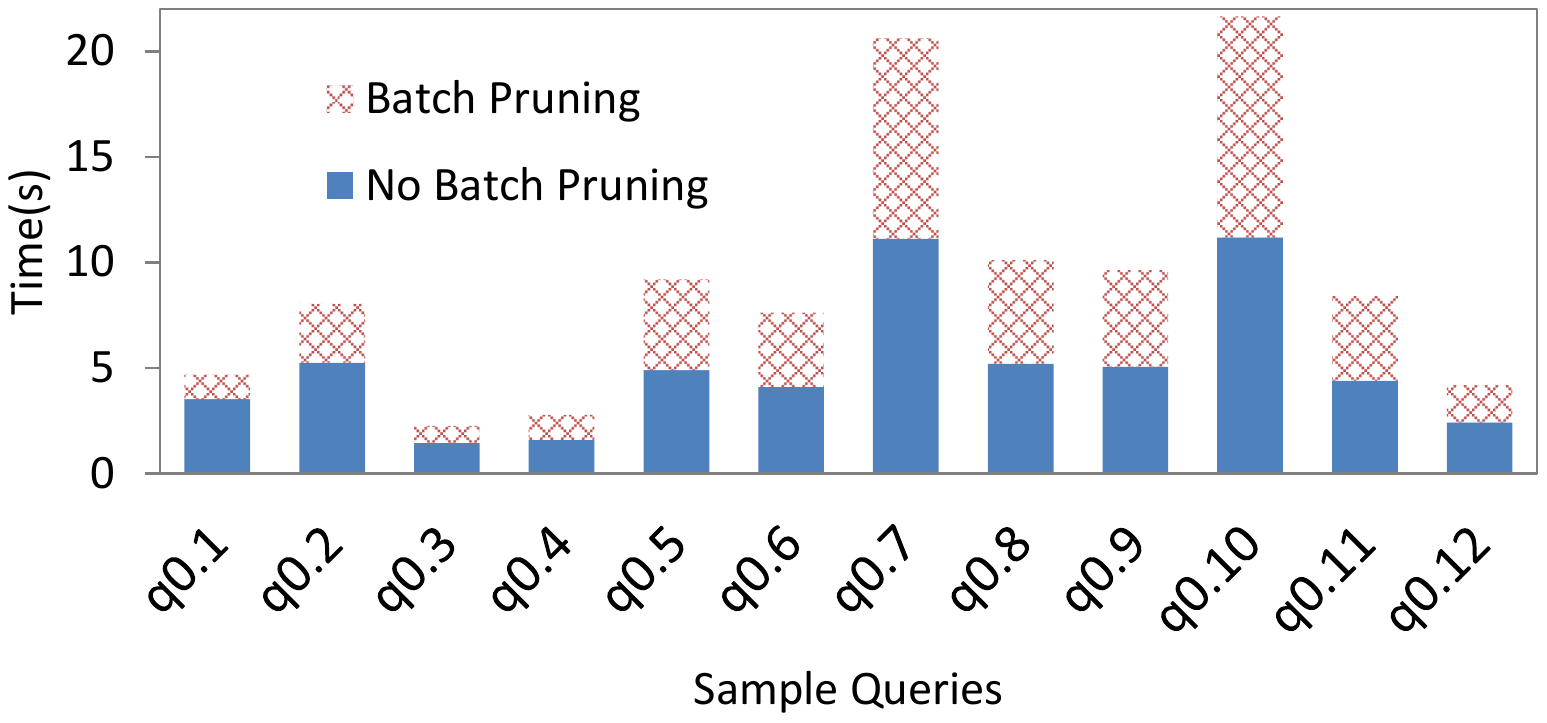}
}
\hfill
\subfloat[DBLP]{\label{DBLP}
     \includegraphics[width=3.3in,height=2.1in]{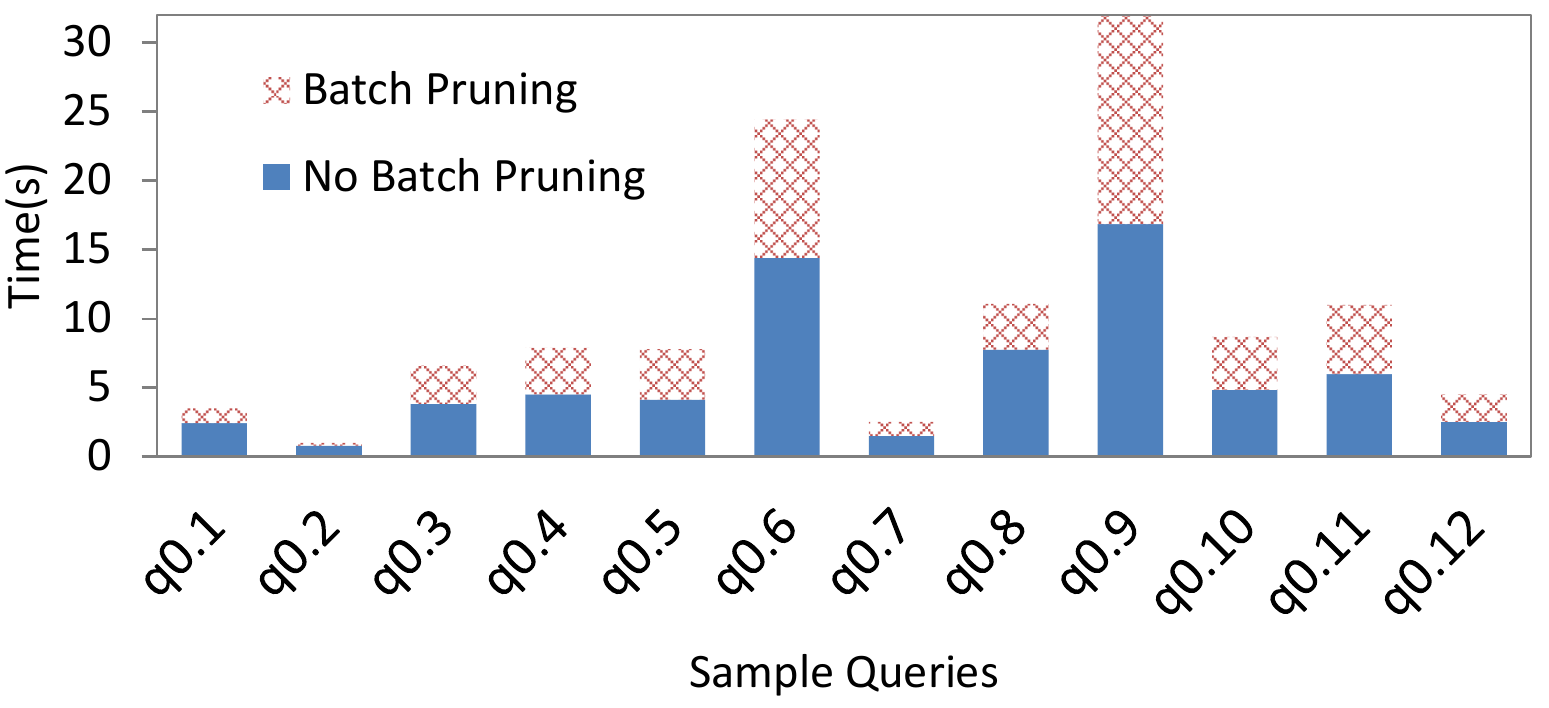}
}        
				\vspace{-1ex}
				\caption{Batch Pruning Improvement on Sample Queries.}
				\label{fig:PruningTimeBAQP}
\vspace{-3ex}
\end{figure*}

Fig. \ref{fig:TotalPruningTimeBAQP} shows the average processing time of the test queries for BA-QP on 2 cases: (a) when no batch pruning is implemented, (b) when batch pruning is implemented. We observe from the picture that the batch pruning improved the processing time on both datasets, however, the improvement on DBLP is more considerable. The improvement is achieved because we reach to the global $\sigma^{min}$ earlier by applying batch pruning, therefore it expedites the performance.                        

\begin{figure}
\centering        
				
        \includegraphics[width=2.3in,height=1.4in]{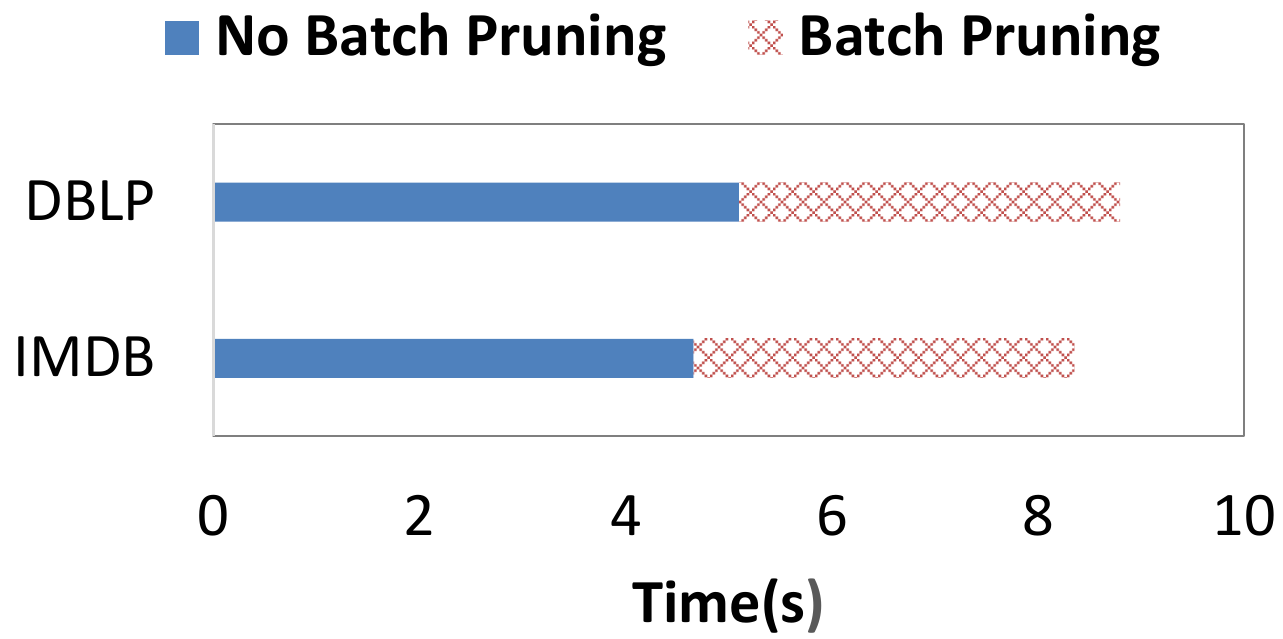}
				\vspace{-1ex}
     		\caption{Average Batch Pruning Improvement.}
\label{fig:TotalPruningTimeBAQP}
\vspace{-3ex}
\end{figure} 

\section{Related Work}
\textbf{Failed Queries.}
When a user queries a data source, the result may be empty or otherwise below expectation. This problem known as failed queries has inspired a broad range of research in the database community (e.g. \cite{ChuYCMCL1996},\cite{HuhMA10},\cite{ZhouGBN07}). In the context of relational databases, the problem has been studied by Nambiar and Kambhampati \cite{NAMBIARK06}, Muslea \cite{MUSLEA04} as well as Muslea and Lee \cite{MUSLEAL05}. Nambiar and Kambhampati \cite{NAMBIARK06} presented approximate functional dependencies to relax the user original query and find tuples similar to the user query. Muslea \cite{MUSLEA04}, as well as Muslea and Lee \cite{MUSLEAL05} used machine learning techniques to infer rules for generating replacement queries. Amer-Yahia, Cho and Srivastava \cite{AMERCS02}, Brodianskiy and Cohen \cite{BRODIANSKIYC07} as well as Cohen and Brodianskiy \cite{COHENB09} studied query relaxation for XML data. The studies proposed to discover the constraints in the queries that prevent results from being generated and remove them so that a result can be produced. All these investigations focus on the modification of the user original query constraints on the content level rather than the semantic analysis of the original query constraints. Hill et al. \cite{HillTGC10}, used the ontology information to relax structured XML queries. Farfan et al. \cite{FarfanHRW09}, proposed XOntoRank system to address the ontology-aware XML keyword search of electronic medical records. Unlike their work which uses SNOMED ontology for enhancing the search on medical records, we address the general no-match problem on XML data and use a general ontological knowledge base like a thesaurus or dictionary to solve the problem for general documents.          
 
\textbf{Query Expansion.}
Query expansion has widely studied in many works such as \cite{XuC1996},\cite{SchenkelTW05},\cite{KimK05},\cite{KimKJ06}. Schenkel, Theobald, and Weikum \cite{SchenkelTW05} proposed XXL which combines the keyword search with structural conditions and semantic similarity to increase the quality of results. Kim and Kong \cite{KimK05} suggested a query expansion technique that uses an ontology algorithm to map a target DTD to ontology. This scheme is successful for expanding the queries minimally. Kim, Kong, and Jeon \cite{KimKJ06} developed a web XML document search engine that applies ontology-DTD match algorithm for remote documents. However, in all of the above works, the focus is on structured queries. In our work, we find some semantic counterparts for specific non-mapped keywords for replacement, therefore, query expansion is not useful in our case. 
        
\textbf{Recommendation Systems.}
Users are often interested in items similar to those they have visited before or to content that has been looked up by similar users. These items are presented by the recommendation systems. Akbarnejad et al. \cite{AkbarnejadCEKMOPV10} and Chatzopoulou, Eirinaki and \cite{ChatzopoulouEP09} proposed query recommendation based on a prediction of the items that user is interested in. Yao et al. \cite{YaoCHH12} proposed to exploit structural semantics for query reformulation. Meng, Cao and Shao \cite{MengCS14} used the semantic relationships between keywords and keyword queries to suggest a set of keyword queries from the query log. However, the semantic relationship is interpreted as the co-occurrence of the keywords and no ontological analysis is carried out. Moreover, the work focuses on extracting similar queries from the query log using data mining techniques without processing the results. Drosou and Pitoura \cite{DrosouP13} presented a database exploration framework which recommends additional items called ``You May Also Like" results. However, the recommended results are compiled based on the results of the original query and there is no focus on semantic connection between the original query and recommended results.

\textbf{Mismatch Problem.} 
Sometimes the system shows erroneous mismatch results for a user query which is called mismatch problem. Bao et al. \cite{BaoZLZLJ15} proposed a framework to detect the keyword queries that lead to a list of irrelevant results on XML data. They detect a mismatch problem by analyzing the results of a user query and inferring the user's intended node type result based on data structure. Based on this, they are able to suggest queries with relevant results to the user. Unlike the current study, Bao et al. investigate ways of producing relevant results instead of finding results for no-match queries.        

\textbf{Query Cleaning.}
Sometimes the empty result is caused by typographical errors. Pu and Yu \cite{PuY08} and Lu et al. \cite{LuWLL11} investigated a way of suggesting queries that have been cleaned of typing errors. Unlike our study, these authors do not tackle the problem of non-mapped keywords.

\textbf{Ontology-based Querying.}
Many studies have used ontology information for searching the semantic web \cite{CorbyDFG06}, \cite{LittleSL08}, \cite{MeiMP06}. Studies by Aleman-Meza \cite{Aleman-MezaHSSA05},  Cakmak and {\"{O}zsoyoglu} \cite{CakmakO08} as well as Wu, Yang and Yan \cite{WUYY13} used ontology information to find frequent patterns in graphs. Wu, Yang and Yan \cite{WUYY13} proposed an improved subgraph querying technique by ontology information. They revised subgraph isomorphism by mapping a query to semantically related subgraphs in terms of a given ontology graph. Our work generates substitute queries for the user given keyword query by extracting the semantically related keywords from the ontological knowledge base and thereafter, produce semantically related results to the user query instead of returning an empty result set to the user. 

\section{Conclusion}
This paper investigates ways of efficiently building substitute queries against XML data sources when the user given keyword query fails to produce any result as one or more of its keywords do not exist in the data source. Our approach depends on an ontological knowledge base for a discovery of semantically related keywords to generate the substitute queries, which can be executed against the data source to produce the semantically related results for the user's original query. As the number of substitute queries can be potentially large and also, not all semantically related results are meaningful to the same degree, we propose efficient pruning techniques to reduce the number of substitute queries and return only the top-$k$ semantically related results. We develop two query processing algorithms to evaluate the substitute queries against the data source based on our pruning techniques. We also develop a batch processing technique that exploits the shared keywords among the substitute queries to expedite the performance further. The extensive experiments with two real datasets validate the effectiveness and efficiency of our approach. 

\section{Acknowledgment} 
This work is supported by the Australian Research Council discovery grants DP140103499 and DP160102412.

\end{document}